\title{Inapproximability of Maximum Biclique Problems, \\ Minimum $k$-Cut and Densest At-Least-$k$-Subgraph from \\ the Small Set Expansion Hypothesis\footnote{An extended abstract of this work will appear at ICALP 2017 under a different title~\cite{Man17-ICALP}.}}
\author{Pasin Manurangsi\thanks{Email: \texttt{pasin@berkeley.edu}. This material is based upon work supported by the National Science Foundation under Grants No. CCF 1540685 and CCF 1655215} \vspace{-0.5em}\\
UC Berkeley}
\begin{document}

\maketitle
\thispagestyle{empty}

\begin{abstract}
The \emph{Small Set Expansion Hypothesis} is a conjecture which roughly states that it is NP-hard to distinguish between a graph with a small subset of vertices whose (edge) expansion is almost zero and one in which all small subsets of vertices have expansion almost one. In this work, we prove conditional inapproximability results for the following graph problems based on this hypothesis:
\begin{itemize}
\item \emph{Maximum Edge Biclique}: given a bipartite graph $G$, find a complete bipartite subgraph of $G$ that contains maximum number of edges. We show that, assuming the Small Set Expansion Hypothesis and that NP $\nsubseteq$ BPP, no polynomial time algorithm approximates Maximum Edge Biclique to within a factor of $n^{1 - \varepsilon}$ of the optimum for every constant $\varepsilon > 0$.
\item \emph{Maximum Balanced Biclique}: given a bipartite graph $G$, find a balanced complete bipartite subgraph of $G$ that contains maximum number of vertices. Similar to Maximum Edge Biclique, we prove that this problem is inapproximable in polynomial time to within a factor of $n^{1 - \varepsilon}$ of the optimum for every constant $\varepsilon > 0$, assuming the Small Set Expansion Hypothesis and that NP $\nsubseteq$ BPP.
\item \emph{Minimum $k$-Cut}: given a weighted graph $G$, find a set of edges with minimum total weight whose removal partitions the graph into (at least) $k$ connected components. For this problem, we prove that it is NP-hard to approximate it to within $(2 - \varepsilon)$ factor of the optimum for every constant $\varepsilon > 0$, assuming the Small Set Expansion Hypothesis.
\item \emph{Densest At-Least-$k$-Subgraph}: given a weighted graph $G$, find a set $S$ of at least $k$ vertices such that the induced subgraph on $S$ has maximum \emph{density}, which is defined as the ratio between the total weight of the edges and the number of the vertices. We show that this problem is NP-hard to approximate to within $(2 - \varepsilon)$ factor of the optimum for every constant $\varepsilon > 0$, assuming the Small Set Expansion Hypothesis.
\end{itemize}
The ratios in our inapproximability results are essentially tight since trivial algorithms give $n$-approximation to both Maximum Edge Biclique and Maximum Balanced Biclique and polynomial time $2$-approximation algorithms are known for Minimum $k$-Cut~\cite{SV95} and Densest At-Least-$k$-Subgraph~\cite{And07,KS09}.

Our first two results are proved by combining a technique developed by Raghavendra, Steurer and Tulsiani~\cite{RST12} to avoid locality of gadget reductions with a generalization of Bansal and Khot's long code test~\cite{BK09} whereas our last two results are shown via elementary reductions.
\end{abstract}

\newpage
\setcounter{page}{1}

\section{Introduction}

Since the PCP theorem was proved two decades ago~\cite{ALMSS98,AS98}, our understanding of approximability of combinatorial optimization problems has grown enormously; tight inapproximability results have been obtained for fundamental problems such as Max-3SAT~\cite{Has01}, Max Clique~\cite{Has96} and Set Cover~\cite{Mos15,DS14}. Yet, for other problems, including Vertex Cover and Max Cut, known NP-hardness of approximation results come short of matching best known algorithms.

Khot's introduction of the Unique Games Conjecture (UGC)~\cite{Kho02} propelled another wave of development in hardness of approximation that saw many of these open problems resolved (see e.g.~\cite{KR08,KKMO07}). Alas, some problems continue to elude even attempts at proving UGC-hardness of approximation. For a class of such problems, the failure stems from the fact that typical reductions are local in nature; many reductions from unique games to graph problems could produce disconnected graphs. If we try to use such reductions for problems that involve some forms of expansion of graphs (e.g. Sparsest Cut), we are out of luck.

One approach to overcome the aforementioned issue is through the \emph{Small Set Expansion Hypothesis} (SSEH) of Raghavendra and Steurer~\cite{RS10}. To describe the hypothesis, let us introduce some notations. Throughout the paper, we represent an undirected edge-weighted graph $G = (V, E, w)$ by a vertex set $V$, an edge set $E$ and a weight function $w: E \to \R_{\geqs 0}$. We call $G$ \emph{$d$-regular} if $\sum_{v: (u, v) \in E} w(u, v) = d$ for every $u \in V$. For a $d$-regular weighted graph $G$, the edge expansion $\Phi(S)$ of $S \subseteq V$ is defined as $$\Phi(S) = \frac{E(S, V \setminus S)}{d\min\{|S|, |V \setminus S|\}}$$ where $E(S, V \setminus S)$ is the total weight of edges across the cut $(S, V \setminus S)$. The small set expansion problem SSE$(\delta, \eta)$, where $\eta, \delta$ are two parameters that lie in (0, 1), can be defined as follows.

\begin{definition}[SSE($\delta, \eta$)]
Given a regular edge-weighted graph $G = (V, E, w)$, distinguish between:
\begin{itemize}
\item \emph{(Completeness)} There exists $S \subseteq V$ of size $\delta |V|$ such that $\Phi(S) \leqs \eta$.
\item \emph{(Soundness)} For every $S \subseteq V$ of size $\delta |V|$, $\Phi(S) \geqs 1 - \eta$.
\end{itemize}
\end{definition}

Roughly speaking, SSEH asserts that it is NP-hard to distinguish between a graph that has a small non-expanding subset of vertices and one in which all small subsets of vertices have almost perfect edge expansion. More formally, the hypothesis can be stated as follows.

\begin{conjecture}[SSEH~\cite{RS10}] \label{conj:sse}
For every $\eta > 0$, there is $\delta = \delta(\eta) > 0$ such that SSE($\delta, \eta$) is NP-hard.
\end{conjecture}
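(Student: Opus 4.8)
This final statement is a \emph{conjecture} rather than a theorem, so there is no proof to recover; what follows is a sketch of how one would \emph{attempt} to establish it, together with the reason every known route stalls. The plan would follow the template of Unique-Games hardness: starting from a hard Label Cover (two-prover one-round PCP) instance with perfect completeness and soundness $2^{-\Theta(\sqrt{\log n})}$, one would compose it with a long-code-style inner verifier so that the ``acceptance graph'' on the long-code tables is, in the soundness case, a small-set expander. The natural candidate for that graph is a noise graph on the hypercube $\{0,1\}^R$ (or a derandomized short-code variant), since noise graphs on $\{0,1\}^R$ are known to be small-set expanders: their low-eigenvalue subspaces are spanned by low-degree characters, so every subset of small measure has expansion close to $1$.

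First I would fix the parameters, and here the quantifier order in the conjecture is essential: $\delta$ may depend on $\eta$ but must be an absolute constant independent of the instance size, so one cannot let the ``small set'' shrink with $n$ the way the mixing bound for an expander would naturally allow. Second I would carry out the gadget/graph-product construction while controlling \emph{locality}: a naive gadget reduction to an expansion-type problem typically yields a disconnected graph, which trivially has non-expanding sets and destroys the soundness guarantee --- exactly the obstruction flagged in the introduction, and the reason the Raghavendra--Steurer--Tulsiani technique is imported later in the paper. Third I would prove soundness by Fourier analysis together with an invariance-principle/decoding argument: any measure-$\delta$ set that fails to expand in the composed graph should be noticeably influenced by a bounded number of coordinates, and those coordinates would decode to a good labeling of the Label Cover instance, contradicting its soundness.

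The hard part --- and it is hard in a rather fundamental way --- is the soundness analysis. We have no PCP construction whose soundness-case acceptance graph is provably a small-set expander with the size parameter $\delta$ bounded away from $0$ uniformly in $n$: all the standard primitives (dictatorship tests, parallel repetition, the Hadamard/long code) produce graphs that in the soundness case still contain \emph{some} small non-expanding set --- e.g.\ a set of vanishing measure localized on a few coordinates --- which is precisely what the soundness condition of SSE($\delta,\eta$) forbids. Closing this gap appears to require a genuinely new ``small-set-expander PCP'' primitive; moreover, Raghavendra and Steurer showed that SSEH implies the Unique Games Conjecture, so proving it would in particular resolve UGC, which remains open. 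For these reasons I would not expect this plan, or any presently known plan, to succeed; the role of the statement in this paper is as a \emph{hypothesis} from which the four inapproximability results are then derived.
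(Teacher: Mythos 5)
You correctly identified that this statement is a \emph{conjecture}, not a theorem: the paper does not prove SSEH and cannot, since it is an open hypothesis used as the starting assumption for all four inapproximability results (and would, as you note, imply UGC). There is therefore no proof in the paper to compare against, and your response --- declining to offer a proof while sketching why the obvious PCP-composition routes stall on soundness --- is the right call and matches the paper's treatment of the statement as an unproven hypothesis cited from Raghavendra and Steurer.
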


Interestingly, SSEH not only implies UGC~\cite{RS10}, but it is also equivalent to a strengthened version of the latter, in which the graph is required to have almost perfect small set expansion~\cite{RST12}.

Since its proposal, SSEH has been used as a starting point for inapproximability of many problems whose hardnesses are not known otherwise. Most relevant to us is the work of Raghavendra, Steurer and Tulsiani (henceforth RST)~\cite{RST12} who devised a technique that exploited structures of SSE instances to avoid locality in reductions. In doing so, they obtained inapproximability of Min Bisection, Balanced Separator, and Minimum Linear Arrangement, which are not known to be hard to approximate under UGC.

\subsection{Maximum Edge Biclique and Maximum Balanced Biclique}

Our first result is adapting RST technique to prove inapproximability of Maximum Edge Biclique (MEB) and Maximum Balanced Biclique (MBB). For both problems, the input is a bipartite graph. The goal for the former is to find a complete bipartite subgraph that contains as many edges as possible whereas, for the latter, the goal is to find a balanced complete bipartite subgraph that contains as many vertices as possible.

Both problems are NP-hard. MBB was stated (without proof) to be NP-hard in~\cite[page 196]{GJ79}; several proofs of this exist such as one provided in~\cite{Joh87}. For MEB, it was proved to be NP-hard more recently by Peeters~\cite{Pee03}. Unfortunately, much less is known when it comes to approximability of both problems. Similar to Maximum Clique, folklore algorithms give $O(n/\polylog n)$ approximation ratio for both MBB and MEB, and no better algorithm is known. However, not even NP-hardness of approximation of some constant ratio is known for the problems. This is in stark contrast to Maximum Clique for which strong inapproximability results are known~\cite{Has96,Kho01,KP06,Zuc07}. Fortunately, the situation is not completely hopeless as the problems are known to be hard to approximate under stronger complexity assumptions.

Feige~\cite{Fei02} showed that, assuming that random 3SAT formulae cannot be refuted in polynomial time, both problems\footnote{While Feige only stated this for MBB, the reduction clearly works for MEB too.} cannot be approximated to within $n^\varepsilon$ of the optimum in polynomial time for some $\varepsilon > 0$. Later, Feige and Kogan~\cite{FK04} proved $2^{(\log n)^\varepsilon}$ ratio inapproximability for both problems for some $\varepsilon > 0$, assuming that 3SAT $\notin$ DTIME($2^{n^{3/4 + \delta}}$) for some $\delta > 0$. Moreover, Khot~\cite{Kho06} showed, assuming 3SAT $\notin$ BPTIME($2^{n^\delta}$) for some $\delta > 0$, that no polynomial time algorithm achieves $n^\varepsilon$-approximation for MBB for some $\varepsilon > 0$. Amb\"{u}hl \etal~\cite{AMS11} subsequently built on Khot's result and showed a similar hardness for MEB. Recently, Bhangale \etal~\cite{BGHKK16} proved that both problems are hard to approximate to within\footnote{In~\cite{BGHKK16}, the inapproximability ratio is only claimed to be $n^{\varepsilon}$ for \emph{some} $\varepsilon > 0$. However, it is not hard to see that their result in fact implies $n^{1 - \varepsilon}$ factor hardness of approximation as well.} $n^{1 - \varepsilon}$ factor for every $\varepsilon > 0$, assuming a certain strengthened version of UGC and NP $\ne$ BPP. In addition, while not stated explicitly, the author's recent reduction for Densest $k$-Subgraph~\cite{Man17} yields $n^{1/\polyloglog n}$ ratio inapproximability for both problems under the Exponential Time Hypothesis~\cite{IPZ01} (3SAT $\notin$ DTIME($2^{o(n)}$)) and this ratio can be improved to $n^{f(n)}$ for any $f \in o(1)$ under the stronger Gap Exponential Time Hypothesis~\cite{Din16,MR16} (no $2^{o(n)}$ time algorithm can distinguish a fully satisfiable 3SAT formula from one which is only $(1 - \varepsilon)$-satisfiable for some $\varepsilon > 0$); these ratios are better than those in~\cite{FK04} but worse than those in~\cite{Kho06,AMS11,BGHKK16}.

Finally, it is worth noting that, assuming the Planted Clique Hypothesis~\cite{Jer92,Kuc95} (no polynomial time algorithm can distinguish between a random graph $\cG(n, 1/2)$ and one with a planted clique of size $\Omega(\sqrt{n})$), it follows (by partitioning the vertex set into two equal sets and delete all the edges within each partition) that Maximum Balanced Biclique cannot be approximated to within $\tO(\sqrt{n})$ ratio in polynomial time. Interestingly, this does not give any hardness for Maximum Edge Biclique, since the planted clique has only $O(n)$ edges, which less than that in a trivial biclique consisting of any vertex and all of its neighbors.

In this work, we prove strong inapproximability results for both problems, assuming SSEH:

\begin{theorem} \label{thm:meb-mbb}
Assuming SSEH, there is no polynomial time algorithm that approximates MEB or MBB to within $n^{1 - \varepsilon}$ factor of the optimum for every $\varepsilon > 0$, unless NP $\subseteq$ BPP.
\end{theorem}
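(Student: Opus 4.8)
The plan is to adapt the Raghavendra--Steurer--Tulsiani framework for turning SSE instances into hardness of non-local graph problems, combining it with a suitable long-code / dictatorship test tailored to bicliques. First I would recall the equivalent formulation of SSEH from \cite{RST12}: it is NP-hard to distinguish between Unique Games instances on graphs with almost-perfect small-set expansion that are nearly satisfiable, and those in which no assignment satisfies more than a tiny fraction of constraints (and moreover every small vertex set expands). The key structural gain from the SSE side, as in RST, is that in the soundness case one can rule out solutions that are ``locally concentrated'' on a few vertices of the constraint graph, which is exactly what defeats gadget reductions for biclique-type problems (a trivial biclique lives in the neighbourhood of one vertex).

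Next I would set up the reduction: replace each vertex $u$ of the Unique Games instance by a block of $2^R$ (bipartite) long-code vertices, and encode constraints via a biclique-flavoured long-code test. Here I expect to need a generalization of Bansal and Khot's test from \cite{BK09}, which was designed precisely to certify that a large biclique (equivalently, a large ``independent-set-like'' structure on both sides) forces the underlying functions to be juntas/dictators. The completeness argument should be routine: a good UG assignment yields a large complete bipartite subgraph of the claimed size (roughly, pick the dictator coordinates on each side consistent with the labeling across a satisfied constraint, across many vertices). For soundness I would argue the contrapositive: a biclique that is too large induces, on a noticeable fraction of UG vertices, functions with a non-negligible low-degree influential coordinate; by the SSE-expansion property these ``influential coordinate'' assignments cannot be spread out in a way that avoids the constraint structure, so decoding them yields a UG assignment satisfying more constraints than soundness allows. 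Finally I would do the standard amplification (tensoring / parallel repetition of the UG instance, or taking graph powers) to push the inapproximability ratio from some fixed constant up to $n^{1-\varepsilon}$ for every $\varepsilon>0$, and observe that the same instance handles both the edge version (MEB) and the balanced-vertex version (MBB) since the test produces balanced bicliques in the completeness case.

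The main obstacle, I expect, is the soundness analysis of the long-code test under the RST ``smoothed'' SSE reduction. In the usual local gadget reductions one can appeal to a standard invariance/influence-decoding argument on a single constraint; here one must instead run a global argument that simultaneously uses (i) the Bansal--Khot-type Fourier-analytic consequence of a large biclique and (ii) the small-set-expansion guarantee of the SSE graph to prevent the decoded labels from clustering. Getting the quantitative trade-offs to line up — so that ``biclique bigger than the completeness bound'' really does force a contradiction with soundness, with losses small enough to survive the amplification to $n^{1-\varepsilon}$ — is the technically delicate step, and it is where a careful generalization of \cite{BK09} (rather than a black-box invocation) is needed. A secondary point to be handled carefully is the BPP (rather than P) conclusion: the reduction may need randomness (e.g.\ in sampling the long-code test or in the noise), which is why the theorem is stated modulo NP $\subseteq$ BPP.
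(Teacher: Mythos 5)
Your high-level blueprint --- SSEH via RST, a Bansal--Khot-style long-code test, completeness from dictators, soundness via influence decoding plus SSE, then amplification to $n^{1-\varepsilon}$ --- matches the paper's architecture. Two pieces, however, are missing or wrong in ways that matter.

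First, you describe the \emph{goal} of the RST technique ("rule out locally concentrated solutions") but not its \emph{mechanism}, and the mechanism is the whole point. The candidate counterexample you allude to (a trivial biclique living in one vertex's neighbourhood) corresponds, in the candidate gadget reduction, to a bisection of $V_H$ along a cut of the constraint graph, and no amount of ordinary Fourier analysis kills it. What kills it is the explicit \emph{vertex-merging} step built into the hyperedges: a vertex $(A,x)$ is identified with every $\pi(A',x)$ with $\pi \in \Pi_{R,k}$ and $A' \in M_x(A)$ (agreeing with $A$ wherever $x_i \ne \bot$). This forces any set $T$ achieving the bound to be $\Pi_{R,k}$-invariant, and that invariance is exactly what makes RST's Lemma 6.6 apply: $\E_A (\E_x f_A(x) - \mu_H(T))^2 \leqs \beta$. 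Only after this variance bound do you know that $\E_x f_A \leqs 0.99$ for enough of the sampled $\tB^i$'s, which is the hypothesis needed to invoke the "It Ain't Over Till It's Over" theorem. Also note the test alphabet is $\{0,1,\bot\}_\beta$ rather than $\{0,1\}$; the $\bot$ symbol is what makes the merging compatible with the intended assignment. Without these devices the soundness argument simply does not start, so this is a genuine gap rather than an omitted detail.

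Second, the amplification cannot be "parallel repetition of the UG instance." Parallel repetition (or tensoring the UG) would only sharpen the starting UG/SSE gap; it does not turn a constant-factor biclique gap into an $n^{1-\varepsilon}$ one. The paper instead proves a constant-to-constant statement (a bipartite graph either contains $K_{(1/2-\delta)n,(1/2-\delta)n}$ or has no $K_{\delta n,\delta n}$, Lemma~\ref{lem:meb-mbb-withoutamp}, itself derived cleanly from a Max UnCut Hypergraph Bisection statement, Lemma~\ref{lem:hypergraph-bisec}) and then applies a \emph{randomized graph product} on the bipartite graph (choosing $N$ random $k$-tuples per side, à la Berman--Schnitger/Blum, as in Khot's biclique amplification). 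That random sampling is the \emph{only} randomized step in the whole reduction and is the sole reason the conclusion is modulo NP $\subseteq$ BPP; the long-code test and the noise are part of the fixed weighted gadget, not a source of randomness. Your attribution of the BPP caveat to "sampling the long-code test or the noise" is therefore off, and if you were to carry out your plan as written you would not reach $n^{1-\varepsilon}$.
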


We note that the only part of the reduction that is randomized is the gap amplification via randomized graph product~\cite{BS92,Blu91}. If one is willing to assume only that NP $\ne$ P (and SSEH), our reduction still implies that both are hard to approximate to within any constant factor.

Only Bhangale \etal's result~\cite{BGHKK16} and our result achieve the inapproximability ratio of $n^{1 - \varepsilon}$ for every $\varepsilon > 0$; all other results achieve at most $n^\varepsilon$ ratio for some $\varepsilon > 0$. Moreover, only Bhangale \etal's reduction and ours are candidate NP-hardness reductions, whereas each of the other reductions either uses superpolynomial time~\cite{FK04,Kho06,AMS11,Man17} or relies on an average-case assumption~\cite{Fei02}. It is also worth noting here that, while both Bhangale \etal's result and our result are based on assumptions which can be viewed as stronger variants of UGC, the two assumptions are incomparable and, to the best of our knowledge, Bhangale \etal's technique does not apply to SSEH. A discussion on the similarities and differences between the two assumptions can be found in Appendix~\ref{app:ugc}.

Along the way, we prove inapproximability of the following hypergraph bisection problem, which may be of independent interest: given a hypergraph $H = (V_H, E_H)$ find a bisection\footnote{$(T_0, T_1)$ is a bisection of $V_H$ if $|T_0| = |T_1| = |V_H|/2, T_0 \cap T_1 = \emptyset$ and $V_H = T_0 \cup T_1$.} $(T_0, T_1)$ of $V_H$ such that the number of uncut hyperedges is maximized. We refer to this problem as \emph{Max UnCut Hypergraph Bisection} (MUCHB). Roughly speaking, we show that, assuming SSEH, it is hard to distinguish a hypergraph whose optimal bisection cuts only $\varepsilon$ fraction of hyperedges from one in which every bisection cuts all but $\varepsilon$ fraction of hyperedges:

\begin{lemma} \label{lem:hypergraph-bisec}
Assuming SSEH, for every $\varepsilon > 0$, it is NP-hard to, given a hypergraph $H = (V_H, E_H)$, distinguish between the following two cases:
\begin{itemize}
\item \emph{(Completeness)} There is a bisection $(T_0, T_1)$ of $V_H$ s.t. $|E_H(T_0)|, |E_H(T_1)| \geqs (1/2 - \varepsilon)|E_H|$.
\item \emph{(Soundness)} For every set $T \subseteq V_H$ of size at most $|V_H|/2$, $|E_H(T)| \leqs \varepsilon|E_H|$.
\end{itemize}
Here $E_H(T) \triangleq \{e \in E_H \mid e \subseteq T\}$ denotes the set of hyperedges that lie completely inside of the set $T \subseteq V_H$.
\end{lemma}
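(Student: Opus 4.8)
The plan is to prove Lemma~\ref{lem:hypergraph-bisec} by a reduction from the strengthened form of Unique Games that follows from SSEH --- namely Small Set Expansion with the additional structural guarantee on small sets (equivalently, Unique Games on instances whose constraint graph has almost perfect small set expansion, as established by RST~\cite{RST12}). The reduction itself will be a long code / dictatorship test turned into a hypergraph: for each vertex $u$ of the Unique Games instance we place a block of $2^R$ vertices indexed by $\{0,1\}^R$ (where $R$ is the label set), and for each constraint (edge) of the UG instance, together with a noise/distribution pattern, we put down a hyperedge on a tuple of points drawn from the two relevant blocks after applying the permutations dictated by the labels. The key design requirement is that the hyperedge test be a test for \emph{balanced dictators}: in the completeness case, the intended solution is to put, in each block, the ``$x_i = 0$'' half into $T_0$ and the ``$x_i = 1$'' half into $T_1$ for the good label $i$ of $u$; this is automatically a bisection of $V_H$ (each block is split exactly in half regardless of which coordinate is the dictator), and the fraction of hyperedges landing entirely inside $T_0$, resp. $T_1$, should be close to $1/2$ each. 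In the soundness case, we must argue that \emph{any} $T$ with $|T| \le |V_H|/2$ captures only an $\varepsilon$ fraction of hyperedges, which forces the relevant indicator functions to have no influential coordinates, and then an invariance-principle / Gaussian-rounding argument combined with the expansion property of the UG instance yields a good labeling, contradicting soundness of SSE.

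Concretely, the steps I would carry out are: (1) State precisely the RST consequence of SSEH --- a hardness of a Unique-Games-like problem on graphs with near-perfect small-set expansion --- in the exact form I need (this is quoted from the excerpt's references, so I may assume it). (2) Describe the hypergraph construction: vertex set $V_H = V_{UG} \times \{0,1\}^R$, and a distribution over hyperedges obtained by sampling a random UG edge $(u,v)$ with permutations $\pi_u,\pi_v$, sampling a correlated tuple $(x^{(1)},\dots,x^{(k)})$ from $\{0,1\}^R$ under an appropriate noise distribution $\mathcal{T}$, and adding the hyperedge $\{(u, \pi_u \circ x^{(j)})\}_j$ or a mix across $u$ and $v$ --- the precise distribution is chosen so that (a) it is symmetric enough that the ``dictator split'' captures $\approx 1/2$ of hyperedges on each side, and (b) the Fourier-analytic soundness works. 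This is essentially the Bansal--Khot long code test~\cite{BK09} adapted to bisection, as the excerpt advertises. (3) Completeness: given a labeling satisfying $1-o(1)$ fraction of UG constraints, take the dictator bisection; each block contributes exactly half its vertices to $T_0$, so $(T_0,T_1)$ is a genuine bisection, and a direct computation shows $|E_H(T_0)|, |E_H(T_1)| \ge (1/2 - \varepsilon)|E_H|$ up to the UG error. (4) Soundness (the crux): suppose $T$ with $|T| \le |V_H|/2$ has $|E_H(T)| > \varepsilon |E_H|$; let $f_u : \{0,1\}^R \to \{0,1\}$ be the restriction of $\mathbf{1}_T$ to block $u$, so $\mathbb{E}_u[\mathbb{E}[f_u]] \le 1/2$ while a noticeable fraction of the hyperedge-tests accept. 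Use the hyperedge acceptance probability formula in Fourier coefficients, invoke the invariance principle / Gaussian analysis of the Bansal--Khot test to conclude that for many edges $(u,v)$ the functions $f_u, f_v$ share a common low-degree influential coordinate up to the permutation, then decode a labeling; the small-set-expansion property of the UG instance is what lets us convert ``many edges are locally consistent'' into ``a global labeling satisfies many constraints,'' exactly the RST trick to beat locality.

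I expect the main obstacle to be step (4), and within it two specific points. First, designing the hyperedge distribution $\mathcal{T}$ so that the completeness value is genuinely $\ge 1/2 - \varepsilon$ on \emph{each} side simultaneously while the soundness test still has the dictatorship-vs-far-from-dictatorship gap: the Bansal--Khot test is built for a one-sided ``uncut'' objective, and one has to be careful that splitting a long code in half along the dictator coordinate really does give balanced uncut fractions, and that the noise does not asymmetrically favor one side. Second, the soundness analysis must handle the $|T| \le |V_H|/2$ global size constraint, not just the per-block behavior --- one has to rule out ``spread out'' sets $T$ that put a little mass in every block, which is precisely where the almost-perfect small-set expansion of the underlying SSE/UG instance (rather than plain Unique Games) is essential, and threading that through the invariance-principle bound is the delicate part. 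A secondary, more technical nuisance will be carrying the UG error $o(1)$ and the invariance-principle error terms through so that both completeness and soundness parameters collapse to the single $\varepsilon$ in the statement; I would absorb these by choosing the UG error, the noise rate, and the arity $k$ of the hyperedges as suitable functions of $\varepsilon$ at the end.
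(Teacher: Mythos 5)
Your high-level architecture (UG-style long code test in the spirit of Bansal--Khot, composed with the hardness of UG derived from SSE via RST, analyzed using the ``It Ain't Over Till It's Over'' theorem) is the right family of ideas, and you correctly identify that the dangerous case in soundness is sets $T$ whose restrictions to individual blocks have unbalanced means. However, the concrete construction you propose --- vertex set $V_H = \cV_{\mathrm{UG}} \times \{0,1\}^R$ with a black-box UG instance whose constraint graph has small-set expansion --- is exactly the naive reduction that the paper explicitly explains does not work (Section~\ref{subsec:bk}), and your proposed fix does not repair it. The counterexample is: take a constraint graph that is a disjoint union of two equal-size small-set expanders $\cV_0, \cV_1$. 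Then $T = \cV_0 \times \{0,1\}^R$ is a bisection of $V_H$ that contains essentially half the hyperedges (and in particular far more than $\varepsilon |E_H|$) independent of the UG value, and each per-block function $f_u$ is a constant, so no influence-based decoding is possible. Small-set expansion of the constraint graph is a hypothesis about sets of size $\delta|\cV|$; it says nothing about edge expansion at scale $|\cV|/2$, so it cannot rule out such $T$. ``Threading the SSE property through the invariance principle'' is therefore not the missing step --- the problem arises before any Fourier analysis begins.

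What actually closes the gap, and what your proposal is missing, is the RST device of making the composed reduction \emph{non-local} by enlarging the long code alphabet to $\Omega = \{0,1,\bot\}_\beta$ and ``merging'' long code vertices: a hypergraph vertex $(A,x) \in V^R \times \Omega^R$ is identified with $\pi(A',x)$ for every $\pi \in \Pi_{R,k}$ and every $A'$ that agrees with $A$ on all coordinates where $x_i \neq \bot$. Crucially this uses that the UG vertices coming out of the SSE-to-UG reduction have internal structure ($\cV = V^R$); one cannot treat the UG instance as a black box. The payoff of the merging is Lemma~\ref{lem:mean-conc} (RST's Lemma~6.6): for any merged/invariant $T$, the per-block means $\mu_A$ concentrate, $\E_A (\mu_A - \mu_H(T))^2 \le \beta$. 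This is exactly the quantitative control that rules out the disconnected-components counterexample and lets one invoke Theorem~\ref{thm:aint-over} on a constant fraction of tuples, and it holds as a structural fact about the construction, independent of the expansion of $G$. Without this step --- or some replacement for it --- your soundness argument would stall at the constant-$f_u$ case. Everything downstream of that (influence decoding, converting a low-value label cover solution into a non-expanding small set via RST's Lemma~6.11) matches the paper and you have the right picture there.
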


Our result above is similar to Khot's quasi-random PCP~\cite{Kho06}. Specifically, Khot's quasi-random PCP can be viewed as a hardness for MUCHB in the setting where the hypergraph is $d$-uniform; 
roughly speaking, Khot's result states that it is hard (if 3SAT $\notin$ $\bigcap_{\delta > 0}$ BPTIME($2^{n^\delta}$)) to distinguish between a $d$-uniform hypergraph where $1/2^{d-2}$ fraction of hyperedges are uncut in the optimal bisection from one where roughly $1/2^{d-1}$ fraction of hyperedges are uncut in any bisection. Note that the latter is the fraction of uncut hyperedges in random hypergraphs and hence the name ``quasi-random''. 
In this sense, Khot's result provides better soundness at the expense of worse completeness compared to Theorem~\ref{lem:hypergraph-bisec}.

\subsection{Minimum $k$-Cut}

In addition to the above biclique problems, we prove an inapproximability result for the Minimum $k$-Cut problem, in which a weighted graph is given and the goal is to find a set of edges with minimum total weight whose removal paritions the graph into (at least) $k$ connected components. The Minimum $k$-Cut problem has long been studied. 
When $k = 2$, the problem can be solved in polynomial time simply by solving Minimum $s-t$ cut for every possible pairs of $s$ and $t$. In fact, for any fixed $k$, the problem was proved to be in P by Goldschmidt and Hochbaum~\cite{GH94}, who also showed that, when $k$ is part of the input, the problem is NP-hard. To circumvent this, Saran and Vazirani~\cite{SV95} devised two simple polynomial time $(2 - 2/k)$-approximation algorithms for the problem. In the ensuing years, different approximation algorithms~\cite{NR01,ZNI01,RS02,XCY11} have been proposed for the problem, none of which are able achieve an approximation ratio of $(2 - \varepsilon)$ for some $\varepsilon > 0$. In fact, Saran and Vazirani themselves conjectured that $(2 - \varepsilon)$-approximation is intractible for the problem~\cite{SV95}. In this work, we show that their conjecture is indeed true, if the SSEH holds:

\begin{theorem} \label{thm:k-cut}
Assuming SSEH, it is NP-hard to approximate Minimum $k$-Cut to within $(2 - \varepsilon)$ factor of the optimum for every constant $\varepsilon > 0$.
\end{theorem}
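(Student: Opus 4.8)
The plan is a direct reduction from $\mathrm{SSE}(\delta,\eta)$, which under SSEH is NP-hard for every $\eta>0$ and a corresponding constant $\delta=\delta(\eta)>0$: given the ($d$-regular, weighted) SSE instance $G=(V,E,w)$ with $|V|=n$, the Minimum $k$-Cut instance is $G$ itself with $k=\delta n+1$. The mechanism rests on one identity. For $T\subseteq V$, the partition that isolates every vertex of $T$ and keeps $V\setminus T$ as a single part removes precisely the edges incident to $T$, of total weight $e(T)+E(T,V\setminus T)$, where $e(T)$ denotes the weight of edges with both endpoints in $T$; since $2e(T)+E(T,V\setminus T)=d|T|$ by regularity, this equals $\tfrac12 d|T|+\tfrac12 E(T,V\setminus T)$, i.e.\ $\tfrac12\bigl(1+\Phi(T)\bigr)d|T|$ when $|T|=\delta n$. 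Hence a non-expanding set of size $\delta n$ gives a $k$-cut of value at most $\tfrac12(1+\eta)d\delta n$, whereas in the soundness case $\Phi(T)\ge 1-\eta$ for every $|T|=\delta n$, so any $k$-cut of this ``isolate $T$, keep the rest'' shape costs at least $\tfrac12(2-\eta)d\delta n$. Since the ratio $\tfrac{2-\eta}{1+\eta}$ exceeds $2-\varepsilon$ once $\eta$ is small enough, a polynomial-time $(2-\varepsilon)$-approximation would solve $\mathrm{SSE}(\delta,\eta)$, contradicting its NP-hardness under SSEH.

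The completeness direction is then immediate: applying the identity to the promised set $S$ with $|S|=\delta n$ and $\Phi(S)\le\eta$ yields at least $\delta n+1$ connected components at cost $\tfrac12 d\delta n+\tfrac12 E(S,V\setminus S)\le\tfrac12(1+\eta)d\delta n$. The work is in soundness, where one must show that \emph{every} partition of $V$ into at least $\delta n+1$ parts, not only those of the special shape, has cut value at least (essentially) $\tfrac12(2-\eta)d\delta n$. Equivalently, since a partition into parts $P_1,\dots,P_r$ has cut value $W-\sum_j e(P_j)$ with $W=dn/2$ the total weight, one must show $\sum_j e(P_j)$ is small. Fewer than $1/\delta$ parts can have size exceeding $\delta n$, so at least $\delta n+1-\lceil 1/\delta\rceil$ parts are ``small''; writing $B$ for the union of the ``large'' parts, I would bound $\sum_{\text{large }j}e(P_j)\le e(B)=\tfrac12\bigl(d|B|-E(B,V\setminus B)\bigr)$ and $\sum_{\text{small }j}e(P_j)$ separately using expansion, and verify that the extremal configuration is exactly ``one part of size $(1-\delta)n$ plus $\delta n$ singletons,'' which is the special shape and attains value exactly $\tfrac12 d\delta n+\tfrac12 E(S',V\setminus S')\ge\tfrac12(2-\eta)d\delta n$ with $S'$ the set of $\delta n$ isolated vertices.

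The step I expect to be the main obstacle is precisely this soundness bookkeeping, because bare $\mathrm{SSE}(\delta,\eta)$ soundness controls only sets of size \emph{exactly} $\delta n$, whereas the analysis needs expansion both for sets somewhat larger than $\delta n$ (to rule out $k$-cuts that use several medium-sized parts rather than one large part) and, for non-singleton small parts, for sets somewhat smaller. Both extensions follow from the size-$\delta n$ bound by elementary averaging — a uniformly random equipartition of a size-$j\delta n$ set into blocks of size $\delta n$ forces its expansion to be at least $1-j\eta$, and a set of size below $\delta n$ can be padded up to size $\delta n$ — but one must check the resulting slack is affordable, which ends up requiring $\eta$ to be taken comparable to (not much smaller than) $\delta$; this costs nothing, since the soundness condition of $\mathrm{SSE}(\delta,\eta)$ is vacuous unless $\eta\gtrsim\delta$ anyway.
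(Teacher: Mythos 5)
Your reduction ($G$ unchanged, $k=\delta n+1$) and your completeness computation match the paper's exactly, and the observation that the cut value equals $W-\sum_j e(P_j)$ with $W=dn/2$ is a correct starting point. The problem is in the soundness bookkeeping, and it is a genuine gap, not just unfinished routine work.

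Your plan is to set $B$ to be the union of all parts of size $>\delta n$ and bound the cut by controlling $e(B)$ and $\sum_{\text{small }j}e(P_j)$, which after rearranging amounts to lower-bounding $E(B,V\setminus B)$. But nothing forces $|V\setminus B|$ to be close to $\delta n$: in the extreme case where \emph{every} part has size $\leqs\delta n$, you have $B=\emptyset$ and $E(B,V\setminus B)=0$, so your bound says nothing. More generally $|V\setminus B|$ can be any value from roughly $\delta n$ up to $n$, and your proposed averaging lemma --- $\Phi(T)\geqs 1-j\eta$ for $|T|=j\delta n$ --- degrades precisely as $j$ grows: the implied bound $e(T)\leqs\tfrac12 j^2\eta\,d\delta n$ is vacuous once $j\eta\geqs 1$, i.e.\ once $|T|\geqs\delta n/\eta$. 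Since, as you correctly observe, any sound SSE instance must have $\eta\gtrsim\delta$, this threshold is at most $O(n)$ and can be as low as a vanishing fraction of $n$ when $\eta\gg\delta$ (which SSEH permits, since it only guarantees some $\delta(\eta)$, possibly $\ll\eta$). So the averaging does not repair the range you need, and the claim that the extremal configuration is ``one part of size $(1-\delta)n$ plus $\delta n$ singletons'' is asserted rather than proved.

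The paper's soundness argument sidesteps all of this with a cleaner device. Sort the parts $T_1,\dots,T_k$ by size and take $A=T_1\cup\cdots\cup T_i$ with $i$ the largest index keeping $|A|\leqs\delta n$. A short counting argument (if $|A|<\delta n-\sqrt n$ then $T_{i+1},\dots,T_k$ are all larger than $\sqrt n$ and there are more than $\sqrt n$ of them, overfilling $V$) shows $|A|\geqs\delta n-\sqrt n$, so $|A|$ is within $\sqrt n$ of $\delta n$. Since $|A|\leqs\delta n$, padding $A$ to size exactly $\delta n$ and using monotonicity gives $e(A)\leqs\tfrac12\eta\,d\delta n$ directly from the SSE soundness guarantee at size exactly $\delta n$ --- no averaging, no dependence on how $\eta$ compares to $\delta$. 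And because $A$ is a union of parts, every edge across $(A,V\setminus A)$ is cut, so the $k$-cut value is at least $d|A|-2e(A)\geqs(1-\eta)d\delta n-d\sqrt n$, which gives the $(2-\varepsilon)$ gap for $n$ large. Adopting this ``prefix of smallest parts'' selection is the missing idea; your overall framing is otherwise the same as the paper's.
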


Note that the problem was claimed to be APX-hard in~\cite{SV95}. However, to the best of our knowledge, the proof has never been published and no other inapproximability is known.

\subsection{Densest At-Least-$k$-Subgraph}

Our final result is a hardness of approximating the Densest At-Least-$k$-Subgraph (DAL$k$S) problem, which can be stated as follows. Given a weighted graph, find a subset $S$ of at least $k$ vertices such that the induced subgraph on $S$ has maximum \emph{density}, which is defined as the ratio between the total weight of edges and the number of vertices. The problem was first introduced by Andersen and Chellapilla~\cite{AC09} who also gave a 3-approximation algorithm for the problem. Shortly after, 2-approximation algorithms for the problem were discovered by Andersen~\cite{And07} and independently by Khuller and Saha~\cite{KS09}. We show that, assuming SSEH, this approximation guarantee is essentially the best we can hope for:

\begin{theorem} \label{thm:dalks}
Assuming the Small Set Expansion Hypothesis, it is NP-hard to approximate Densest At-Least-$k$-Subgraph to within $(2 - \varepsilon)$ factor of the optimum for every constant $\varepsilon > 0$.
\end{theorem}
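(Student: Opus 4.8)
The plan is to reduce from the Small Set Expansion problem, in the same elementary spirit as the reduction behind Theorem~\ref{thm:k-cut}. Fix a small constant $\eta = \eta(\varepsilon)$, let $\delta = \delta(\eta)$ be as guaranteed by SSEH, and take a $d$-regular instance $G = (V, E, w)$ of SSE$(\delta, \eta)$ on $n$ vertices. From $G$ I would build a weighted graph $H$ and a threshold $k$ slightly larger than $\delta n$: the construction keeps a reweighted copy of $G$ and attaches an auxiliary gadget whose only job is to \emph{dilute the density of large vertex sets}, so that among sets of size at least $k$ the only way to be dense is to sit inside a small non-expanding set. Writing $D$ for the intended completeness value, I would aim for a YES value of at least $D$ and a NO value of at most $(1 + o_\varepsilon(1))\,D/2$; together with the $2$-approximation algorithms of~\cite{And07, KS09} this would pin the hardness threshold at $2 - \varepsilon$ for every $\varepsilon > 0$.

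For completeness, the YES case supplies $S_0 \subseteq V$ with $|S_0| = \delta n$ and $\Phi(S_0) \leq \eta$. Regularity gives $d\,\delta n = \sum_{v \in S_0}\deg(v) = 2\,w(E(S_0)) + E(S_0, V \setminus S_0) \leq 2\,w(E(S_0)) + \eta\, d\, \delta n$, so $S_0$ has induced density at least $\tfrac{1}{2}(1-\eta)\,d$; I would then check that the image of $S_0$ in $H$, padded with part of the gadget up to size $k$, is a feasible set of density at least $(1 - o_\varepsilon(1))\,D$.

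The heart of the argument is soundness. In the NO case every $S \subseteq V$ with $|S| = \delta n$ has $\Phi(S) \geq 1 - \eta$, equivalently induced density at most $\tfrac{1}{2}\,\eta\, d$, and one must bootstrap this \emph{single-scale} fact into the bound that \emph{every} subset of $H$ of size at least $k$ has density at most $(1 + o_\varepsilon(1))\,D/2$ --- including subsets far larger than $\delta n$, in particular almost all of $V(H)$. A natural averaging or random-restriction argument brings the density of a large set back to densities of $\delta n$-sized pieces, but leaves the edges \emph{between} the pieces uncontrolled; this is exactly why one cannot simply take $H = G$ (in any $d$-regular graph the whole vertex set has density $d/2$ regardless of expansion) and why the auxiliary gadget, together with the precise choice of $k$, is needed to force large subsets of $H$ to be sparse while still letting a dense set of size at least $k$ survive in the YES case. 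Carrying out this soundness analysis for large sets, and calibrating the gadget so that the completeness and soundness values differ by a factor tending to exactly $2$, is the step I expect to be the main obstacle.
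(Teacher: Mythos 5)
Your high-level plan matches the paper's reduction, and you have correctly located the real difficulty, but you have not resolved it: the whole problem is that SSE$(\delta,\eta)$ only controls sets of size exactly $\delta n$, and when you write that one must ``bootstrap this single-scale fact'' you are flagging a genuine missing step, not a routine one. The paper closes this gap by two simple observations rather than any averaging or random-restriction argument. First, it works from the equivalent multi-scale formulation SSE$(\delta,\eta,M)$ (Conjecture~\ref{conj:sse-strong}, equivalence shown in~\cite{RST12}), whose soundness directly gives $\Phi(S) \geq 1-\eta$ for every $S$ with $|S| \in [\delta n/M,\,\delta n M]$, so moderately-sized sets need no bootstrapping at all. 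Second, for sets larger than $\delta n M$ no expansion is needed: $d$-regularity caps the induced density of any set by $d/2$, and the gadget's density contribution is diluted by a factor $M$ once the set is that large. Your observation that taking $H = G$ fails because the whole graph has density $d/2$ is correct, but it is fully handled by these two cases combined with a gadget whose purpose is to \emph{boost} the YES density above $d/2$, not primarily to dilute large sets.

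Concretely, the paper's gadget is a single extra vertex $v^*$ carrying a self-loop of weight $d\delta n/2$, with $k = 1 + \delta n$. In the YES case, $S' = S \cup \{v^*\}$ has density about $d(1-\eta/2)$: half from the self-loop and half from the internal edges of the non-expanding $S$ (your own completeness computation). In the NO case, take $T = T' \setminus \{v^*\}$; the density is at most $(d\delta n/2 + E(T))/|T'|$. If $|T| \leq \delta n M$ (note $|T| \geq \delta n$ automatically since $|T'| \geq k$), then SSE$(\delta,\eta,M)$ soundness bounds $E(T)$ by roughly $\eta d|T|/2$. If $|T| > \delta n M$, then $d\delta n/(2|T'|) < d/(2M)$ and $E(T)/|T'| \leq d/2$ trivially. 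Either way the density is at most about $d(1/2 + O(\eta) + O(1/M))$, i.e.\ about half the YES value. So the gap you identified as the ``main obstacle'' is real, but it is filled by switching to the equivalent multi-scale SSE hypothesis, and both the gadget and the calibration of $k$ are much simpler than your final paragraph anticipates.
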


To the best of our knowledge, no hardness of approximation for DAL$k$S was explicitly proved before. We remark that DAL$k$S is a variant of the Densest $k$-Subgraph (D$k$S) problem, which is the same as DAL$k$S except that the desired set $S$ must have size exactly $k$. D$k$S has been extensively studied dating back to the early 90s~\cite{KP93,FS97,SW98,FL01,FPK01,AHI02,Fei02,Kho06,GL09,RS10,BCCFV10,AAMMW11,BCVGZ12,BKRW17,Man17}. Despite these considerable efforts, its approximability is still wide open. In particular, even though lower bounds have been shown under stronger complexity assumptions~\cite{Fei02,Kho06,RS10,AAMMW11,BKRW17,Man17} and for LP/SDP hierarchies~\cite{BCCFV10,M-thesis,CMMV17}, not even constant factor NP-hardness of approximation for D$k$S is known. On the other hand, the best polynomial time algorithm for D$k$S achieves only $O(n^{1/4 + \varepsilon})$-approximation~\cite{BCCFV10}. Since any inapproximability result for DAL$k$S translates directly to D$k$S, even proving some constant factor NP-hardness of approximating DAL$k$S would advance our understanding of approximability of D$k$S.

\section{Inapproximability of Minimum $k$-Cut}

We now proceed to prove our main results. Let us start with the simplest: Minimum $k$-Cut.

\begin{proof}[Proof of Theorem~\ref{thm:k-cut}]
The reduction from SSE($\delta, \eta$) to Minimum $k$-Cut is simple; the graph $G$ remains the input graph for Minimum $k$-Cut and we let $k = \delta n + 1$ where $n = |V|$.

{\bf Completeness.} If there is $S \subseteq V$ of size $\delta n$ such that $\Phi(S) \leqs \eta$, then we partition the graph into $k$ groups where the first group is $V \setminus S$ and each of the other groups contains one vertex from $S$. The edges cut are the edges across the cut $(S, V \setminus S)$ and the edges within the set $S$ itself. The total weight of edges of the former type is $d|S|\Phi(S) \leqs \eta d |S|$ and that of the latter type is at most $d|S|/2$. Hence, the total weight of edges cut in this partition is at most $(1/2 + \eta) d |S| = (1/2 + \eta) \delta d n$.

{\bf Soundness.} Suppose that, for every $S \subseteq V$ of size $\delta n$, $\Phi(S) \geqs 1 - \eta$. Let $T_1, \dots, T_k \subseteq V$ be any $k$-partition of the graph. Assume without loss of generality that $|T_1| \leqs \cdots \leqs |T_k|$. Let $A = T_1 \cup \cdots \cup T_i$ where $i$ is the maximum index such that $|T_1 \cup \cdots \cup T_i| \leqs \delta n$.

We claim that $|A| \geqs \delta n - \sqrt{n}$. To see that this is the case, suppose for the sake of contradiction that $|A| < \delta n - \sqrt{n}$. Since $|A \cup T_{i + 1}| > \delta n$, we have $T_{i + 1} > \sqrt{n}$. Moreover, since $A = T_1 \cup \cdots T_i$, we have $i \leqs |A| < \delta n - \sqrt{n}$. As a result, we have $n = |T_1 \cup \cdots \cup T_{k}| \geqs |T_{i + 1} \cup \cdots \cup T_{k}| \geqs (k - i)|T_i| > \sqrt{n} \cdot \sqrt{n} = n$, which is a contradiction. Hence, $|A| \geqs \delta n - \sqrt{n}$.

Now, note that, for every  $S \subseteq V$ of size $\delta n$, $\Phi(S) \geqs 1 - \eta$ implies that $E(S) \leqs \eta d \delta n / 2$ where $E(S)$ denote the total weight of all edges within $S$. Since $|A| \leqs \delta n$, we also have $E(A) \leqs \eta d \delta n / 2$. As a result, the total weight of edges across the cut $(A, V \setminus A)$, all of which are cut by the partition, is at least $$d|A| - \eta d \delta n \geqs (1 - \eta) d \delta n - d \sqrt{n} = \left(1 - \eta - \frac{1}{\delta \sqrt{n}}\right)\delta dn.$$

For every sufficiently small constant $\varepsilon > 0$, by setting $\eta = \varepsilon/20$ and $n \geqs 100/(\varepsilon^2 \delta^2)$, the ratio between the two cases is at least $(2 - \varepsilon)$, which concludes the proof of Theorem~\ref{thm:k-cut}.
\end{proof}

\section{Inapproximability of Densest At-Least-$k$-Subgraph}

We next prove our inapproximability result for Densest At-Least-$k$-Subgraph, which is also very simple. For this reduction and the subsequent reductions, it will be more convenient for us to use a different (but equivalent) formulation of SSEH. To state it, we first define a variant of SSE($\delta, \eta$) called SSE($\delta, \eta, M$); the completeness remains the same whereas the soundness is strengthened to include all $S$ of size in $\left[\frac{\delta |V|}{M}, \delta |V| M\right]$.

\begin{definition}[SSE($\delta, \eta, M$)]
Given a regular edge-weighted graph $G = (V, E, w)$, distinguish between:
\begin{itemize}
\item \emph{(Completeness)} There exists $S \subseteq V$ of size $\delta |V|$ such that $\Phi(S) \leqs \eta$.
\item \emph{(Soundness)} For every $S \subseteq V$ with $|S| \in \left[\frac{\delta |V|}{M}, \delta |V| M\right]$, $\Phi(S) \geqs 1 - \eta$.
\end{itemize}
\end{definition}

The new formulation of the hypothesis can now be stated as follows.

\begin{conjecture} \label{conj:sse-strong}
For every $\eta, M > 0$, there is $\delta = \delta(\eta, M) > 0$ such that SSE($\delta, \eta, M$) is NP-hard.
\end{conjecture}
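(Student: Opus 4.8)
The direction ``Conjecture~\ref{conj:sse-strong} $\Rightarrow$ Conjecture~\ref{conj:sse}'' is immediate: the completeness conditions of SSE($\delta,\eta$) and SSE($\delta,\eta,M$) coincide, while the soundness condition of the latter implies that of the former (it includes all $S$ of size exactly $\delta|V|$), so any reduction witnessing hardness of SSE($\delta,\eta,M$) also witnesses hardness of SSE($\delta,\eta$). Hence the content is the converse, and the plan is to establish it with the \emph{identity} reduction: start from a hard instance $G$ of SSE($\delta_0,\eta_0$) with $\eta_0$ to be chosen, obtained from Conjecture~\ref{conj:sse}, and argue that the same graph $G$ (with the same $\delta:=\delta_0$) is a hard instance of SSE($\delta_0,\eta,M$). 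Completeness is trivial as soon as $\eta_0\leqs\eta$, so everything reduces to the following purely combinatorial claim. \emph{Claim: if $G=(V,E,w)$ is $d$-regular and every $S\subseteq V$ with $|S|=\delta_0|V|$ has $\Phi(S)\geqs 1-\eta_0$, then every $S\subseteq V$ with $|S|\in\left[\frac{\delta_0|V|}{M},\delta_0|V|M\right]$ has $\Phi(S)\geqs 1-\eta_0 M$.}

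\textbf{Proving the Claim.} I would split into the ``large'' side ($|S|\geqs\delta_0|V|$) and the ``small'' side ($|S|\leqs\delta_0|V|$), in each case extracting, from a putative counterexample, a set of size exactly $\delta_0|V|$ that violates the hypothesis. For the large side: given $T$ with $|T|=t\delta_0|V|$ for some $t\in[1,M]$ (reducing the case $|T|>|V|/2$ to $V\setminus T$ via the symmetry $\Phi(T)=\Phi(V\setminus T)$) and $\Phi(T)=\phi$, take a uniformly random subset $S\subseteq T$ of size $\delta_0|V|$; using $d$-regularity to average the cut weight $E(S,V\setminus S)=E(S,T\setminus S)+E(S,V\setminus T)$ over the choice of $S$ gives $\mathbb{E}[\Phi(S)]=(1-1/t)+\phi/t$, which is $<1-\eta_0$ whenever $\phi<1-\eta_0 t$ — a contradiction. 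For the small side: given $S$ with $|S|=\delta_0|V|/t$ and $\Phi(S)=\phi<1-\eta_0 t$, pad $S$ by a uniformly random set of $\delta_0|V|(1-1/t)$ further vertices to a set $S'$ of size $\delta_0|V|$; the same kind of averaging yields $\mathbb{E}[\Phi(S')]\leqs \phi/t+(1-1/t)<1-\eta_0$, again a contradiction. Both computations are elementary; the only mild subtleties are rounding the set sizes to integers and the fact (baked into the definition of $\Phi$) that a random density-$p$ subset of a closed component has expected expansion $1-p$, which is precisely what both forces and caps the factor-$t$ loss in the expansion parameter.

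\textbf{Putting it together.} Given $\eta,M$, I would set $\eta_0:=\eta/M$ and let $\delta_0:=\delta(\eta/M)$ be the value furnished by Conjecture~\ref{conj:sse}, so that SSE($\delta_0,\eta/M$) is NP-hard; we may also assume $\delta_0\leqs 1/2$ (again by $\Phi(S)=\Phi(V\setminus S)$) and, in the meaningful regime, $\delta_0 M<1$. The identity reduction then sends completeness of SSE($\delta_0,\eta/M$) to completeness of SSE($\delta_0,\eta,M$) (since $\eta/M\leqs\eta$) and, by the Claim with $\eta_0 M=\eta$, sends soundness to soundness; setting $\delta(\eta,M):=\delta(\eta/M)$ this establishes Conjecture~\ref{conj:sse-strong}. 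I do not expect a genuine obstacle here — the argument is short and elementary — but the step demanding the most care is the soundness Claim: one must carry out the two averaging computations correctly (this is exactly where $d$-regularity enters), and then do the bookkeeping of integer rounding and of the $|S|>|V|/2$ regime so that the entire interval $\left[\frac{\delta|V|}{M},\delta|V|M\right]$ is covered.
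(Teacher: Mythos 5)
The paper does not prove this equivalence itself; it simply cites Appendix A.2 of Raghavendra--Steurer--Tulsiani~\cite{RST12}, so there is no in-paper argument to compare against directly. Taking your proposal on its own merits: the overall strategy (identity reduction plus a two-sided averaging claim relating expansion at nearby scales) is sound, and both computations in your Claim check out. For the ``large side'' ($|T|=t\delta_0|V|\leqs |V|/2$), sampling a uniformly random $S\subseteq T$ of size $\delta_0|V|$ and using $d$-regularity to write $E(T,T)=\tfrac{1}{2}\bigl(d|T|-E(T,V\setminus T)\bigr)$ gives $\E[\Phi(S)] = (1-\phi)\tfrac{|T|-|S|}{|T|-1}+\phi = (1-1/t)+\phi/t+O(1/n)$, so $\phi<1-\eta_0 t$ produces a violating $S$ (up to the $O(1/n)$ slack, which you can absorb by taking $\eta_0 = \eta/(2M)$, say). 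For the ``small side'' the bound $\Phi(S')\leqs \phi/t + (1-1/t)$ is in fact exact and deterministic for \emph{any} padding $S'\supseteq S$ of size $\delta_0|V|$: one has $E(S',V\setminus S')\leqs E(S,V\setminus S)+d|S'\setminus S|$ since $V\setminus S'\subseteq V\setminus S$ and each added vertex contributes at most $d$. So the Claim is right for all $S$ with $|S|\leqs |V|/2$.

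The one step you genuinely leave hanging is the case $|S|>|V|/2$, which you wave away as ``the meaningful regime $\delta_0 M<1$.'' This is a real issue and should be closed. If $\delta_0 M>1/2$ then the size range $\left[\tfrac{\delta_0|V|}{M},\,\delta_0|V|M\right]$ contains sets of size $>|V|/2$, and for such $S$ the symmetry $\Phi(S)=\Phi(V\setminus S)$ sends you to a complement whose size is \emph{not} in the range, so the Claim gives either nothing or a bound with a factor up to $1/(2\delta_0)$ rather than $M$; in fact, a random near-bisection always has expansion about $1/2$, so no $d$-regular graph whatsoever satisfies ``$\Phi(S)\geqs 1-\eta$ for all $|S|$ near $|V|/2$'' once $\eta<1/2$, and the soundness promise would be vacuous. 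The identity reduction therefore only works if you can first ensure $\delta_0 M\leqs 1/2$. The missing observation that closes this: for SSE($\delta_0,\eta_0$) to have \emph{any} sound instance at all (and hence to be a non-trivial promise problem, which NP-hardness requires assuming P$\ne$NP), a first-moment count over random $\delta_0|V|$-sized subsets shows $\E_S[\Phi(S)] = 1-\tfrac{\delta_0|V|-1}{|V|-1}$, so some set has expansion below $1-\delta_0+O(1/n)$; hence any $\delta_0$ for which Conjecture~\ref{conj:sse} holds must satisfy $\delta_0\leqs\eta_0+o(1)$. With your choice $\eta_0=\eta/M$ this gives $\delta_0 M\leqs\eta+o(1)$, and since Conjecture~\ref{conj:sse-strong} for smaller $\eta$ trivially implies it for larger $\eta$, you may assume $\eta<1/2$, which forces $\delta_0 M<1/2$ and makes the $|S|>|V|/2$ case empty. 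With that one paragraph added, your proof is complete.
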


Raghavendra \etal~\cite{RST12} showed that this formulation is equivalent to the earlier formulation (Conjecture~\ref{conj:sse}); please refer to Appendix A.2 of~\cite{RST12} for a simple proof of this.

\begin{proof}[Proof of Theorem~\ref{thm:dalks}]
Given an instance $G = (V, E, w)$ of SSE($\delta, \eta, M$), we create an input graph $G' = (V', w')$ for Densest At-Least-$k$-Subgraph as follows. $V'$ consists of all the vertices in $V$ and an additional vertex $v^*$. The weight function $w'$ remains the same as $w$ for all edges in $V$ whereas $v^*$ has only a self-loop\footnote{If we would like to avoid self-loops, we can replace $v^*$ with two vertices $v_1^*, v_2^*$ with an edge of weight $d \delta n / 2$ between them.} with weight $d \delta n / 2$. In other words, $E' = E \cup \{v^*\}$ and $w'(v^*) = d \delta n / 2$. Finally, let $k = 1 + \delta n$ where $n = |V|$.

{\bf Completeness.} If there is $S \subseteq V$ of size $\delta n$ such that $\Phi(S) \leqs \eta$, consider the set $S' = S \cup \{v^*\}$. We have $|S'| = k$ and the density of $S'$ is $\left(d \delta n / 2 + E(S)\right)/k$ where $E(S)$ denote the total weight of edges within $S$. This can be written as $$\left(\frac{\delta n}{k}\right)\left(\frac{d \delta n / 2 + E(S)}{\delta n}\right) = \left(\frac{\delta n}{k}\right)\left(\frac{d}{2} + \left(\frac{1 - \Phi(S)}{2}\right)d\right) \geqs \frac{d\delta n(1 - \eta/2)}{k}.$$

{\bf Soundness.} Suppose that $\Phi(S) \geqs 1 - \eta$ for every $S \subseteq V$ of size $|S| \in [\delta n / M, \delta n M]$. Consider any set $T' \subseteq V$ of size at least $k$. Let $T = T' \setminus \{v^*\}$ and let $E(T)$ denote the total weight of edges within $T$. Observe that the density of $S$ is at most $(d\delta n/2 + E(T))/|T'|$. Let us consider the following two cases.
\begin{enumerate}
\item $|T| \leqs \delta n M$. In this case, $\Phi(T) \geqs 1 - \eta$ and we have $$\frac{d\delta n/2 + E(T)}{|T'|} \leqs \frac{d\delta n}{2k} + \left(\frac{1 - \Phi(T)}{2}\right) \leqs \frac{d\delta n}{2k} + d\eta/2 = \frac{d\delta n\left(\frac{1}{2} + \frac{\eta}{2}\left(1 + \frac{1}{\delta n}\right)\right)}{k} \leqs \frac{d\delta n(1/2 + \eta)}{k}.$$
\item $|T| > \delta n M$. In this case, we have
$$\frac{d\delta n/2 + E(T)}{|T'|} < \frac{d}{2M} + \frac{d}{2} = \frac{d\delta n}{k}\left(\frac{1}{2} + \frac{1}{2M}\right)\left(1 + \frac{1}{\delta n}\right) \leqs \frac{d\delta n\left(\frac{1}{2} + \frac{1}{\delta n} + \frac{1}{M}\right)}{k}.$$
\end{enumerate}
Hence, in both cases, the density of $T'$ is at most $d\delta n\left(1/2 + \max\{\eta, \frac{1}{\delta n} + \frac{1}{M}\}\right)/k$.

For every sufficiently small constant $\varepsilon > 0$, by picking $\eta = \varepsilon/20, M = 40/\varepsilon$ and $n \geqs 800/(\varepsilon \delta)$, the ratio between the two cases is at least $(2 - \varepsilon)$, concluding the proof of Theorem~\ref{thm:dalks}.
\end{proof}

\section{Inapproximability of MEB and MBB} \label{sec:meb-mbb}

Let us now turn our attention to MEB and MBB. First, note that we can reduce MUCHB to MEB/MBB by just letting the two sides of the bipartite graph be $E_H$ and creating an edge $(e_1, e_2)$ iff $e_1 \cap e_2 = \emptyset$. This immediately shows that Lemma~\ref{lem:hypergraph-bisec} implies the following:

\begin{lemma} \label{lem:meb-mbb-withoutamp}
Assuming SSEH, for every $\delta > 0$, it is NP-hard to, given a bipartite graph $G = (L, R, E)$ with $|L| = |R| = n$, distinguish between the following two cases:
\begin{itemize}
\item \emph{(Completeness)} $G$ contains $K_{(1/2 - \delta)n, (1/2 - \delta)n}$ as a subgraph.
\item \emph{(Soundness)} $G$ does not contain $K_{\delta n, \delta n}$ as a subgraph.
\end{itemize}
Here $K_{t, t}$ denotes the complete bipartite graph in which each side contains $t$ vertices.
\end{lemma}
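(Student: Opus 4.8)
The plan is to use exactly the gadget-free reduction indicated just above the statement. Given a hypergraph $H = (V_H, E_H)$ produced by Lemma~\ref{lem:hypergraph-bisec}, build $G = (L, R, E)$ where $L$ and $R$ are each a copy of $E_H$, so that $n := |E_H| = |L| = |R|$, and where $e_1 \in L$ is joined to $e_2 \in R$ exactly when $e_1 \cap e_2 = \emptyset$. To make the two cases line up, I would invoke Lemma~\ref{lem:hypergraph-bisec} with its error parameter set to some $\varepsilon' \in (0, \delta)$, say $\varepsilon' = \delta/2$.

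For completeness, suppose $(T_0, T_1)$ is a bisection of $V_H$ with $|E_H(T_0)|, |E_H(T_1)| \geq (1/2 - \varepsilon')n$. I take the left part of the biclique to be (a copy of) any subset of $E_H(T_0)$ of size $\lceil (1/2 - \delta)n \rceil$ and the right part to be such a subset of $E_H(T_1)$; this is possible since $\varepsilon' < \delta$. Because each member of $E_H(T_0)$ is contained in $T_0$, each member of $E_H(T_1)$ is contained in $T_1$, and $T_0 \cap T_1 = \emptyset$, every cross pair is disjoint and hence adjacent in $G$, so $G$ contains $K_{(1/2-\delta)n,(1/2-\delta)n}$. (If $E_H(T_0)$ and $E_H(T_1)$ happen to share an element — only possible when $H$ has an empty hyperedge — this is harmless since the two copies sit on opposite sides; alternatively one may assume without loss of generality that $H$ has no empty hyperedges.)

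For soundness, suppose $G$ contains $K_{\delta n, \delta n}$ with left part $A \subseteq L$ and right part $B \subseteq R$. Let $U_A = \bigcup_{e \in A} e$ and $U_B = \bigcup_{e \in B} e$. A vertex in $U_A \cap U_B$ would lie in some $e_1 \in A$ and some $e_2 \in B$, making $e_1 \cap e_2 \neq \emptyset$ and contradicting adjacency; hence $U_A$ and $U_B$ are disjoint, so one of them, say $U_A$, has size at most $|V_H|/2$. Since every $e \in A$ satisfies $e \subseteq U_A$, we have $A \subseteq E_H(U_A)$, so the soundness clause of Lemma~\ref{lem:hypergraph-bisec} forces $\delta n = |A| \leq |E_H(U_A)| \leq \varepsilon' n = (\delta/2)n$, a contradiction.

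There is no genuinely hard step here: the reduction uses no gadget and each direction is a one-line argument about unions of hyperedges. The only points to watch are choosing the MUCHB error parameter strictly below $\delta$ so that the soundness inequality is a real contradiction, and rounding $(1/2-\delta)n$ to an integer (which requires only $n$ large enough and is immaterial asymptotically).
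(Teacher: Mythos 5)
Your proposal matches the paper's proof essentially verbatim: the reduction ($L=R=E_H$, with an edge when two hyperedges are disjoint), the completeness argument via $E_H(T_0)$ and $E_H(T_1)$, and the soundness argument via taking the union of the hyperedges on each side of a putative biclique and noting that the smaller union violates the MUCHB soundness bound are all identical. The paper likewise instantiates the MUCHB error parameter as $\delta/2$ and handles the integrality issue by taking $n$ large, so there is no meaningful divergence.
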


We provide the full proof of Lemma~\ref{lem:meb-mbb-withoutamp} in Appendix~\ref{app:red-mchb}. We also note that Theorem~\ref{thm:meb-mbb} follows from Lemma~\ref{lem:meb-mbb-withoutamp} by gap amplification via randomized graph product~\cite{BS92,Blu91}. Since this has been analyzed before even for biclique~\cite[Appendix D]{Kho06}, we defer the full proof to Appendix~\ref{app:amp}.

We are now only left to prove Lemma~\ref{lem:hypergraph-bisec}; we devote the rest of this section to this task.

\subsection{Preliminaries}

Before we continue, we need additional notations and preliminaries. For every graph $G = (V, E, w)$ and every vertex $v$, we write $G(v)$ to denote the distribution on its neighbors weighted according to $w$. Moreover, we sometimes abuse the notation and write $e \sim G$ to denote a random edge of $G$ weighted according to $w$.

While our reduction can be understood without notation of unique games, it is best described in a context of unique games reductions. We provide a definition of unique games below.

\begin{definition}[Unique Game (UG)]
A unique game instance $\cU = (\cG = (\cV, \cE, \cW), [R], \{\pi_e\}_{e \in \cE})$ consists of an edge-weighted graph $\cG = (\cV, \cE, \cW)$, a label set $[R] = \{1, \dots, R\}$, and, for each $e \in \cE$, a permutation $\pi_e: [R] \rightarrow [R]$. The goal is to find an assignment $F: V \rightarrow [R]$ such that $\val_{\cU}(F) \triangleq \Pr_{(u, v) \sim G}[\pi_{(u, v)}(F(u)) = F(v)]$ is maximized; we call an edge $(u, v)$ such that $\pi_{(u, v)}(F(u)) = F(v)$ \emph{satisfied}.
\end{definition}

Khot's UGC~\cite{Kho02} states that, for every $\varepsilon > 0$, it is NP-hard to distinguish between a unique game in which there exists an assignment satisfying at least $(1 - \varepsilon)$ fraction of edges from one in which every assignment satisfies at most $\varepsilon$ fraction of edges.

Finally, we need some preliminaries in discrete Fourier analysis. We state here only few facts that we need. We refer interested readers to~\cite{OD14} for more details about the topic.

For any discrete probability space $\Omega$, $f: \Omega^R \rightarrow [0, 1]$ can be written as $\sum_{\sigma \in [|\Omega|]^R} \hf(\sigma) \phi_\sigma$ where $\{\phi_\sigma\}_{\sigma \in [|\Omega|]^R}$ is the product Fourier basis of $L^2(\Omega^R)$ (see~\cite[Chapter 8.1]{OD14}). The degree-$d$ influence on the $j$-th coordinate of $f$ is $\infl_j^d(f) \triangleq \sum_{\sigma \in [|\Omega|]^R, \sigma_j \ne 1, \#\sigma \leqs d} \hf^2(\sigma)$ where $\#\sigma \triangleq |\{i \in [R] \mid \sigma_i \ne 1\}|$. It is well known that $\sum_{j=1}^{R} \infl_j^d(f) \leqs d$ (see~\cite[Proposition 3.8]{MOO10}). 

We also need the following theorem. It follows easily\footnote{For more details on how this version follows from there, please refer to~\cite[page 769]{Sve13}.} from the so-called ``It Ain't Over Till It's Over'' conjecture, which is by now a theorem~\cite[Theorem 4.9]{MOO10}. 
\begin{theorem}[\cite{MOO10}] \label{thm:aint-over}
For any $\beta, \varepsilon_T, \gamma > 0$, there exists $\kappa > 0$ and $t , d \in \mathbb{N}$ such that, if any functions $f_1, \dots, f_t: \Omega^R \rightarrow \{0, 1\}$ where $\Omega$ is a probability space whose probability of each atom is at least $\beta$ satisfy\footnote{0.99 could be replaced by any constant less than one; we use it to avoid introducing more parameters.}
\begin{align*}
\forall i \in [t], \E_{x \in \Omega^R}[f_i(x)] \leqs 0.99 & &\text{and} & &\forall j \in [R], \forall 1 \leqs i_1 \ne i_2 \leqs t, \min\{\infl_j^d(f_{i_1}), \infl_j^d(f_{i_2})\} \leqs \kappa,
\end{align*}
then
\begin{align*}
\Pr_{x \in \Omega^R, D \sim S_{\varepsilon_T}(R)}\left[\bigwedge_{i=1}^t f_i(C_D(x)) \equiv 1\right] < \gamma
\end{align*}
where $D \sim S_{\varepsilon_T}(R)$ is a random subset of $[R]$ where each $i \in [R]$ is included independently w.p. $\varepsilon_T$, $C_D(x) \triangleq \{x' \mid x'_{[R] \setminus D} = x_{[R] \setminus D}\}$ and $f_i(C_D(x)) \equiv 1$ is short for $\forall x'\in C_D(x), f_i(x') = 1$.
\end{theorem}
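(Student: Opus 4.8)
The statement is the multi-function version of the ``It Ain't Over Till It's Over'' theorem of Mossel, O'Donnell and Oleszkiewicz, and the plan is to obtain it from the single-function version, \cite[Theorem 4.9]{MOO10}, by replaying that theorem's proof with a multi-function invariance principle in place of the single-function one. First observe that the case $t = 1$, where the pairwise hypothesis is vacuous, asserts exactly that any $f:\Omega^R\to\{0,1\}$ with $\E[f]\leqs 0.99$ and all degree-$d$ influences at most $\kappa$ satisfies $\Pr_{x,D\sim S_{\varepsilon_T}(R)}[f(C_D(x))\equiv 1]<\gamma$; this is precisely \cite[Theorem 4.9]{MOO10}, once one notes that re-randomizing a random $\varepsilon_T$-fraction of the coordinates corresponds to the Bonami--Beckner operator $T_{1-\varepsilon_T}$ applied along entire subcubes, so that ``$\equiv 1$ on $C_D(x)$'' is the event controlled there. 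The (routine) translation between the two phrasings is carried out in \cite[p.\,769]{Sve13}.

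For general $t$, which we are free to take as large as we like in terms of $\gamma$ and $\varepsilon_T$, the plan has three steps. (i) Smooth the indicator ``$f_i(C_D(x))\equiv 1$'' into a bounded-degree functional of $f_i$ as in \cite{MOO10}, so that the quantity to bound becomes an expectation of a product of $t$ bounded functions. (ii) Apply a multi-function invariance principle, whose hypothesis is \emph{precisely} the stated condition that every coordinate has large degree-$d$ influence for at most one $f_i$; this replaces $f_1,\dots,f_t$ by jointly Gaussian functions $g_1,\dots,g_t$ on the corresponding Gaussian space --- the re-randomization still acts on the same index set $[R]$ so $\varepsilon_T$ is unchanged, and the invariance error is controlled using the atom bound $\beta$ --- at the cost of an additive error that can be pushed below $\gamma/2$ by taking $\kappa$ small and $d$ large as a function of $\beta,\varepsilon_T,\gamma,t$. (iii) Bound the resulting Gaussian quantity by combining two effects: if some $g_i$ --- or, after peeling off the pairwise-disjoint and therefore boundedly-many coordinates on which the $g_i$ are individually influential, some shared low-influence ``core'' of the collection --- has all influences small, the single-function Gaussian ``It Ain't Over'' estimate applied to that one function already yields a bound below $\gamma/2$; otherwise the $g_i$ behave like dictators supported on disjoint coordinates, each coordinate influential for exactly one $g_i$ forcing $\Pr[g_i(C_D(Z))\equiv 1]\leqs 1-c$ for some $c=c(\varepsilon_T,\kappa)>0$ and decoupling $g_i$ from the others, so that the joint probability is at most $(1-c)^{t}$ up to a small error, which is below $\gamma/2$ once $t$ is large. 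Adding the two error contributions gives the bound $\gamma$.

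The step I expect to be the real obstacle --- and the reason this is not simply a black-box reduction to the case $t=1$ --- is dealing with the influential coordinates of an individual $f_i$. The naive attempt, conditioning on the at most $d/\kappa$ coordinates where $f_i$ has large degree-$d$ influence and then invoking the $t=1$ theorem on the restriction, fails: restricting $m$ coordinates can turn a degree-$d$ Fourier influence into a degree-$(d+m)$ one, while $m\approx d/\kappa$ itself grows with the degree, so the bookkeeping is circular. Hence the multi-function structure has to be threaded through the invariance step and the ensuing casework rather than stripped away in advance; the remaining ingredients --- preservation of $\beta$ and $\varepsilon_T$, the truncation to bounded degree, and the parameter chase --- are routine, which is what the footnote means by calling the deduction ``easy''.
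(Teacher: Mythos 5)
Your proposal does not match how this statement is actually established: the paper does not prove it at all — it quotes it from [MOO10], and the footnote points to Svensson (p.~769) for the derivation, which is a short reduction to the \emph{single-function} Theorem~4.9 of [MOO10] via an averaging trick that your plan misses. Set $g \triangleq \frac{1}{t}\sum_{i=1}^{t} f_i$, a $[0,1]$-valued function with $\E[g] \leqs 0.99$. By Cauchy--Schwarz, $\hat{g}(\sigma)^2 \leqs \frac{1}{t}\sum_{i}\hat{f}_i(\sigma)^2$, hence $\infl_j^d(g) \leqs \frac{1}{t}\sum_i \infl_j^d(f_i) \leqs \frac{1}{t} + \kappa$ for every $j$, since the pairwise hypothesis allows at most one $f_i$ to contribute more than $\kappa$ (and that one contributes at most $1$). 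So once $t$ is large and $\kappa$ small, $g$ itself satisfies the low-influence hypothesis of the single-function theorem; and because the $f_i$ are $\{0,1\}$-valued, the event $\bigwedge_{i} f_i(C_D(x)) \equiv 1$ is exactly the event $g(C_D(x)) \equiv 1$. One application of [MOO10, Theorem 4.9] to $g$ (choose $d$ and the influence threshold $\kappa'$ there for the given $\beta, \varepsilon_T, \gamma$, then take $\kappa = \kappa'/2$ and $t \geqs 2/\kappa'$) finishes the proof. In particular, the restriction/conditioning difficulty you flag is real but moot: one never needs to condition on influential coordinates, and one should average the functions, not take their product.

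Beyond taking a different (and much heavier) route, your plan has genuine gaps that keep it from being a proof. Step (ii) invokes a ``multi-function invariance principle'' whose hypothesis is only that no coordinate is influential for two of the functions; no such black box exists in [MOO10] — invariance there requires \emph{every} relevant coordinate of the function being transferred to have small low-degree influence, and coordinates that are influential for some $f_i$ cannot be Gaussianized, so the phrase ``a coordinate influential for exactly one $g_i$'' has no meaning after the transfer. Step (iii) then rests on an asserted decoupling: that disjointness of the influential coordinate sets makes the events $\{f_i(C_D(x)) \equiv 1\}$ nearly independent, giving a $(1-c)^t$ bound. But all $t$ events are driven by the same $x$ and the same $D$ on the shared low-influence coordinates; showing that this common part cannot correlate them is precisely the substance of the theorem, so asserting it is circular. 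Handling the influential coordinates by ``peeling them off'' reintroduces exactly the restriction bookkeeping you yourself identified as fatal. As written, the central steps are unproven, whereas the intended argument is the one-step averaging reduction above.
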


\subsection{Bansal-Khot Long Code Test and A Candidate Reduction} \label{subsec:bk}

Theorem~\ref{thm:aint-over} leads us nicely to the Bansal-Khot long code test~\cite{BK09}. For UGC hardness reductions, one typically needs a \emph{long code test} (aka dictatorship gadget) which, on input $f_1, \dots, f_t: \{0, 1\}^R \rightarrow \{0, 1\}$, has the following properties:
\begin{itemize}
\item \emph{(Completeness)} If $f_1 = \cdots = f_t$ is a long code\footnote{A long code is simply $j$-junta (i.e. a function that depends only on the $x_j$) for some $j \in [R]$.}, the test accepts with large probability.
\item \emph{(Soundness)} If $f_1, \dots, f_t$ are balanced (i.e. $\E f_1 = \cdots = \E f_t = 1/2$) and are ``far from being a long code'', then the test accepts with low probability. 

A widely-used notion of ``far from being a long code'', and one we will use here, is that the functions do not share a coordinate with large low degree influence (i.e. for every $j \in [R]$ and every $i_1 \ne i_2 \in [t]$, at least one of $\infl_j^d(f_{i_1})$, $\infl_j^d(f_{i_2})$ is small), 
\end{itemize}

Bansal-Khot long code test works by first picking $x \sim \{0, 1\}^R$ and $D \sim S_{\varepsilon_T}(R)$. Then, test whether $f_i$ evaluates to 1 on the whole $C_D(x)$. This can be viewed as an ``algorithmic'' version of Theorem~\ref{thm:aint-over}; specifically, the theorem (with $\Omega = \{0, 1\}$) immediately implies the soundness property of this test. On the other hand, it is obvious that, if $f_1 = \cdots = f_t$ is a long code, then the test accepts with probability $1/2 - \varepsilon_T$.

Bansal and Khot used this test to prove tight hardness of approximation of Vertex Cover. The reduction is via a natural composition of the test with unique games. Their reduction also gives a cadidate reduction from UG to MUCHB, which is stated below in Figure~\ref{fig:red1}.

\begin{figure}[h!]
\begin{framed}
Input: A unique game $(\cG = (\cV, \cE, \cW), [R], \{\pi_e\}_{e \in E})$ and parameters $\ell \in \mathbb{N}$ and $\varepsilon_T \in (0, 1)$. \\
Output: A hypergraph $H = (V_H, E_H)$. \\
The vertex set $V_H$ is $\cV \times \{0, 1\}^R$ and the hyperedges are distributed as follows:
\begin{itemize}
\item Sample $u \sim \cV$ and sample $v_1 \sim \cG(u), \dots, v_\ell \sim \cG(u)$.
\item Sample $x \sim \{0, 1\}^R$ and a subset $D \sim S_{\varepsilon_T}(R)$.
\item Output a hyperedge $e = \{(v_p, x') \mid p \in [\ell], x' \in C_D(x)\}$.
\end{itemize}
\end{framed}
\caption{A Candidate Reduction from UG to MUCHB}
\label{fig:red1}
\end{figure}

As is typical for gadget reductions, for $T \subseteq V_H$, we view the indicator function $f_u(x) \triangleq \mathds{1}[(u, x) \in T]$ for each $u \in \cV$ as the intended long code. If there exists an assignment $\phi$ to the unique game instance that satisfies nearly all the constraints, then the bisection corresponding to $f_u(x) = x_{\phi(u)}$ cuts only small fraction of edges, which yields the completeness of MUCHB.

As for the soundness, we want to decode an UG assignment from $T \subseteq V_H$ of size at most $|V_H|/2$ which contains at least $\varepsilon$ fraction of hyperedges. In terms of the tests, this corresponds to a collection of functions $\{f_u\}_{u \in \cV}$ such that $\E_{u \sim \cV} \E_{x \sim \{0, 1\}^R} f_u(x) \leqs 1/2$ and the Bansal-Khot test on $f_{v_1}, \dots, f_{v_t}$ passes with probability at least $\varepsilon$ where $v_1, \dots, v_t$ are sampled as in Figure~\ref{fig:red1}. Now, if we assume that $\E_x f_u(x) \leqs 0.99$ for all $u \in \cV$, then such decoding is possible via a similar method as in~\cite{BK09} since Theorem~\ref{thm:aint-over} can be applied here. 

Unfortunately, the assumption $\E_x f_u(x) \leqs 0.99$ does not hold for an arbitrary $T \subseteq V_H$ and the soundness property indeed fails. For instance, imagine the constraint graph $\cG$ of the starting UG instance consisting of two disconnected components of equal size; let $\cV_0, \cV_1$ be the set of vertices in the two components. In this case, the bisection $(\cV_0 \times \{0, 1\}^R, \cV_1 \times \{0, 1\}^R)$ does not even cut a single edge! This is regardless of whether there exists an assignment to the UG that satisfies a large fraction of edges.

\subsection{RST Technique and The Reduction from SSE to MUCHB} \label{subsec:red-final}

The issue described above is common for graph problems that involves some form of expansion of the graph. The RST technique~\cite{RST12} was in fact invented to specifically circumvent this issue. It works by first reducing SSE to UG and then exploiting the structure of the constructed UG instance when composing it with a long code test; this allows them to avoid extreme cases such as one above. There are four parameters in the reduction: $R, k \in \mathbb{N}$ and $\varepsilon_V, \beta$. Before we describe the reduction, let us define additional notations:
\begin{itemize}
\item Let $G^{\otimes R}$ denote the $R$-tensor graph of $G = (V, E, w)$; the vertex set of $G^{\otimes R}$ is $V^R$ and, for every $A, B \in V^R$, the edge weight between $A, B$ is the product of $w(A_i, B_i)$ in $G$ for all $i \in [R]$.
\item For each $A \in V^R$, $T_V(A)$ denote the distribution on $V^R$ where the $i$-th coordinate is set to $A_i$ with probability $1 - \varepsilon_V$ and is randomly sampled from $V$ otherwise.
\item Let $\Pi_{R, k}$ denote the set of all permutations $\pi$'s of $[R]$ such that, for each $j \in [k]$, $\pi(\{R(j - 1)/k + 1, \dots, Rj/k\}) = \{R(j - 1)/k + 1, \dots, Rj/k\}$.
\item Let $\{0, 1, \bot\}_\beta$ denote the probability space such that the probability for $0, 1$ are both $\beta/2$ and the probability for $\bot$ is $1 - \beta$.
\end{itemize}

The first step of reduction takes an SSE($\delta, \eta, M$) instance $G = (V, E, w)$ and produces a unique game $\cU = (\cG = (\cV, \cE, \cW), [R], \{\pi_e\}_{e \in E})$ where $\cV = V^{R}$ and the edges are distributed as follows:
\begin{enumerate}
\item Sample $A \sim V^R$ and $\tA \sim T_V(A)$.
\item Sample $B \sim G^{\otimes R}(\tA)$ and $\tB \sim T_V(B)$.
\item Sample two random $\pi_A, \pi_B \sim \Pi_{R, k}$.
\item Output an edge $e = (\pi_A(\tA), \pi_B(\tB))$ with $\pi_{(A, B)} = \pi_B \circ \pi^{-1}_A$.
\end{enumerate}

Here $\varepsilon_V$ is a small constant, $k$ is large and $R/k$ should be think of as $\Theta(1/\delta)$. When there exists a set $S \subseteq V$ of size $\delta|V|$ with small edge expansion, the intended assignment is to, for each $A \in V^R$, find the first block $j \in [k]$ such that $|A(j) \cap S| = 1$ where $A(j)$ denotes the multiset $\{A_{R(j-1)/k + 1}, \dots, A_{Rj/k}\}$ and let $F(A)$ be the coordinate of the vertex in that intersection. If no such $j$ exists, we assign $F(A)$ arbitrarily. Note that, since $R/k = \Theta(1/\delta)$, $\Pr[|A(j) \cap S| = 1]$ is constant, which means that only $2^{-\Omega(k)}$ fraction of vertices are assigned arbitrarily. Moreover, it is not hard to see that, for the other vertices, their assignments rarely violate constraints as $\varepsilon_V$ and $\Phi(S)$ are small. This yields the completeness. In addition, the soundness was shown in~\cite{RS10, RST12}, i.e., if every $S \subseteq V$ of size $\delta|V|$ has near perfect expansion, no assignment satisfies many constraints in $\cU$ (see Lemma~\ref{lem:decoding-sse}).

The second step is to reduce this UG instance to a hypergraph $H = (V_H, E_H)$. Instead of making the vertex set $V^R \times \{0, 1\}^R$ as in the previous candidate reduction, we will instead make $V_H = V^R \times \Omega^R$ where $\Omega = \{0, 1, \bot\}_\beta$ and $\beta$ is a small constant. This does not seem to make much sense from the UG reduction standpoint because we typically want to assign which side of the bisection $(A, x) \in V_H$ is in according to $x_{F(A)}$ but $x_{F(A)}$ could be $\bot$ in this construction. However, it makes sense when we view this as a reduction from SSE directly: let us discard all coordinates $i$'s such that $x_i = \bot$ and define $A(j, x) \triangleq \{A_i \mid i \in \{R(j - 1)/k + 1, \dots, Rj/k\} \wedge x_i \ne \bot\}$. Then, let $j^*(A, x) \triangleq \min\{j \mid |A(j, x) \cap S| = 1\}$ and let $i^*(A, x)$ be the coordinate in the intersection between $A(j^*(A, x), x)$ and $S$, and assign $(A, x)$ to $T_{x_{i^*(A, x)}}$. (If $j^*(A, x)$ does not exists, then assign $(A, x)$ arbitrarily.)

Observe that, in the intended solution, the side that $(A, x)$ is assigned to does not change if (1) $A_i$ is modified for some $i \in [R]$ s.t. $x_i = \bot$ or (2) we apply some permutation $\pi \in \Pi_{R, k}$ to both $A$ and $x$. In other words, we can ``merge'' two vertices $(A, x)$ and $(A', x')$ that are equivalent through these changes together in the reduction. For notational convenience, instead of merging vertices, we will just modify the reduction so that, if $(A, x)$ is included in some hyperedge, then every $(A', x')$ reachable from $(A, x)$ by these operations is also included in the hyperedge. More specifically, if we define $M_x(A) \triangleq \{A' \in V^R \mid A'_i = A_i \text{ for all } i \in [R] \text{ such that } x_i \ne \bot\}$ corresponding to the first operation, then we add $\pi(A', x)$ to the hyperedge for every $A' \in M_x(A)$ and $\pi \in \Pi_{R, k}$. The full reduction is shown in Figure~\ref{reduction}.

\begin{figure}[h!]
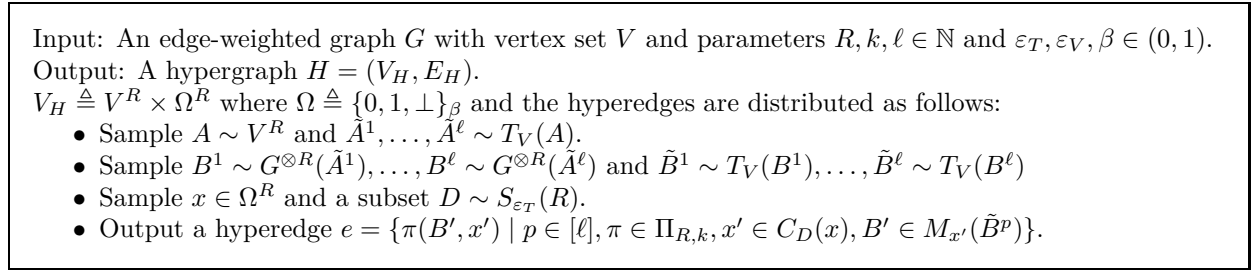

\begin{framed}
Input: An edge-weighted graph $G$ with vertex set $V$ and parameters $R, k, \ell \in \mathbb{N}$ and $\varepsilon_T, \varepsilon_V, \beta \in (0, 1)$. \\
Output: A hypergraph $H = (V_H, E_H)$. \\
$V_H \triangleq V^R \times \Omega^R$ where $\Omega \triangleq \{0, 1, \bot\}_\beta$ and the hyperedges are distributed as follows:
\begin{itemize}
\item Sample $A \sim V^R$ and $\tA^1, \dots, \tA^\ell \sim T_V(A)$.
\item Sample $B^1 \sim G^{\otimes R}(\tA^1), \dots, B^\ell \sim G^{\otimes R}(\tA^\ell)$ and $\tB^1 \sim T_V(B^1), \dots, \tB^\ell \sim T_V(B^\ell)$
\item Sample $x \in \Omega^R$ and a subset $D \sim S_{\varepsilon_T}(R)$.
\item Output a hyperedge $e = \{\pi(B', x') \mid p \in [\ell], \pi \in \Pi_{R, k}, x' \in C_D(x), B' \in M_{x'}(\tB^p)\}$.
\end{itemize}
\end{framed}
\caption{Reduction from SSE to Max UnCut Hypergraph Bisection}
\label{reduction}
\end{figure}

Note that the test we apply here is slightly different from Bansal-Khot test as our test is on $\Omega = \{0, 1, \bot\}_\beta$ instead of $\{0, 1\}$ used in~\cite{BK09}. Another thing to note is that now our vertices and hyperedges are weighted, the vertices according to the product measure of $V^R \times \Omega^R$ and the edges according to the distribution produced from the reduction. We write $\mu_{H}$ to denote the measure on the vertices, i.e., for $T \subseteq V^R \times \{0, 1, \bot\}^R$, $\mu_H(T) = \Pr_{A \sim V^R, x \sim \Omega^R}[(A, x) \in T]$, and we abuse the notation $E_H(T)$ and use it to denote the probability that a hyperedge as generated in Figure~\ref{reduction} lies completely in $T$. We note here that, while the MUCHB as stated in Lemma~\ref{lem:hypergraph-bisec} is unweighted, it is not hard to see that we can go from weighted version to unweighted by copying each vertex and each edge proportional\footnote{Note that this is doable since we can pick $\beta, \varepsilon_V, \varepsilon_T$ to be rational.} to their weights.

The advantage of this reduction is that the vertex ``merging'' makes gadget reduction non-local; for instance, it is clear that even if the starting graph $V$ has two connected components, the resulting hypergraph is now connected. In fact, Raghavendra \etal~\cite{RST12} show a much stronger quantitative bound. To state this, let us consider any $T \in V_H$ with $\mu_H(T) = 1/2$. From how the hyperedges are defined, we can assume w.l.o.g. that, if $(A, x) \in T$, then $\pi(A', x) \in T$ for every $A' \in M_x(A)$ and every $\pi \in \Pi_{R, k}$. Again, let $f_A(x) \triangleq \mathds{1}[(A, x) \in T]$. The following bound on the variance of $\E_x f_A(x)$ is implied by the proof of Lemma 6.6 in~\cite{RST12}:
\begin{align*}
\E_{A \sim V^R} \left(\E_{x \sim \Omega^R} f_A(x) - 1/2\right)^2 \leqs \beta.
\end{align*}

The above bound implies that, for most $A$'s, the mean of $f_A$ cannot be too large. This will indeed allow us to ultimately apply Theorem~\ref{thm:aint-over} on a certain fraction of the tuples $(\tB^{1}, \dots, \tB^{\ell})$ in the reduction, which leads to an UG assignment with non-negligible value.

\subsection{Completeness}

In the completeness case, we define a bisection similar to that described above. This bisection indeed cuts only a small fraction of hyperedges; quantitatively, this yields the following lemma.

\begin{lemma} \label{lem:completeness}
If there is a set $S \subseteq V$ such that $\Phi(S) \leqs \eta$ and $|S| = \delta |V|$ where $\delta \in \left[\frac{k}{10\beta R}, \frac{k}{\beta R}\right]$, then there is a bisection $(T_0, T_1)$ of $V_H$ such that $E_H(T_0), E_H(T_1) \geqs 1/2 - O(\varepsilon_T / \beta) - O(\eta \ell / \beta) - O(\varepsilon_V \ell / \beta) - 2^{-\Omega(k)}$ where $O(\cdot)$ and $\Omega(\cdot)$ hide only absolute constants.
\end{lemma}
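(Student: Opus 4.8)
The plan is to exhibit the natural bisection suggested in Section~\ref{subsec:red-final}: given the non-expanding set $S\subseteq V$ with $|S|=\delta|V|$, assign each vertex $(A,x)\in V_H$ to side $T_{x_{i^*(A,x)}}$, where $j^*(A,x)=\min\{j\in[k]\mid |A(j,x)\cap S|=1\}$ and $i^*(A,x)$ is the unique coordinate of the intersection; if no such $j^*$ exists, assign $(A,x)$ according to some fixed "default" rule (say to $T_0$). The first thing to check is that this rule is well-defined on the merged vertices, i.e. that it is invariant under $A\mapsto A'\in M_x(A)$ (which only touches coordinates $i$ with $x_i=\bot$, hence changes neither $A(j,x)$ nor $i^*$) and under the simultaneous permutation $(A,x)\mapsto(\pi(A),\pi(x))$ for $\pi\in\Pi_{R,k}$ (which permutes coordinates within each block, hence preserves each $A(j,x)$ as a multiset and relabels $i^*$ consistently with $x$). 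So the assignment is constant on each merged class, and in particular on each hyperedge the vertices $\pi(B',x')$ with $B'\in M_{x'}(\tilde B^p)$ all receive the same side as long as $x'$ agrees with $x$ on which coordinates are $\bot$ and on the value of coordinate $i^*$.

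Next I would argue the bisection is nearly balanced. Since $R/k=\Theta(1/(\beta\delta))$ by the hypothesis $\delta\in[k/(10\beta R),k/(\beta R)]$, and a $\bot$-surviving coordinate of $A$ lands in $S$ with probability $\beta\delta$ independently, the expected number of $S$-hits among the $\sim R/k$ surviving coordinates of a block is $\Theta(1)$, so $\Pr[|A(j,x)\cap S|=1]$ is a constant bounded away from $0$, whence $\Pr[j^*(A,x)\text{ undefined}]\le 2^{-\Omega(k)}$. Conditioned on $j^*$ being defined, $x_{i^*(A,x)}$ is a uniform bit independent of everything that determined $i^*$ (because $i^*$ is selected using only the "$\bot$ vs.\ non-$\bot$" pattern of $x$, not the $0/1$ values), so each side gets measure $1/2\pm 2^{-\Omega(k)}$; one can then move a $2^{-\Omega(k)}$ fraction of vertices across to make it an exact bisection, which only costs $2^{-\Omega(k)}$ in $E_H(T_0),E_H(T_1)$.

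For the main estimate I would bound the probability that a random hyperedge $e$ is \emph{not} contained in $T_0$ and \emph{not} contained in $T_1$. A hyperedge is determined by $u,\tilde A^1,\dots,\tilde A^\ell,B^1,\dots,B^\ell,\tilde B^1,\dots,\tilde B^\ell,x,D$; its vertices (up to merging) are the $\pi(B',x')$ for $p\in[\ell]$, $x'\in C_D(x)$, $B'\in M_{x'}(\tilde B^p)$. The hyperedge fails to be monochromatic only if some two of these vertices get different sides. I would union-bound over the "bad events" that cause this: (i) for some $p$, the block index $j^*(\tilde B^p,x')$ is undefined for some $x'\in C_D(x)$ — since $C_D$ only freezes the $\le \varepsilon_T R$ coordinates outside $D$... more carefully, restricting to $x'\in C_D(x)$ changes which coordinates are $\bot$ only on $D$, which has size $\varepsilon_T R$; this costs $O(\varepsilon_T/\beta)$ in the block-length fluctuation plus the baseline $2^{-\Omega(k)}$ for $j^*$ undefined; (ii) the coordinate $i^*(\tilde B^p,x')$ or its block $j^*$ genuinely depends on $x'\in C_D(x)$, i.e. some $D$-coordinate of $\tilde B^p$ lies in $S$ inside block $j^*$ or an earlier block — again an $O(\varepsilon_T/\beta)$ event; (iii) across different $p$, the values $i^*(\tilde B^p,x)$ point to coordinates whose $x$-bit differs — this is the UG-style event and is controlled by the fact that $\tilde A^p$ is obtained from $A$ by $\varepsilon_V$-noise (cost $O(\varepsilon_V\ell/\beta)$), $B^p$ from $\tilde A^p$ by a $G^{\otimes R}$ step with $\Phi(S)\le\eta$ (so the $S$-membership of coordinates of $B^p$ agrees with that of $\tilde A^p$ except on an $\eta$-fraction, cost $O(\eta\ell/\beta)$), $\tilde B^p$ from $B^p$ by $\varepsilon_V$-noise (cost $O(\varepsilon_V\ell/\beta)$), so with probability $1-O((\eta+\varepsilon_V)\ell/\beta)-2^{-\Omega(k)}$ all $\ell$ of the $\tilde B^p$ point, via their first $S$-hitting block, to the \emph{same} original coordinate $i^*$, on which $x_{i^*}$ is a single well-defined bit, making $e$ monochromatic; the factor $1/\beta$ arises exactly because each "bad coordinate" among the $\Theta(R)$ coordinates only matters when it is a surviving ($\ne\bot$) coordinate inside the relevant $\Theta(R/k)$-length prefix of blocks, i.e.\ one conditions on an event of probability $\Theta(\beta)$. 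Summing these contributions gives $E_H(T_0),E_H(T_1)\ge 1/2-O(\varepsilon_T/\beta)-O(\eta\ell/\beta)-O(\varepsilon_V\ell/\beta)-2^{-\Omega(k)}$.

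The step I expect to be the main obstacle is (iii): carefully tracking, through the three successive noise/graph operations $A\to\tilde A^p\to B^p\to\tilde B^p$, that the \emph{location} of the first $S$-hitting surviving coordinate is stable across all $\ell$ branches simultaneously, and extracting the correct $1/\beta$ (rather than $1/(\beta\delta)$ or worse) dependence. The cleanest way to do this is to fix, for each block $j$, the "hitting coordinate" as a function of only the indicator pattern "is coordinate $i$ in $S$ and $x_i\ne\bot$", couple all $\ell$ branches to the common source $A$, and bound the probability that any branch's pattern differs from $A$'s on the decisive prefix of blocks; the union bound over $\ell$ branches and the three operations is where the $O(\eta\ell/\beta)$ and $O(\varepsilon_V\ell/\beta)$ terms come from, and one must be slightly careful that a difference only hurts when it occurs at or before the relevant hitting block, which is why conditioning on "the hitting block is among the first $\Theta(1)$ blocks" (an all-but-$2^{-\Omega(k)}$ event) is used to keep the bound absolute in $k$.
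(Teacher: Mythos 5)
Your bisection is the same as the paper's, and the three bad events you identify (loss of the hitting block, $D$-dependence, cross-branch disagreement through the $\varepsilon_V/\eta$ noise) are exactly the error sources the paper controls. However, the one step you yourself flag as the main obstacle contains a genuine error: you say the bound is kept ``absolute in $k$'' by ``conditioning on `the hitting block is among the first $\Theta(1)$ blocks' (an all-but-$2^{-\Omega(k)}$ event).'' That event is \emph{not} all-but-$2^{-\Omega(k)}$. Writing $c_1 = \Pr[j^*(A,x)=j \mid j^*(A,x)>j-1]$, which the parameter choice $\delta \in [k/(10\beta R), k/(\beta R)]$ pins down to a universal constant, the event ``$j^*\leqs C$'' has probability $1-(1-c_1)^C$, a constant strictly less than $1$ for any fixed $C$; only ``$j^*\leqs k$'' (i.e.\ $j^*$ exists at all) is an all-but-$2^{-\Omega(k)}$ event, and union-bounding over $k$ blocks would reintroduce a factor of $k$ in front of $\eta\ell/\beta$ and $\varepsilon_V\ell/\beta$, spoiling the stated bound. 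The correct mechanism is the geometric decay of $j^*$: the paper makes this precise by conditioning block by block and showing
\begin{align*}
E_H(T'_0) \;\geqs\; (1-c_3)\sum_{j=1}^{k} c_1 (1-c_1)^{j-1}(1-c_2)^{j-1} \;\geqs\; 1 - c_3 - (1-c_1)^k - \frac{c_2}{c_1},
\end{align*}
where $c_2$ is the per-block ``trouble'' probability and $c_3\leqs 1/2 + O(\cdot)$ is the failure probability at the hitting block itself. Since $c_1$ is a universal constant, $c_2/c_1 = O(c_2)$ and $(1-c_1)^k = 2^{-\Omega(k)}$, giving the claimed bound with no $k$-factor. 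In a union-bound phrasing, what saves you is not a high-probability truncation of $j^*$ but the identity $\sum_{j\geqs 1} j\,\Pr[j^*=j] = 1/c_1 = O(1)$.

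Two smaller points. First, when you complete $(T'_0,T'_1)$ to an exact bisection, move \emph{only} ``undefined'' vertices (those with $j^* = -1$), so that $T'_0\subseteq T_0$ and $T'_1\subseteq T_1$ and hence $E_H(T_i)\geqs E_H(T'_i)$ trivially; moving a $2^{-\Omega(k)}$ fraction of arbitrary vertices can, in principle, destroy far more than a $2^{-\Omega(k)}$ fraction of uncut hyperedges. Second, your argument bounds $\Pr[e\nsubseteq T_0 \wedge e\nsubseteq T_1]$, i.e.\ $1-E_H(T_0)-E_H(T_1)$; to deduce the individual lower bounds on each $E_H(T_i)$, you should invoke the $0\leftrightarrow 1$ symmetry of the construction (swapping $0$ and $1$ on all non-$\bot$ coordinates preserves the hyperedge distribution and exchanges $T'_0$ with $T'_1$), which the paper uses and which your write-up leaves implicit. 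Your explanation of the $1/\beta$ factor as ``conditioning on an event of probability $\Theta(\beta)$'' is a loose paraphrase; the cleaner accounting is that each block contains $R/k = 1/(\beta\delta)$ coordinates and each coordinate's contribution scales like $\delta\cdot(\text{noise})$, so a per-block union bound gives $(\text{noise})/\beta$.
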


\begin{proof}
Suppose that there exists $S \subseteq V$ of size $|S| = \delta |V|$  where $\left[\frac{k}{10\beta R}, \frac{k}{\beta R}\right]$ and $\Phi(S) \leqs \eta$. For $A \in V^R, x \in \{0, 1, \bot\}^R$, we will use the following notations throughout this proof:
\begin{itemize}
\item For $j \in [k]$, let $W(A, x, j)$ denote the set of all coordinates $i$ in $j$-th block such that $x_i \ne \bot$ and $A_i \in S$, i.e., $W(A, x, j) = \{i \in \{R(j - 1)/k + 1, \dots, Rj/k\} \mid A_i \in S \wedge x_i \ne \bot\}$.
\item Let $j^*(A, x)$ denote the first block $j$ with $|W(A, x, j)| = 1$, i.e., $j^*(A, x) = \min\{j \in [k] \mid |W(A, x, j)| = 1\}$. Note that if such block does not exist, we set $j^*(A, x) = -1$.
\item Let $i^*(A, x)$ be the only element in $W(A, x, j^*(A, x))$. If $j^*(A, x) = -1$, let $i^*(A, x) = -1$.
\end{itemize}

To define $T_0, T_1$, we start by constructing $T'_0 \subseteq T_0$ and $T'_1 \subseteq T_1$ as follows: assign each $(A, x) \in V_H$ such that $j^*(A, x) \ne -1$ to $T_{x_{i^*(A, x)}}$. Finally, we assign the rest of the vertices arbitrarily to $T_0$ and $T_1$ in such a way that $\mu_H(T_0) = \mu_H(T_1)$. Since $T'_0 \subseteq T_0, T'_1 \subseteq T_1$, it suffices to show the desired bound for $E_H(T'_0), E_H(T'_1)$. Due to symmetry, it suffices to bound $E_H(T'_0)$. Recall that $E_H(T'_0) = \Pr[e \subseteq T'_0]$ where $e$ is generated as detailed in Figure~\ref{reduction}.

To compute $E_H(T'_0)$, it will be most convenient to make a block-by-block analysis. In particular, for each block $j \in [k]$, we define $G_j$ to denote the event that $j^*(B', x') = j$ for some $(B', x') \in e$. We will be interested in bounding the following conditional probabilities:
\begin{itemize}
\item $c_1 \triangleq \Pr[j^*(A, x) = j \mid j^*(A, x) > j - 1]$
\item $c_2 \triangleq \Pr[G_j \mid j^*(A, x) > j \wedge \neg G_1 \wedge \cdots \wedge \neg G_{j - 1}]$
\item $c_3 \triangleq \Pr[e \nsubseteq T'_0 \mid j^*(A, x) = j \wedge \neg G_1 \wedge \cdots \wedge \neg G_{j - 1}]$
\end{itemize}
Here and throughout the proof, $e, A, \tA^1, \dots, \tA^\ell, B^1, \dots, B^\ell, \tB^1, \dots, \tB^\ell$ are as sampled by our reduction in Figure~\ref{reduction}. Note also that it is clear that $c_1, c_2, c_3$ do not depend on $j$.

Before we bound $c_1, c_2, c_3$, let us see how these probabilities can be used to bound $E_H(T'_0)$.
\begin{align}
\Pr_{e \sim E_H}[e \subseteq T'_0] \nonumber
&\geqs \sum_{j=1}^k \Pr_{e \sim E_H}[e \subseteq T'_0 \wedge j^*(A, x) = j] \nonumber \\
&\geqs \sum_{j=1}^k \Pr_{e \sim E_H}[e \subseteq T'_0 \wedge j^*(A, x) = j \wedge \neg G_1 \wedge \cdots \wedge \neg G_{j - 1}] \nonumber \\
&= \sum_{j=1}^k \Pr_{e \sim E_H}[e \subseteq T'_0 \mid j^*(A, x) = j \wedge \neg G_1 \wedge \cdots \wedge \neg G_{j - 1}]\Pr[j^*(A, x) = j \wedge \neg G_1 \wedge \cdots \wedge \neg G_{j - 1}] \nonumber \\
&= (1 - c_3)  \sum_{j=1}^k \Pr[j^*(A, x) = j \wedge \neg G_1 \wedge \cdots \wedge \neg G_{j - 1}] \nonumber \\
&= (1 - c_3)  \sum_{j=1}^k \Pr[j^*(A, x) = j]\Pr[\neg G_1 \wedge \cdots \wedge \neg G_{j - 1} \mid j^*(A, x) = j]. \label{eq:compl1}
\end{align}

The probability that $j^*(A, x) = j$ is in fact simply
\begin{align*}
\Pr[j^*(A, x) = j] = \Pr[j^*(A, x) = j \mid j^*(A, x) > j - 1] \prod_{q=1}^{j-1} \Pr[j^*(A, x) \ne q \mid j^*(A, x) > q - 1] = c_1(1 - c_1)^{j - 1}.
\end{align*}

Moreover, $\Pr[\neg G_1 \wedge \cdots \wedge \neg G_{j - 1} \mid j^*(A, x) = j]$ can be written as
\begin{align*}
\Pr[\neg G_1 \wedge \cdots \wedge \neg G_{j - 1} \mid j^*(A, x) = j] = \prod_{q=1}^{j-1} \Pr[\neg G_q \mid j^*(A, x) = j \wedge \neg G_1 \wedge \cdots \wedge \neg G_{q - 1}] = (1 - c_2)^{j - 1}.
\end{align*}

Plugging these two back into (\ref{eq:compl1}), we have
\begin{align}
E_H(T'_0) &\geqs (1 - c_3)\sum_{j=1}^k c_1(1 - c_1)^{j - 1}(1 - c_2)^{j - 1} \nonumber \\
&\geqs c_1(1 - c_3) \sum_{j=1}^k (1 - c_1 - c_2)^{j - 1} \nonumber \\
&= c_1(1 - c_3) \left(\frac{1 - (1 - c_1 - c_2)^k}{c_1 + c_2}\right) \nonumber \\
&= (1 - c_3)(1 - (1 - c_1 - c_2)^k)\left(\frac{1}{1 + c_2/c_1}\right) \nonumber \\
&\geqs (1 - c_3)(1 - (1 - c_1)^k)(1 - c_2/c_1) \nonumber \\
&\geqs 1 - c_3 - (1 - c_1)^k - c_2/c_1. \label{eq:compl-bound}
\end{align}

With (\ref{eq:compl-bound}) in mind, we will proceed to bound $c_1, c_2, c_3$. Before we do so, let us state two inequalities that will be useful: for every $i \in [R], p \in [\ell]$, we have
\begin{align} \label{eq:in-out}
\Pr[\tB^p_i \in S \mid A_i \notin S] \leqs 2\varepsilon_V \delta + 2\eta \delta
\end{align}
and 
\begin{align} \label{eq:out-in}
\Pr[\tB^p_i \notin S \mid A_i \in S] \leqs 2\varepsilon_V + \eta.
\end{align}
The first inequality comes from the fact that, for $\tB^p_i$ to be in $S$ when $A_i \notin S$, at least one of the following events must occur: (1) $A_i \ne \tA^p_i$ and $\tA^p_i \in S$, (2) $B^p_i \ne \tB^p_i$ and $\tB^p_i \in S$, (3) $(\tA^p_i, \tB^p_i) \in (V \setminus S) \times S$. Each of first two occurs with probability $\varepsilon_V \delta$ whereas the last event occurs with probability at most $\eta \delta / (1 - \delta) \leqs 2 \eta \delta$. On the other hand, for the second inequality, at least one of the following events must occur: (1) $A_i \ne \tA^p_i$, (2) $B^p_i \ne \tB^p_i$, (3) $(\tA^p_i, \tB^p_i) \in S \times (V \setminus S)$. Each of first two occurs with probability $\varepsilon_V$ whereas the last event occurs with probability at most $\eta$.

\subsubsection*{Bounding $c_1$}
To compute $c_1$, observe that $\Pr[j^*(A, z) = j \mid j^*(A, z) > j - 1]$ is the probability that, for exactly one $i$ in the $j$-th block, $A_i \in S$ and $x_i \ne \bot$. For a fixed $i$, this happens with probability $\beta \delta$. Hence, $c_1 = (R / k)\beta \delta(1 - \beta \delta)^{R/k - 1}$. Since $\delta \in \left[\frac{k}{10\beta R}, \frac{k}{\beta R}\right]$,
 we can conclude that $c_1$ is simply a constant (i.e. $c_1 \in [10^{-5}, 0.5]$).

\subsubsection*{Bounding $c_2$}
We next bound $c_2$. If $j^*(A, x) > j$, we know that $|W(A, x, j)| \ne 1$. Let us consider two cases:
\begin{enumerate}
\item $W(A, x, j) = \emptyset$. Observe that, if $G_j$ occurs, then there exist $p \in [\ell]$ and $i \in \{R(j - 1)/k + 1, \dots, Rj/k\}$ such that $\tB^p_i \in S$, and $x_i \ne \bot$ or $i \in D$. For brevity, below we denote the conditional event $j^*(A, x) > j \wedge \neg G_1 \wedge \cdots \wedge \neg G_{j - 1} \wedge W(A, x, j) = \emptyset$ by $E$. By union bound, our observation gives the following bound.
\begin{align}
\Pr[G_j \mid E] 
&\leqs \sum_{i=R(j - 1)/k+1}^{Rj/k} \Pr[\exists p \in [\ell], \tB^p_i \in S \wedge (x_i \ne \bot \vee i \in D) \mid E] \nonumber \\
&= \sum_{i=R(j-1)/k+1}^{Rj/k} \left(\Pr[\exists p \in [\ell], \tB^p_i \in S \wedge x_i \ne \bot \mid E] +  \Pr[\exists p \in [\ell], \tB^p_i \in S \wedge x_i = \bot \wedge i \in D \mid E]\right) \label{eq:compl-emptycase}
\end{align}


We can now bound the first term by
\begin{align}
\Pr[\exists p \in [\ell], \tB^p_i \in S \wedge x_i \ne \bot \mid E] &\leqs \Pr[\exists p \in [\ell], \tB^p_i \in S \mid x_i \ne \bot \wedge E] \nonumber \\
&\leqs \sum_{p=1}^\ell \Pr[\tB^p_i \in S \mid x_i \ne \bot \wedge E] \nonumber \\
(\text{Since } W(A, j) = \emptyset)&= \sum_{p=1}^\ell \Pr[\tB^p_i \in S \mid A_i \notin S] \nonumber \\
(\text{From }(\ref{eq:in-out})) &\leqs \ell(2\varepsilon_V \delta + 2\eta \delta). \label{eq:compl-tmp1}
\end{align}

Consider the other term in (\ref{eq:compl-emptycase}). We can rearrange it as follows.
\begin{align}
&\Pr[\exists p \in [\ell], \tB^p_i \in S \wedge x_i = \bot \wedge i \in D \mid E] \nonumber \\
&= \varepsilon_T \Pr[\exists p \in [\ell], \tB^p_i \in S \wedge x_i = \bot \mid i \in D \wedge E] \nonumber \\
&\leqs \varepsilon_T \Pr[\exists p \in [\ell], \tB^p_i \in S \mid x_i = \bot \wedge i \in D \wedge E] \nonumber \\
&\leqs \varepsilon_T \left(\Pr[A_i \in S \mid x_i = \bot \wedge i \in D \wedge E] + \Pr[\exists p \in [\ell], \tB^p_i \in S \mid A_i \notin S \wedge x_i = \bot \wedge i \in D \wedge E]\right) \nonumber \\
&= \varepsilon_T \left(\Pr[A_i \in S \mid x_i = \bot] + \Pr[\exists p \in [\ell], \tB^p_i \in S \mid A_i \notin S]\right) \nonumber \\
(\text{From }(\ref{eq:in-out})) &\leqs \varepsilon_T \left(\delta + 2 \varepsilon_V \delta \ell + 2 \eta \delta \ell\right).
\label{eq:compl-tmp2}
\end{align}

Combining (\ref{eq:compl-emptycase}), (\ref{eq:compl-tmp1}) and (\ref{eq:compl-tmp2}) and from $\delta \leqs \frac{k}{\beta R}$, we have
\begin{align*}
\Pr[G_j \mid E] 
\leqs 1/(\beta \delta) \left(2\varepsilon_V \delta \ell + 2 \eta \delta \ell + \varepsilon_T \delta + 2 \varepsilon_V \delta \ell + 2 \eta \delta \ell \right)
\leqs O(\varepsilon_T / \beta) + O(\eta \ell / \beta) + O(\varepsilon_V \ell / \beta).
\end{align*}
\item $|W(A, x, j)| > 1$. Let $i^*_1$ and $i^*_2$ be two different (arbitrary) elements of $W(A, x, j)$. Again, for convenient, we use $E$ to denote the conditional event $j^*(A, x) > j \wedge \neg G_1 \wedge \cdots \wedge \neg G_{j - 1} \wedge \{i^*_1, i^*_2\} \subseteq W(A, x, j)$. Now, let us first split $\Pr[G_j | E]$ as follows.
\begin{align}
\Pr[G_j \mid E] &\leqs \Pr[i^*_1 \in D] + \Pr[i^*_2 \in D] + \Pr[G_j \mid E \wedge i^*_1, i^*_2 \notin D] \nonumber \\
&= 2\varepsilon_T + \Pr[G_j \mid E \wedge i^*_1, i^*_2 \notin D]. \label{eq:compl-tmp3}
\end{align}

Observe that, when $i^*_1, i^*_2 \notin D$, $x'_{i^*_1}, x'_{i^*_2} \ne \bot$ for every $x' \in C_D(x)$. Hence, for $G_j$ to occur, there must be $p \in [\ell]$ such that at least one of $\tB^p_{i^*_1}, \tB^p_{i^*_2}$ is not in $S$. In other words,
\begin{align}
\Pr[G_j \mid E \wedge i^*_1, i^*_2 \notin D] &\leqs \Pr[\exists p \in [\ell], \tB^p_{i^*_1} \notin S \vee \tB^p_{i^*_2} \notin S \mid E \wedge i^*_1, i^*_2 \notin D] \nonumber \\
&\leqs \sum_{p=1}^{\ell} \left(\Pr[\tB^p_{i^*_1} \notin S \mid E \wedge i^*_1, i^*_2 \notin D] + \Pr[\tB^p_{i^*_2} \notin S \mid E \wedge i^*_1, i^*_2 \notin D]\right) \nonumber \\
(\text{From } i^*_1, i^*_2 \in W(A, x, j)) &= \sum_{p=1}^{\ell} \left(\Pr[\tB^p_{i^*_1} \notin S \mid A_{i^*_1} \in S] + \Pr[\tB^p_{i^*_2} \notin S \mid A_{i^*_2} \in S]\right) \nonumber \\
(\text{From } (\ref{eq:out-in})) &\leqs 2\varepsilon_V\ell + \eta\ell.
\end{align}

Combining this with (\ref{eq:compl-tmp3}), we have
$\Pr[G_j \mid E] \leqs O(\varepsilon_T) + O(\eta \ell) + O(\varepsilon_V \ell)$.
\end{enumerate}
As a result, we can conclude that $c_2$ is at most $O(\varepsilon_T / \beta) + O(\eta \ell / \beta) + O(\varepsilon_V \ell / \beta)$.

\subsubsection*{Bounding $c_3$}
Finally, let us bound $c_3$. First, note that the probability that $x_{i^*(A, x)} = 1$ is $1/2$ and that the probability that $i^*(A, x) \in D$ is $\varepsilon_T$. This means that
\begin{align*}
c_3 &\leqs 1/2 + \varepsilon_T + \Pr[e \nsubseteq T'_0 \mid E].
\end{align*}
where $E$ is the event $x_{i^*(A, x)} = 0 \wedge i^*(A, x) \notin D \wedge j^*(A, x) = j \wedge \neg G_1 \wedge \cdots \wedge \neg G_{j - 1}$.

Moreover, since $A_{i^*(A, x)} \in S$, from (\ref{eq:out-in}) and from union bound, we have
\begin{align*}
\Pr\left[\exists p \in [\ell], \tB^p_{i^*(A, x)} \notin S \midv E\right] \leqs 2\varepsilon_V\ell + \eta\ell.
\end{align*}
From the above two inequalities, we have
\begin{align*}
c_3 &\leqs 1/2 + \varepsilon_T + 2\varepsilon_V\ell + \eta\ell + \Pr\left[e \nsubseteq T'_0 \midv E \wedge \left(\forall p \in [\ell], \tB^p_{i^*(A, x)} \in S\right)\right].
\end{align*}

Conditioned on the above event, $e \nsubseteq T'_0$ implies that there exists $p \in [\ell]$ and some $i \ne i^*(A, x)$ in this ($j$-th) block such that $\tB^p_i \in S$, and $x_i \ne \bot$ or $i \in D$. We have bounded an almost identical probability before in the case $W(A, x, j) = \emptyset$ when we bound $c_2$. Similary, here we have an upper bound of $O(\varepsilon_T / \beta) + O(\eta \ell / \beta) + O(\varepsilon_V \ell / \beta)$ on this probability. Hence,
\begin{align*}
c_3 \leqs 1/2 + O(\varepsilon_T / \beta) + O(\eta \ell / \beta) + O(\varepsilon_V \ell / \beta)
\end{align*}

By combining our bounds on $c_1, c_2, c_3$ with (\ref{eq:compl-bound}), we immediately arrive at the desired bound:
\begin{align*}
E_H(T'_0) \geqs 1/2 - O(\varepsilon_T / \beta) - O(\eta \ell / \beta) - O(\varepsilon_V \ell / \beta) - 2^{-\Omega(k)}.
\end{align*}
\end{proof}

\subsection{Soundness}

Let us consider any set $T$ such that $\mu_H(T) \leqs 1/2$. We would like to give an upper bound on $E_H(T)$. From how we define hyperedges, we can assume w.l.o.g. that $(A, x) \in T$ if and only if $\pi(A', x) \in T$ for every $A' \in M_x(A)$ and $\pi \in \Pi_{R, k}$.
We call such $T$ \emph{$\Pi_{R, k}$-invariant}.

Let $f: V^R \times \Omega^R \rightarrow \{0, 1\}$ denote the indicator function for $T$, i.e., $f(A, x) = 1$ if and only if $(A, x) \in T$. Note that $\E_{A \sim V^R, x \sim \Omega^R} f(A, x) = \mu_H(T) \leqs 1/2$.
Following notation from~\cite{RST12}, we write $f_A(x)$ as a shorthand for $f(A, x)$. In addition, for each $A \in V^R$, we will write $\tB \sim \Gamma(A)$ as a shorthand for $\tB$ generated randomly by sampling $\tA \sim T_V(A)$, $B \sim G^{\otimes R}(\tA)$ and $\tB \sim T_V(B)$ respectively. Let us restate Raghavendra \etal's~\cite{RST12} lemma regarding the variance of $\E_x f_A(x)$ in a more convenient formulation below.

\begin{lemma}[{\cite[Lemma~6.6]{RST12}}\footnotemark] \label{lem:mean-conc}
For every $A \in V^R$, let $\mu_A \triangleq \E_{x \sim \Omega^R} f_A(x)$. We have
\begin{align*}
\E_{A \sim V^R} \left(\E_{\tB \sim \Gamma(A)} \mu_{\tB} - \mu_H(T)\right)^2 \leqs \beta.
\end{align*}
\end{lemma}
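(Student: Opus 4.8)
The plan is to reduce the variance bound to an expansion property of the tensored SSE graph, and then invoke the soundness of the SSE instance. Observe first that the quantity $\mu_{\tB} = \E_{x} f_{\tB}(x)$ is a function $g: V^R \to [0,1]$ defined by $g(A) \triangleq \mu_A$. The inner expectation $\E_{\tB \sim \Gamma(A)} \mu_{\tB}$ is exactly $(\mathsf{T} g)(A)$, where $\mathsf{T}$ is the Markov operator associated with the one-step random walk $A \mapsto \tB$ in the UG construction (apply noise $T_V$, take a $G^{\otimes R}$-step, apply noise $T_V$ again). Since $\mu_H(T) = \E_{A} g(A)$ is the mean of $g$, the left-hand side is precisely $\operatorname{Var}_{A \sim V^R}(\mathsf{T} g)$, so the lemma says that $\mathsf{T}$ contracts variance down to $\beta$ regardless of $g$ (as long as $g \in [0,1]$, i.e. $\|g\|_\infty \le 1$). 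The key point making this possible is that $g$ inherits structure from the $\Pi_{R,k}$-invariance of $T$: $g$ is invariant under every block-permutation $\pi \in \Pi_{R,k}$ and, crucially, $g(A)$ does not actually "see" all $R$ coordinates in a symmetric way — it factors through the multiset-of-blocks structure.

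Next I would decompose $g$ in the Fourier/eigenbasis of the tensor operator. Because $G^{\otimes R}$ is a tensor power, its operator acts on $L^2(V^R)$ by the tensor product of the operator $\mathcal{A}_G$ of $G$; together with the coordinate-wise noise $T_V$ (which multiplies each coordinate's spectrum by a factor shrinking non-trivial eigenvalues) and the permutation-symmetrization, the operator $\mathsf{T}$ has small operator norm on the subspace orthogonal to "low-degree / few-relevant-coordinates" functions, while on functions supported on few coordinates the SSE soundness forces the corresponding Fourier mass to be tiny. Concretely, split $g - \E g = g^{\le} + g^{>}$ where $g^{\le}$ collects Fourier characters $\sigma$ with at most $\approx \delta R / M'$ non-trivial coordinates (for an appropriate threshold). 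On $g^{>}$ the noise operators $T_V$ (two of them) contribute a factor like $(1-\varepsilon_V)^{2\cdot(\text{degree})}$, which is $\le \beta$ once the degree exceeds some constant depending on $\varepsilon_V, \beta$; so $\|\mathsf{T} g^{>}\|_2^2 \le \beta \|g^{>}\|_2^2 \le \beta$. On $g^{\le}$, the characters have few relevant coordinates, so — using that $g$ is $[0,1]$-valued and that the SSE soundness rules out any small non-expanding set — one shows (this is the content of the cited Lemma~6.6 of~\cite{RST12}, which in turn builds on~\cite{RS10}) that $\|g^{\le}\|_2^2$ itself is at most $\beta$, hence so is $\|\mathsf{T} g^{\le}\|_2^2$ after using $\|\mathsf{T}\| \le 1$. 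Combining via $\operatorname{Var}(\mathsf{T}g) \le 2\|\mathsf{T}g^{\le}\|_2^2 + 2\|\mathsf{T}g^{>}\|_2^2$ (after a Cauchy--Schwarz split) and rescaling $\beta$ by constants gives the stated bound.

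The main obstacle is the bound on $\|g^{\le}\|_2^2$, i.e. showing that a $\{0,1\}$-valued (or $[0,1]$-valued) $\Pi_{R,k}$-invariant function on $V^R$ cannot have significant Fourier mass on low-degree characters unless the SSE instance has a small non-expanding set. This is exactly where the combinatorial structure of the SSE-to-UG reduction of~\cite{RS10,RST12} enters: one must argue that a character $\sigma$ with few relevant coordinates and non-negligible coefficient, together with the constraint that $g$ arises as a coordinate-noised average, would let one read off a set $S \subseteq V$ of size $\approx \delta|V|$ with $\Phi(S) < 1-\eta$, contradicting soundness. Since the lemma is quoted verbatim from~\cite[Lemma~6.6]{RST12}, for our purposes I would simply cite it and verify that our parameter regime ($\delta \in [\frac{k}{10\beta R}, \frac{k}{\beta R}]$, $\Omega = \{0,1,\bot\}_\beta$, the extra $\ell$-fold product which does not affect the single $\tB$-marginal) matches the hypotheses under which they proved it; the only thing needing a short check is that our $\Gamma(A)$ — noise, tensor step, noise — is precisely their walk and that our $\Pi_{R,k}$-invariance assumption on $T$ is the one they use. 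The footnote in the statement is presumably devoted to exactly this translation.
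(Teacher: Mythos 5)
Your final paragraph --- cite~\cite[Lemma~6.6]{RST12} after checking that the walk $\Gamma(A)$ matches their single-step operator and that $T$ is $\Pi_{R,k}$-invariant --- is exactly what the paper does: Lemma~\ref{lem:mean-conc} is presented as a quotation of that result, with the footnote observing only that the symmetrization step in~\cite{RST12} is unnecessary here because $T$ is already taken $\Pi_{R,k}$-invariant. On the ``what the paper actually writes'' level, then, your proposal and the paper coincide.

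The earlier sketch you give of the lemma's internal mechanism, however, contains a substantive error that is worth correcting. You claim the bound $\|g^{\le}\|_2^2\le\beta$ on the low-degree piece rests on ``the SSE soundness [ruling] out any small non-expanding set.'' This is not what happens, and it misplaces where the SSE hypothesis enters the overall argument. Lemma~\ref{lem:mean-conc} is a purely structural fact about any $\Pi_{R,k}$-invariant merged subset $T\subseteq V_H$: it holds identically in both the completeness and soundness cases and is agnostic to the expansion profile of $G$ (indeed the dictator bisection in the completeness proof also satisfies it, which would be impossible if SSE soundness were an ingredient). The $\beta$ on the right-hand side comes from the probability space $\Omega=\{0,1,\bot\}_\beta$: by $M_x$-invariance, $f_A(x)$ depends only on the coordinates $i$ with $x_i\ne\bot$, each of which is active only with probability $\beta$, so $\mu_A=\E_x f_A(x)$ is an average of juntas on random small active sets and one gets $\E_A(\mu_A-\mu_H(T))^2\le\beta$ before any graph or noise step is applied; averaging over $\tB\sim\Gamma(A)$ only contracts variance further. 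SSE soundness is invoked only later, in Lemma~\ref{lem:decoding-sse} (that is,~\cite[Lemma~6.11]{RST12}), to convert a decoded UG labelling into a small non-expanding set. So your low-degree/high-degree decomposition is not the route~\cite{RST12} takes, and the specific step you lean on for the low-degree piece would not go through.
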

\footnotetext{Lemma 6.6 in~\cite{RST12} involves symmetrizing $f$'s, but we do not need it here since $T$ is $\Pi_{R, k}$-invariant.}

To see how the above lemma helps us decode an UG assignment, observe that, if our test accepts on $f_{\tB^1}, \dots, f_{\tB^\ell}, x, D$, then it also accepts on any subset of the functions (with the same $x, D$); hence, to apply Theorem~\ref{thm:aint-over}, it suffices that $t$ of the functions have means $\leqs 0.99$. We will choose $\ell$ to be large compared to $t$. Using above lemma and a standard tail bound, we can argue that Theorem~\ref{thm:aint-over} is applicable for almost all tuples $\tB^1, \dots, \tB^\ell$, as stated below.

\begin{lemma} \label{lem:mean-test}
For any positive integer $t \leqs 0.01\ell$,
\begin{align*}
\Pr_{A \sim V^R, \tB^1, \dots, \tB^\ell \sim \Gamma(A)} \left[\left|\left\{i \in [\ell] \midv \mu_{\tB^i} \leqs 0.99\right\}\right| \geqs t\right] \geqs 1 - 10 \beta - 2^{-\ell / 100}.
\end{align*}
\end{lemma}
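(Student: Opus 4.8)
The plan is to prove Lemma~\ref{lem:mean-test} by a straightforward combination of Lemma~\ref{lem:mean-conc}, Markov's inequality, and a Chernoff/Hoeffding tail bound. The intuition is: Lemma~\ref{lem:mean-conc} says that the quantity $\nu_A \triangleq \E_{\tB \sim \Gamma(A)} \mu_{\tB}$ has, on average over $A$, variance at most $\beta$ around $\mu_H(T) \leqs 1/2$; hence for most $A$, $\nu_A$ is not much larger than $1/2$, and in particular bounded away from $1$. But $\mu_{\tB} \leqs 0.99$ failing to hold for a constant fraction of the $\tB^i$'s would force $\nu_A$ to be close to $1$, which is rare. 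So the argument proceeds in two layers: first isolate the ``good'' set of $A$'s, then condition on such an $A$ and apply a tail bound over the $\ell$ i.i.d.\ samples $\tB^1, \dots, \tB^\ell \sim \Gamma(A)$.

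Concretely, first I would apply Markov's inequality to the conclusion of Lemma~\ref{lem:mean-conc}: with $\nu_A = \E_{\tB \sim \Gamma(A)} \mu_{\tB}$, we have $\E_A (\nu_A - \mu_H(T))^2 \leqs \beta$, so $\Pr_A[(\nu_A - \mu_H(T))^2 \geqs 1/100] \leqs 100\beta$, i.e.\ $\Pr_A[\nu_A \geqs 1/2 + 1/10] \leqs 100\beta$ (absorbing constants; one may need to be slightly more careful with the exact numeric thresholds, but the paper's bound has a generous $10\beta$ slack, so rescaling works — actually to land exactly on $10\beta$ one picks the Markov threshold to be $10\beta$ worth of deviation, giving $\Pr_A[\nu_A \geqs 0.6] \leqs 10\beta$ for $\beta$ small enough that $\sqrt{1/(10)}<0.1$; the precise juggling is routine). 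Call $A$ \emph{good} if $\nu_A \leqs 0.6$. For a good $A$, each $\mu_{\tB^i} \in [0,1]$ has expectation $\nu_A \leqs 0.6$, so $\Pr_{\tB \sim \Gamma(A)}[\mu_{\tB} > 0.99] \leqs \nu_A/0.99 \leqs 0.7$ by Markov; hence the indicator random variables $Y_i \triangleq \mathds{1}[\mu_{\tB^i} \leqs 0.99]$ are i.i.d.\ Bernoulli with mean $\geqs 0.3$.

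Next I would apply a Chernoff bound to $\sum_{i=1}^\ell Y_i$: since $\E[\sum Y_i] \geqs 0.3\ell$, the probability that $\sum Y_i < t$ for any $t \leqs 0.01\ell \ll 0.3\ell$ is at most $\exp(-\Omega(\ell))$, and with a little care on the constant this can be made $\leqs 2^{-\ell/100}$ (the gap between $0.01\ell$ and $0.3\ell$ is ample). Combining: the probability (over $A$ and the $\tB^i$'s) that fewer than $t$ of the $\tB^i$'s have mean $\leqs 0.99$ is at most $\Pr_A[A \text{ not good}] + \max_{A \text{ good}} \Pr[\sum Y_i < t \mid A] \leqs 10\beta + 2^{-\ell/100}$, which gives the claimed lower bound of $1 - 10\beta - 2^{-\ell/100}$.

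I do not expect a genuine obstacle here — this is a ``clean-up'' lemma whose only purpose is to package Lemma~\ref{lem:mean-conc} into a form usable with Theorem~\ref{thm:aint-over}. The one place requiring mild attention is matching the exact numerical constants ($0.99$, $0.01\ell$, $10\beta$, $2^{-\ell/100}$): one must choose the Markov deviation threshold and the Chernoff exponent so the two error terms come out as stated, and verify the reasoning is valid for all $\beta \in (0,1)$ (or at least for $\beta$ small, which is all the reduction needs since $\beta$ is chosen to be a small constant). None of this is deep; it is bookkeeping with two elementary tail inequalities.
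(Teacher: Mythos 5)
Your outline follows the same three-step route as the paper's proof: Chebyshev on Lemma~\ref{lem:mean-conc} to mark a $10\beta$-fraction of $A$'s as ``bad,'' Markov on $\mu_{\tB}$ for the remaining good $A$'s, and a Chernoff bound over the $\ell$ i.i.d.\ samples. However, the numerical threshold you chose for ``good'' does not actually land on $10\beta$. Writing $\nu_A \triangleq \E_{\tB \sim \Gamma(A)} \mu_{\tB}$, Lemma~\ref{lem:mean-conc} together with $\mu_H(T) \leqs 1/2$ gives $\Pr_A[\nu_A \geqs \tau] \leqs \beta/(\tau - 1/2)^2$, and for this to be $\leqs 10\beta$ you need $(\tau - 1/2)^2 \geqs 1/10$, i.e.\ $\tau \geqs 1/2 + 1/\sqrt{10} \approx 0.82$. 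So $\tau = 0.6$ gives only $100\beta$, and your parenthetical ``$\sqrt{1/10} < 0.1$'' is false (it is about $0.32$). The paper takes $\tau = 0.9$, giving $\Pr_A[\nu_A \geqs 0.9] \leqs \beta/0.16 < 10\beta$; for such good $A$, Markov yields $\Pr_{\tB}[\mu_{\tB} > 0.99] \leqs 0.9/0.99 < 0.95$, so each indicator $\mathds{1}[\mu_{\tB^i} \leqs 0.99]$ has mean at least $0.05 > 0.01$, and Chernoff gives failure probability $\leqs 2^{-\ell/100}$ since $t \leqs 0.01\ell$. Any threshold roughly in $(0.82, 0.98)$ would do; only your specific choice of $0.6$ fails, and the rest of your argument coincides with the paper's.
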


\begin{proof}
First, note that, since $\mu_H(T) \leqs 1/2$, we can use Cherbychev's inequality and Lemma 6.6 to arrive at the following bound, which is analogous to Lemma 6.7 in~\cite{RST12}:
\begin{align}
\Pr_{A \sim V^R}\left[\E_{\tB \sim \Gamma(A)} \mu_{\tB} \geqs 0.9\right] \leqs 10\beta.
\label{eq:badA}
\end{align}

Let us call $A \in V^R$ such that $\E_{\tB \sim \Gamma(A)} \mu_{\tB} \geqs 0.9$ \emph{bad} and the rest of $A \in V^R$ \emph{good}.

For any good $A \in V^R$, Markov's inequality implies that $\Pr_{\tB \sim \Gamma(A)}[\mu_{\tB} > 0.99] \leqs 0.9/0.99 < 0.95$. As a result, an application of Chernoff bound gives the following inequality.
\begin{align}
\Pr_{\tB^1, \dots, \tB^\ell \sim \Gamma(A)} \left[\left|\left\{i \in [\ell] \midv \mu_{\tB^i} \leqs 0.99\right\}\right| < t \midv A \text{ is good}\right]
\leqs 2^{-\ell/100}.
\label{eq:goodA}
\end{align}

Finally, observe that (\ref{eq:badA}) and (\ref{eq:goodA}) immediately yields the desired bound.
\end{proof}

\subsubsection{Decoding an Unique Games Assignment}

With Lemma~\ref{lem:mean-test} ready, we can now decode an UG assignment via a similar technique from~\cite{BK09}.

\begin{lemma} \label{lem:decode-assignment}
For any $\varepsilon_T, \gamma, \beta > 0$, let $t = t(\varepsilon_T, \gamma, \beta), \kappa = \kappa(\varepsilon_T, \gamma, \beta)$ and $d = d(\varepsilon_T, \gamma, \beta)$ be as in Theorem~\ref{thm:aint-over}. For any integer $\ell \geqs 100t$, if there exists $T \subseteq V_H$ of such that $\mu_H(T) \leqs 1/2$ and $E_H(T) \geqs 2\gamma + 10\beta + 2^{-\ell/100}$, then there exists $F: V^R \rightarrow [R]$ such that
\begin{align*}
\Pr_{A \sim V^R, \tB \sim \Gamma(A), \pi_A, \pi_B \sim \Pi_{R, k}}[\pi_A^{-1}(F(\pi_A(\tA))) = \pi_B^{-1}(F(\pi_B(\tB)))] \geqs \frac{\gamma \kappa^2}{4 d^2 \ell^2}.
\end{align*}
\end{lemma}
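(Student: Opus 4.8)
The plan is to run the standard ``decode an assignment from a dictatorship test that accepts noticeably often'' argument in the form used by Bansal and Khot~\cite{BK09}, but over the base space $\{0,1,\bot\}_\beta$ instead of $\{0,1\}$ and with Lemma~\ref{lem:mean-test} supplying the balancedness that is not available a priori. Since $T$ is $\Pi_{R,k}$-invariant (and $M$-invariant), whether a hyperedge of Figure~\ref{reduction} lies in $T$ depends only on the column functions $f_{\tB^1},\dots,f_{\tB^\ell}$, so unwinding the definition gives
\begin{align*}
E_H(T)=\Pr_{A\sim V^R,\ \tB^1,\dots,\tB^\ell\sim\Gamma(A),\ x\sim\Omega^R,\ D\sim S_{\varepsilon_T}(R)}\left[\bigwedge_{p=1}^{\ell} f_{\tB^p}(C_D(x))\equiv 1\right];
\end{align*}
that is, $E_H(T)$ is exactly the acceptance probability of the generalized Bansal--Khot test applied to $f_{\tB^1},\dots,f_{\tB^\ell}$, where $A\sim V^R$ and the $\tB^p$ are i.i.d.\ from $\Gamma(A)$.

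Next I would localize using Lemma~\ref{lem:mean-test}. Call a tuple $(\tB^1,\dots,\tB^\ell)$ \emph{admissible} if at least $t$ of the indices $p$ have $\mu_{\tB^p}\leqs 0.99$; Lemma~\ref{lem:mean-test} says a tuple is inadmissible with probability at most $10\beta+2^{-\ell/100}$, so the hypothesis $E_H(T)\geqs 2\gamma+10\beta+2^{-\ell/100}$ yields that with probability at least $2\gamma$ over all of $A,\tB^1,\dots,\tB^\ell,x,D$ the tuple is admissible \emph{and} the test accepts. A Markov-type averaging over $(A,\tB^1,\dots,\tB^\ell)$ then gives, with probability at least $\gamma$ over $A$ and the $\tB^p$'s, that the tuple is admissible and $\Pr_{x,D}[\bigwedge_{p}f_{\tB^p}(C_D(x))\equiv 1]\geqs\gamma$. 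Fix such a tuple and let $i_1<\dots<i_t$ be its first $t$ light indices: the test restricted to $f_{\tB^{i_1}},\dots,f_{\tB^{i_t}}$ still accepts with probability $\geqs\gamma$ (the acceptance event only weakens under restriction), all $t$ of these functions have mean $\leqs 0.99$, and the atoms of $\{0,1,\bot\}_\beta$ have mass $\geqs\beta/2$, so the contrapositive of Theorem~\ref{thm:aint-over} forces the influence hypothesis to fail, i.e.\ there are $a\neq b$ in $[t]$ and $j\in[R]$ with $\min\{\infl_j^d(f_{\tB^{i_a}}),\infl_j^d(f_{\tB^{i_b}})\}>\kappa$. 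Hence with probability at least $\gamma$ over $(A,\tB^1,\dots,\tB^\ell)$ there exist $p\neq q\in[\ell]$ sharing a degree-$d$ influential coordinate; averaging over the $\binom{\ell}{2}$ pairs fixes one pair with success probability $\geqs 2\gamma/\ell^2$, and by exchangeability of the $\tB^p$ given $A$ this is the statement
\begin{align*}
\Pr_{A\sim V^R,\ \tB,\tB'\sim\Gamma(A)}\left[L(\tB)\cap L(\tB')\neq\emptyset\right]\ \geqs\ \frac{2\gamma}{\ell^2},\qquad\text{where } L(C)\triangleq\{j\in[R]:\infl_j^d(f_C)>\kappa\}.
\end{align*}

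For the decoding, $|L(C)|\leqs d/\kappa$ for every $C$ since $\sum_{j}\infl_j^d(f_C)\leqs d$, so I take the randomized assignment that, independently over $C$, sets $F(C)$ to a uniformly random element of $L(C)$ (arbitrary when $L(C)=\emptyset$). Using $\Pi_{R,k}$-invariance of $T$ one has $f_{\pi(C)}=f_C\circ\pi^{-1}$, hence $L(\pi(C))=\pi(L(C))$, so on an edge of $\cU$ the quantities $\pi_A^{-1}(F(\pi_A(\tA)))$ and $\pi_B^{-1}(F(\pi_B(\tB)))$ are independent and uniform over $L(\tA)$ and $L(\tB)$, making that edge satisfied with probability at least $(\kappa/d)^2\,\mathds{1}[L(\tA)\cap L(\tB)\neq\emptyset]$. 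The last — and, I expect, the delicate — point is to match the two probabilistic experiments: the displayed bound concerns two i.i.d.\ $\Gamma(A)$-samples, whereas an edge of $\cU$ is $(\pi_A(\tA),\pi_B(\tB))$ with $\tB\sim\Gamma(A)$ but $\tA\sim T_V(A)$ only one noise step from the source. I would handle this as in~\cite{RST12}, using that a single branch $p$ of the reduction already realizes the $\cU$-edge law on $(\tA^p,\tB^p)$ (indeed $\tB^p\mid\tA^p$ is $G^{\otimes R}$ followed by $T_V$, and $T_V$ commutes with $G^{\otimes R}$) and transferring the shared-coordinate information across branches; carrying this bookkeeping through, together with the $(\kappa/d)^2$ and $\ell^{-2}$ factors above (which leave a comfortable constant of slack), produces an $F$ with $\val_{\cU}(F)\geqs\gamma\kappa^2/(4d^2\ell^2)$, and fixing the best outcome of the randomized $F$ makes it deterministic.
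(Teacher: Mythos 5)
Your argument tracks the paper's proof closely through the first three quarters: you rewrite $E_H(T)$ as the acceptance probability of the test on $f_{\tB^1},\dots,f_{\tB^\ell}$, combine the hypothesis with Lemma~\ref{lem:mean-test} and a Markov step to isolate a $\gamma$-mass of ``good'' tuples, apply Theorem~\ref{thm:aint-over} to extract a shared influential coordinate, and average over the $\ell^2$ pairs to get $\Pr_{A,\tB,\tB'}[\cand(\tB)\cap\cand(\tB')\ne\emptyset]\gtrsim\gamma/\ell^2$ for $\tB,\tB'$ i.i.d.\ $\sim\Gamma(A)$. All of that agrees with the paper, modulo a constant in the pair-averaging.

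Where you diverge is the decoding, and that is precisely where your proof has a gap. You define $F(C)$ as a uniform element of $L(C)$, which would require a lower bound on $\Pr[L(\tA)\cap L(\tB)\ne\emptyset]$ for the actual $\cU$-edge distribution ($\tA\sim T_V(A)$, $\tB\sim T_V(G^{\otimes R}(\tA))$), whereas what you have established concerns two i.i.d.\ $\Gamma(A)$-samples. You correctly flag this as the delicate point, but your proposed patch does not hold up: the claim ``$T_V$ commutes with $G^{\otimes R}$'' is false in general (both operators are symmetric, but their product $T_V G^{\otimes R}$ is not), and ``transferring the shared-coordinate information across branches'' is not a step, it is a restatement of the gap. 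The paper closes this gap with a different, asymmetric randomized decoding, which is the one concrete device you are missing: for each vertex $A$, with probability $1/2$ set $F(A)$ to a uniform element of $\cand[A]$, and with probability $1/2$ sample a fresh $\tB\sim\Gamma(A)$ and set $F(A)$ to a uniform element of $\cand[\tB]$. The two independent coin flips contribute the factor $1/4$ in the final bound $\gamma\kappa^2/(4d^2\ell^2)$, and the second option is exactly what lets the established bound on two $\Gamma(A)$-samples feed into the $\cU$-edge probability. You should replace the uniform-from-$L(C)$ rule with this two-branch decoding and redo the last computation; the rest of your argument then goes through as written.
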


\begin{proof}
The decoding procedure is as follows. For each $A \in V^R$, we construct a set of candidate labels $\cand[A] \triangleq \{j \in [R] \mid \infl^{d}_j(f_A) \geqs \kappa\}$. We generate $F$ randomly by, with probability 1/2, setting $F(A)$ to be a random element of $\cand[A]$ and, with probability 1/2, sampling $\tB \sim \Gamma(A)$ and setting $F(A)$ to be a random element from $\cand[B]$. Note that, if the candidate set is empty, then we simply pick an arbitrary assignment.

From our assumption that $T$ is $\Pi_{R, k}$-invariant, it follows that, for every $A \in V^R, \pi \in \Pi_{R, k}$ and $j \in [R]$, $\Pr_F[\pi^{-1}(F(\pi(A))) = j] = \Pr_F[F(A) = j]$. In other words, we have
\begin{align}
\Pr_{F, A \sim V^R, \tB \sim \Gamma(A), \pi_A, \pi_B \sim \Pi_{R, k}}[\pi_A^{-1}(F(\pi_A(\tA))) = \pi_B^{-1}(F(\pi_B(\tB)))] &= \Pr_{F, A \sim V^R, \tB \sim \Gamma(A)}[F(\tA) = F(\tB)].
\label{eq:lem-f-inv}
\end{align}

Next, note that, from how our reduction is defined, $E_H(T)$ can be written as
\begin{align*}
E_H(T) = \Pr_{A \sim V^R, \tB^1, \dots, \tB^\ell \sim \Gamma(A), x \sim \Omega^R, D \sim S_{\varepsilon_T}(R)}\left[\bigwedge_{i=1}^\ell f_{\tB^i}(C_D(x)) \equiv 1\right].
\end{align*}

From $E_H(T) \geqs 2\gamma + 10\beta + 2^{-\ell/100}$ and from Lemma~\ref{lem:mean-test}, we can conclude that
\begin{align*}
\Pr_{A, \tB^1, \dots, \tB^\ell, x, D}\left[\left(\bigwedge_{i=1}^\ell f_{\tB^i}(C_D(x)) \equiv 1\right) \wedge \Big(\left|\left\{i \in [\ell] \midv \mu_{\tB^i} \leqs 0.99\right\}\right| \geqs t\Big)\right] \geqs 2\gamma.
\end{align*}

From Markov's inequality, we have
\begin{align*}
\gamma &\leqs \Pr_{A, \tB^1, \dots, \tB^\ell} \left[\Pr_{x, D}\left[\left(\bigwedge_{i=1}^\ell f_{\tB^i}(C_D(x)) \equiv 1\right) \wedge \Big(\left|\left\{i \in [\ell] \midv \mu_{\tB^i} \leqs 0.99\right\}\right| \geqs t\Big) \right] \geqs \gamma\right] \\
&= \Pr_{A, \tB^1, \dots, \tB^\ell} \left[\left(\Pr_{x, D}\left[\bigwedge_{i=1}^\ell f_{\tB^i}(C_D(x)) \equiv 1\right] \geqs \gamma\right) \wedge \Big(\left|\left\{i \in [\ell] \midv \mu_{\tB^i} \leqs 0.99\right\}\right| \geqs t\Big)\right].
\end{align*}

A tuple $(A, \tB^1, \dots, \tB^\ell)$ is said to be \emph{good} if $\Pr_{x \sim \Omega^R, D \sim S_{\varepsilon_T}(R)}\left[\bigwedge_{i=1}^\ell f_{\tB^i}(C_D(x)) \equiv 1\right] \geqs \gamma$ and $\left|\left\{i \in [\ell] \midv \mu_{\tB^i} \leqs 0.99\right\}\right| \geqs t$. For such tuple, Theorem~\ref{thm:aint-over} implies that there exist $i_1 \ne i_2 \in [\ell], j \in [R]$ s.t. $\infl_j^d(f_{\tB^{i_1}}), \infl_j^d(f_{\tB^{i_2}}) \geqs \kappa$. This means that $\cand(\tB^{i_1}) \cap \cand(\tB^{i_2}) \ne \emptyset$.

Hence, if we sample a tuple $(A, \tB^1, \dots, \tB^\ell)$ at random, and then sample two different $\tB, \tB'$ randomly from $\tB^1, \dots, \tB^\ell$, then the tuple is good with probability at least $\gamma$ and, with probability $1/\ell^2$, we have $\tB = \tB^{i_1}, \tB' = \tB^{i_2}$. This gives the following bound:
\begin{align*}
\Pr_{A, \tB, \tB'}\left[\cand(\tB) \cap \cand(\tB') \ne \emptyset\right] \geqs \frac{\gamma}{\ell^2}.
\end{align*}

Now, observe that $\tB$ and $\tB'$ above are distributed in the same way as if we pick both of them independently with respect to $\Gamma(A)$. Recall that, with probability 1/2, $F(A)$ is a random element of $\cand(\tB)$ where $\tB \sim \Gamma(A)$ and, with probability 1/2, $F(\tB')$ is a random element of $\cand(\tB')$. Moreover, since the sum of degree $d$-influence is at most $d$~\cite[Proposition 3.8]{MOO10}, the candidate sets are of sizes at most $d/\kappa$. As a result, the above bound yields
\begin{align*}
\Pr_{A \sim V^R, \tB' \sim \Gamma(A)}[F(A) = F(\tB')] \geqs \frac{\gamma \kappa^2}{4 d^2 \ell^2},
\end{align*}
which, together with (\ref{eq:lem-f-inv}), concludes the proof of the lemma.
\end{proof}

\subsubsection{Decoding a Small Non-Expanding Set}

To relate our decoded UG assignment back to a small non-expanding set in $G$, we use the following lemma of~\cite{RST12}, which roughly states that, with the right parameters, the soundness case of SSEH implies that only small fraction of constriants in the UG can be satisfied.

\begin{lemma}[{\cite[Lemma~6.11]{RST12}}] \label{lem:decoding-sse}
If there exists $F: V^R \rightarrow [R]$ such that
\begin{align*}
\Pr_{A \sim V^R, \tB \sim \Gamma(A), \pi_A, \pi_B \sim \Pi_{R, k}}[\pi_A^{-1}(F(\pi_A(\tA))) = \pi_B^{-1}(F(\pi_B(\tB)))] \geqs \zeta,
\end{align*}
then there exists a set $S \subseteq V$ with $\frac{|S|}{|V|} \in \left[\frac{\zeta}{16R}, \frac{3k}{\varepsilon_V R}\right]$ with $\Phi(S) \leqs 1 - \frac{\zeta}{16k}$.
\end{lemma}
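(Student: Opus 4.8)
The statement is the key technical ingredient making the SSE-to-UG reduction of this construction sound, so the plan is to follow the by-now standard route of Raghavendra--Steurer, adapted to the block-permutation/tensor instance above.

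\textbf{Step 1: average out the outer permutations.} Since $\pi_A$ and $\pi_B$ are sampled independently and uniformly from $\Pi_{R,k}$, for each $A \in V^R$ define a distribution $P_A$ on $[R]$ by $P_A(i) \triangleq \Pr_{\pi \sim \Pi_{R,k}}[\pi^{-1}(F(\pi(A))) = i]$. A direct computation, using that an edge is satisfied iff $\pi_A^{-1}(F(\pi_A(\tA))) = \pi_B^{-1}(F(\pi_B(\tB)))$, shows that the left-hand side of the hypothesis equals $\E_{(A,\tA,B,\tB)} \langle P_{\tA}, P_{\tB}\rangle$, where the tuple is sampled exactly as in the first step of the reduction; hence this expectation is at least $\zeta$. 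Two structural facts are recorded for later: first, $P_{\pi(A)} = \pi(P_A)$ for every $\pi \in \Pi_{R,k}$, so the family $\{P_A\}$ is equivariant under block permutations and $P_A$ can be thought of as weight on the coordinates of $A$ rather than on an abstract $[R]$; second, in the joint law of $(\tA,\tB)$ the coordinates are independent and each pair $(\tA_i,\tB_i)$ is an $\varepsilon_V$-noised random edge of $G$, i.e.\ $\tA_i$ is uniform in $V$ and, conditioned on $\tA_i$, $\tB_i$ is a uniformly random vertex with probability $\varepsilon_V$ and a random $G$-neighbour of $\tA_i$ otherwise. Thus the hypothesis has become: a ``fractional'' assignment on $V^R$ whose value, under the coordinate-wise noisy-$G$ correlation, is at least $\zeta$.

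\textbf{Step 2: decode a non-expanding weight function on $V$.} Split each $P_A$ according to the $k$ blocks of $[R]$; averaging over $A$ and over the choice of block, there is one block $j^\star \in [k]$ carrying at least a $1/k$ fraction of $\E\langle P_{\tA}, P_{\tB}\rangle$, so restricting to labels in block $j^\star$ loses only a factor $k$. Define $g \colon V \to \R_{\geqs 0}$ by $g(w) \triangleq \E_{A \sim V^R}\bigl[\sum_{i \in \text{block }j^\star} P_A(i)\,\mathds{1}[A_i = w]\bigr]$. Using uniformity of $A$ and the coordinate independence from Step 1, the inequality $\E\langle P_{\tA}, P_{\tB}\rangle \geqs \zeta$ turns into a lower bound of order $\zeta/k$ on the ``noisy Rayleigh quotient'' of $g$ with respect to $G$ --- that is, $\E_{(u,v)}[g(u)g(v)]$ for $(u,v)$ an $\varepsilon_V$-noised random edge of $G$, measured against $\sum_w g(w)$ --- so $g$ is a fractional set of $G$ that does not expand. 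This is exactly the content of Raghavendra--Steurer's ``local to global'' argument, whose main point is that the per-coordinate noisy-$G$ correlation keeps the $R/k$ coordinates of block $j^\star$ from generating cross-terms that would swamp the bound.

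\textbf{Step 3: round to a set, check the size window, and the main obstacle.} Finally apply the standard threshold lemma: over a uniformly random threshold $\tau$, some super-level set $S = \{w : g(w) \geqs \tau\}$ satisfies $\Phi_G(S) \leqs 1 - \Omega(\zeta/k)$, and absorbing constants gives the stated $1 - \zeta/(16k)$. The size bounds fall out of the same quantities: $\|g\|_1 \geqs \Omega(\zeta/k)$ while $\|g\|_\infty \leqs \frac{R/k}{|V|}$ (each of the $R/k$ coordinates in block $j^\star$ is uniform on $V$ and $P_A$ is a probability distribution), which forces $|S|/|V| \geqs \zeta/(16R)$ after accounting for the rounding slack; on the other side, the $\varepsilon_V$-noise makes any set of density below $\Theta(k/(\varepsilon_V R))$ contribute negligibly to the noisy Rayleigh quotient --- only the $(1-\varepsilon_V)$ part of the correlation comes from $G$ --- so the rounding can be taken to output a set of density at most $3k/(\varepsilon_V R)$. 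The delicate part is Step 2, converting ``$\langle P_{\tA},P_{\tB}\rangle$ is large in expectation over the block-permuted, tensor-$G$-correlated pair'' into a genuine non-expansion statement about one weight function on $V$ while losing only a single factor of $k$ and simultaneously landing inside the claimed size window; this is precisely where the coordinate independence from Step 1 and a careful bookkeeping of the $\varepsilon_V$-noise are needed.
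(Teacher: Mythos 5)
First, a remark on what you are comparing against: the paper does not prove this lemma itself --- it is cited directly as Lemma~6.11 of Raghavendra, Steurer and Tulsiani~\cite{RST12}, and used here as a black box. So your task was to reconstruct the RST argument, not to match anything written in this paper.

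Your Steps~1 and~3 are sound and follow the intended route. Averaging out $\pi_A,\pi_B$ to rewrite the acceptance probability as $\E\langle P_{\tA},P_{\tB}\rangle$ is correct, the block-equivariance $P_{\pi(A)}=\pi(P_A)$ is the right observation, and the threshold-rounding template in Step~3 is the standard one (modulo the minor point that the quantity one actually needs to lower bound is $\E_{(u,v)}[\min(g(u),g(v))]$ rather than $\E_{(u,v)}[g(u)g(v)]$, which you then recover via your $\|g\|_\infty$ bound, so this is an imprecision rather than an error). The genuine gap is Step~2, and you flag it yourself: the sentence claiming that $\E\langle P_{\tA},P_{\tB}\rangle\geqs\zeta$ ``turns into'' a lower bound on the noisy Rayleigh quotient of a single function $g$ on $V$ is where the entire technical content of the lemma lives, and it does not follow from coordinate-independence bookkeeping alone. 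The obstruction is that $P_{\tA}(i)$ and $P_{\tB}(i)$ each depend on the \emph{whole} strings $\tA$ and $\tB$, whose coordinates outside $i$ are themselves correlated through $G^{\otimes(R-1)}$ together with the $T_V$-noise; after conditioning on these off-$i$ coordinates you obtain two \emph{different} slice functions on $V$ indexed by two correlated frames, not one fixed $g$. Averaging over $A$ to define a single $g$, as you do, discards exactly the frame-structure that would be needed to pass from the bilinear quantity $\E\langle P_{\tA},P_{\tB}\rangle$ to a Rayleigh-quotient statement. In RS10/RST12 this is handled by a Cauchy--Schwarz (second-moment) step that symmetrizes the two sides so that only one frame distribution appears, after which one can fix a good frame; that step is absent from your sketch. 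Similarly, the upper bound $|S|/|V|\leqs 3k/(\varepsilon_V R)$ is not explained by the heuristic that ``small sets contribute negligibly'' --- it requires a deliberate truncation of the decoded weight (or an equivalent padding device) so the rounded level set cannot be too large, and this again needs an argument. Until those two ingredients are supplied, Step~2 restates the lemma rather than proving it.
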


By combining the above lemma with Lemma~\ref{lem:decode-assignment}, we immediately arrive at the following:

\begin{lemma} \label{lem:soundness}
For any $\varepsilon_T, \gamma, \beta > 0$, let $t = t(\varepsilon_T, \gamma, \beta), \kappa = \kappa(\varepsilon_T, \gamma, \beta)$ and $d = d(\varepsilon_T, \gamma, \beta)$ be as in Theorem~\ref{thm:aint-over}. For any integer $\ell \geqs 100t$ and any $\varepsilon_V > 0$, if there exists $T \subseteq V_H$ with $\mu_H(T) \leqs 1/2$ such that $E_H(T) \geqs 2\gamma + 10\beta + 2^{-\ell/100}$, then there exists a set $S \subseteq V$ with $\frac{|S|}{|V|} \in \left[\frac{\zeta}{16R}, \frac{3k}{\varepsilon_V R}\right]$ with $\Phi(S) \leqs 1 - \frac{\zeta}{16k}$ where $\zeta = \frac{\gamma \kappa^2}{4 d^2 \ell^2}$.
\end{lemma}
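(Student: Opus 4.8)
The plan is to simply chain the two decoding lemmas already established, so this step carries no new content of its own. Suppose $T \subseteq V_H$ is $\Pi_{R,k}$-invariant with $\mu_H(T) \leqs 1/2$ and $E_H(T) \geqs 2\gamma + 10\beta + 2^{-\ell/100}$. Since $\ell \geqs 100 t$ and since $t, \kappa, d$ are taken to be exactly the quantities produced by Theorem~\ref{thm:aint-over} on input $(\varepsilon_T, \gamma, \beta)$, the hypotheses of Lemma~\ref{lem:decode-assignment} hold verbatim. Applying that lemma yields an assignment $F : V^R \to [R]$ with
\begin{align*}
\Pr_{A \sim V^R, \tB \sim \Gamma(A), \pi_A, \pi_B \sim \Pi_{R, k}}[\pi_A^{-1}(F(\pi_A(\tA))) = \pi_B^{-1}(F(\pi_B(\tB)))] \geqs \frac{\gamma \kappa^2}{4 d^2 \ell^2} = \zeta.
\end{align*}

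Next I would feed exactly this $F$ and this value $\zeta$ into Lemma~\ref{lem:decoding-sse}, whose hypothesis is precisely that such an assignment exists with UG-value at least $\zeta$. Its conclusion is a set $S \subseteq V$ with $\frac{|S|}{|V|} \in \left[\frac{\zeta}{16R}, \frac{3k}{\varepsilon_V R}\right]$ and $\Phi(S) \leqs 1 - \frac{\zeta}{16k}$, which is exactly the set claimed in the statement. This completes the argument.

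At this level there is essentially no obstacle: all of the real work has been done upstream — in Lemma~\ref{lem:mean-test}, which controls the fraction of coordinates $\tB^i$ with mean $\leqs 0.99$ via Lemma~\ref{lem:mean-conc} together with Chebyshev's and Chernoff's inequalities; in Theorem~\ref{thm:aint-over}, which forces a shared low-degree-influential coordinate among the surviving functions; in the influence-based decoding inside Lemma~\ref{lem:decode-assignment}; and in the RST soundness analysis quoted as Lemma~\ref{lem:decoding-sse}. The only point requiring care is parameter bookkeeping: one must check that the $t, \kappa, d$ supplied to Lemma~\ref{lem:decode-assignment} are the Theorem~\ref{thm:aint-over} constants instantiated with the \emph{same} triple $(\varepsilon_T, \gamma, \beta)$, and that the symbol $\zeta$ in the hypothesis of Lemma~\ref{lem:decoding-sse} is set equal to $\frac{\gamma \kappa^2}{4 d^2 \ell^2}$, matching the conclusion of Lemma~\ref{lem:decode-assignment} and the definition of $\zeta$ in the present statement — both of which hold by construction, so the composition goes through immediately.
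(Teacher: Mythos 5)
Your proposal is correct and is precisely what the paper does: the paper presents Lemma~\ref{lem:soundness} as an immediate corollary of chaining Lemma~\ref{lem:decode-assignment} (which produces a UG assignment of value $\zeta = \frac{\gamma\kappa^2}{4d^2\ell^2}$) with Lemma~\ref{lem:decoding-sse} (which converts that assignment into a small non-expanding set), exactly as you describe. Your remark about the $\Pi_{R,k}$-invariance being assumed without loss of generality matches the paper's setup at the start of the soundness subsection.
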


\subsection{Putting Things Together}

We can now deduce inapproximability of MUCHB by simply picking appropriate parameters.

\begin{proof}[Proof of Lemma~\ref{lem:hypergraph-bisec}]
The parameters are chosen as follows:
\begin{itemize}
\item Let $\beta = \varepsilon / 30, \gamma = \varepsilon / 6$, and $k = \Omega(\log (1/\varepsilon))$ so that the term $2^{-\Omega(k)}$ in Lemma~\ref{lem:completeness} is $\leqs \varepsilon/4$.
\item Let $\varepsilon_T = O(\beta \varepsilon)$ so that the error term $O(\varepsilon_T/\beta)$ in Lemma~\ref{lem:completeness} is at most $\varepsilon/4$.
\item Let $t = t(\varepsilon_T, \gamma, \beta), \kappa = \kappa(\varepsilon_T, \gamma, \beta)$ and $d = d(\varepsilon_T, \gamma, \beta)$ be as in Theorem~\ref{thm:aint-over}.
\item Let $\zeta = \frac{\gamma \kappa^2}{4 d^2 \ell^2}$ be as in Lemma~\ref{lem:soundness} and let $\ell = \max\{100 t, 1000 \log(1/\varepsilon)\}$.
\item Let $\varepsilon_V = O(\varepsilon \beta / \ell)$ where so that the error term $O(\varepsilon_V\ell/\beta)$ in Lemma~\ref{lem:completeness} is at most $\varepsilon/4$.
\item Let $\eta = \min\{\frac{\zeta}{32k}, O(\varepsilon \beta/\ell)\}$ so that the error term $O(\eta\ell/\beta)$ in Lemma~\ref{lem:completeness} is at most $\varepsilon/4$.
\item Let $M = \max\{\frac{16k}{\beta\zeta}, \frac{3\beta}{\varepsilon_V}\}$.
\item Finally, let $R = \frac{k}{\beta \delta}$ where $\delta = \delta(\eta, M)$ is the parameter from the SSEH (Conjecture~\ref{conj:sse-strong}).
\end{itemize}

Let $G = (V, E, w)$ be an instance of SSE($\eta, \delta, M$) and let $H = (V_H, G_H)$ be the hypergraph resulted from our reduction. If there exists $S \subseteq V$ of size $\delta|V|$ of expansion at most $\eta$, Lemma~\ref{lem:completeness} implies that there is a bisection $(T_0, T_1)$ of $V_H$ such that $E_H(T_0), E_H(T_1) \geqs 1/2 - \varepsilon$.

As for the soundness, Lemma~\ref{lem:soundness} with our choice of parameters implies that, if there exists a set $T \subseteq V_H$ with $\mu(T) \leqs 1/2$ and $E_H(T_0) \geqs \varepsilon$, there exists $S \subseteq V$ with $|S| \in \left[\frac{\delta|V|}{M}, \delta|V|M\right]$ whose expansion is less than $1 - \eta$. The contrapositive of this yields the soundness property.
\end{proof}

\section{Conclusion}

In this work, we prove essentially tight inapproximability of MEB, MBB, Minimum $k$-Cut and DAL$k$S based on SSEH. Our results, expecially for the biclique problems, demonstrate further the applications of the hypothesis and particularly the RST technique~\cite{RST12} in proving hardness of graph problems that involve some form of expansion. Given that the technique has been employed for only a handful of problems~\cite{RST12,LRV13}, an obvious but intriguing research direction is to try to utilize the technique to other problems. One plausible candidate problem to this end is the 2-Catalog Segmentation Problem~\cite{KPR04} since a natural candidate reduction for this problem fails due to a similar counterexample as in Section~\ref{subsec:bk}. 

Another interesting question is to derandomize graph product used in the gap amplification step for biclique problems. For Maximum Clique, this step has been derandomized before~\cite{AFWZ95,Zuc07}; in particular, Zuckerman~\cite{Zuc07} derandomized H{\aa}stad's result~\cite{Has01} to achieve $n^{1 - \varepsilon}$ ratio NP-hardness for approximating Maximum Clique. Without going into too much detail, we would like to note that Zuckerman's result is based on a construction of dispersers with certain parameters; properties of dispersers then imply soundness of the reduction whereas completeness is trivial from the construction since H{\aa}stad's PCP has perfect completeness. Unfortunately, our PCP does not have perfect completeness and, in order to use Zuckerman's approach, additional properties are required in order to argue about completeness of the reduction.

\subsection*{Acknowledgement}

I am grateful to Prasad Raghavendra for providing his insights on the Small Set Expansion problem and techniques developed in~\cite{RS10, RST12} and Luca Trevisan for lending his expertise in PCPs with small free bits. I would also like to thank Haris Angelidakis for useful discussions on Minimum $k$-Cut and Daniel Reichman for inspiring me to work on the two biclique problems. Finally, I thank ICALP 2017 anonymous reviewers for their useful comments and, more specifically, for pointing me to~\cite{BGHKK16}.

\bibliographystyle{alpha}
\bibliography{main}

\appendix

\section{Reduction from MUCHB to Biclique Problems} \label{app:red-mchb}

\begin{proof}[Proof of Lemma~\ref{lem:meb-mbb-withoutamp} from Lemma~\ref{lem:hypergraph-bisec}]
The reduction from MUCHB to the two biclique problems are simple. Given a hypergraph $H = (V_H, E_H)$, we create a bipartite graph $G = (L, R, E)$ by letting $L = R = E_H$ and creating an edge $(e_1, e_2)$ for $e_1 \in L, e_2 \in R$ iff $e_1 \cap e_2 = \emptyset$.

{\bf Completeness.} If there is a bisection $(T_0, T_1)$ of $V_H$ such that $|E_H(T_0)|, |E_H(T_1)| \geqs (1/2 - \varepsilon)|E_H|$, then $E_H(T_0) \subseteq L$ and $E_H(T_1) \subseteq R$ induce a complete bipartite subgraph with at least $(1/2 - \varepsilon)n$ vertices on each side in $G$.

{\bf Soundness.} Suppose that, for every $T \subseteq V_H$ of size at most $|V_H|/2$, we have $|E_H(T)| \leqs \varepsilon |V_H|$. We will show that $G$ does not contain $K_{\varepsilon n + 1, \varepsilon n + 1}$ as a subgraph. Suppose for the sake of contradiction that $G$ contains $K_{\varepsilon n + 1, \varepsilon n + 1}$ as a subgraph. Consider one such subgraph; let $e_1, \dots, e_{\varepsilon n + 1} \in L$ and $e'_1, \dots, e'_{\varepsilon n + 1} \in R$ be the vertices of the subgraph. From how the edges are defined, $T_0 \triangleq e_1 \cup \cdots \cup e_{\varepsilon n + 1}$ and $T_1 \triangleq e'_1 \cup \cdots \cup e'_{\varepsilon n + 1}$ are disjoint. At least one of the two sets is of size at most $|V_H|/2$, assume without loss of generality that $|T_0| \leqs |V_H|/2$. This is a contradiction since $T_0$ contains at least $\varepsilon n + 1$ hyperedges $e_1, \dots, e_{\varepsilon n + 1}$.

By picking $\varepsilon = \delta/2$ and $n > 2 / \varepsilon$, we arrive at the desired hardness result.
\end{proof}

\section{Gap Amplification via Randomized Graph Product} \label{app:amp}

In this section, we provide a full proof of the gap amplification step for biclique problems, thereby proving Theorem~\ref{thm:meb-mbb}. The argument provided below is almost the same as the randomized graph product based analysis of Khot~\cite[Appendix D]{Kho06}, which is in turn based on the analysis of Berman and Schnitger~\cite{BS92}, except that we modify the construction slightly so that the reduction time is polynomial\footnote{Using Khot's result directly would result in the construction time being quasi-polynomial.}.

Specifically, we will prove the following statement, which immediately implies Theorem~\ref{thm:meb-mbb}.

\begin{lemma} \label{lem:meb-mbb-amp}
Assuming SSEH and NP $\ne$ BPP, for every $\varepsilon > 0$, no polynomial time algorithm can, given a bipartite graph $G' = (L', R', E')$ with $|L'| = |R'| = N$, distinguish between the following two cases:
\begin{itemize}
\item \emph{(Completeness)} $G'$ contains $K_{N^{1 - \varepsilon}, N^{1 - \varepsilon}}$ as a subgraph.
\item \emph{(Soundness)} $G'$ does not contain $K_{N^{\varepsilon}, N^{\varepsilon}}$ as a subgraph.
\end{itemize}
\end{lemma}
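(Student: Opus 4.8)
The plan is the classical \emph{randomized graph product} amplification (Berman--Schnitger~\cite{BS92}, Khot~\cite[Appendix~D]{Kho06}), starting from the constant-gap hardness of Lemma~\ref{lem:meb-mbb-withoutamp}. I would fix a small constant $\delta>0$, to be chosen as a function of $\varepsilon$ only at the very end, and take an instance $G=(L,R,E)$ with $|L|=|R|=n$ of the NP-hard gap problem of Lemma~\ref{lem:meb-mbb-withoutamp}: in the \emph{yes} case $G$ contains $K_{tn,tn}$ with $t=1/2-\delta$, and in the \emph{no} case $G$ has no $K_{\delta n,\delta n}$. Setting $m=\lceil c\log n\rceil$ and $N=\lceil n^{c'}\rceil$ for constants $c,c'$ depending only on $\varepsilon,\delta$, I would build a bipartite graph $G'=(L',R',E')$ by letting $L'$ consist of $N$ independent uniformly random $m$-tuples of vertices of $L$, letting $R'$ consist of $N$ independent uniformly random $m$-tuples of vertices of $R$, and joining $U=(u_1,\dots,u_m)\in L'$ to $V=(v_1,\dots,v_m)\in R'$ iff $(u_i,v_j)\in E$ for all $i,j\in[m]$. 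This is computable in time $\mathrm{poly}(n)=\mathrm{poly}(N)$, and randomness enters only here; the task is to choose $c,c'$ so that the gap ``contains $K_{N^{1-\varepsilon},N^{1-\varepsilon}}$'' versus ``no $K_{N^\varepsilon,N^\varepsilon}$'' holds with high probability.

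For \emph{completeness}, if $G\supseteq K_{tn,tn}$ with sides $P\subseteq L$ and $Q\subseteq R$, then a tuple of $L'$ lies entirely inside $P$ with probability $t^m$, so a Chernoff lower-tail bound gives, with probability $1-2\exp(-\Omega(t^mN))$, at least $\tfrac12 t^mN$ tuples of $L'$ inside $P$ and at least $\tfrac12 t^mN$ tuples of $R'$ inside $Q$; since $P\times Q\subseteq E$, these span a $K_{s,s}$ in $G'$ with $s=\tfrac12 t^mN$. For \emph{soundness}, if $G'\supseteq K_{s,s}$ with left tuples $U_1,\dots,U_s$ and right tuples $V_1,\dots,V_s$, let $A\subseteq L$ and $B\subseteq R$ be the unions of their coordinate sets; then $A\times B\subseteq E$, so $G\supseteq K_{|A|,|B|}$ and hence $\min\{|A|,|B|\}<\delta n$, say $|A|<\delta n$. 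Thus all $s$ tuples $U_i$ land in $A^m$ for one set $A$ of size at most $\delta n$; for a fixed such $A$ the expected number of tuples of $L'$ in $A^m$ is at most $\delta^mN$, so a multiplicative Chernoff bound bounds the probability that it exceeds $\tau:=\max\{2e\delta^mN,\,10n\}$ by $2^{-\tau}\le 2^{-10n}$, and a union bound over the at most $2^n$ choices of $A$ (and the symmetric event for $R'$) shows $s\le\tau$ with probability $1-o(1)$.

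It remains to pick $c,c'$ so that simultaneously (i) $\tfrac12 t^mN\ge N^{1-\varepsilon}$, (ii) $2e\delta^mN\le N^\varepsilon$, and (iii) $10n\le N^\varepsilon$. Writing $m=c\log n$ and $N=n^{c'}$, these reduce, up to lower-order terms, to $c'\varepsilon\ge c\log(1/t)$, $c'(1-\varepsilon)\le c\log(1/\delta)$, and $c'\varepsilon>1$. Taking $c'/c=\log(1/t)/\varepsilon$ satisfies the first; the second then becomes $\log(1/\delta)\ge\frac{1-\varepsilon}{\varepsilon}\log\!\big(1/(1/2-\delta)\big)$, which holds once $\delta$ is small enough in terms of $\varepsilon$ (its right side tends to $\frac{1-\varepsilon}{\varepsilon}\log 2$ while its left side tends to $\infty$ as $\delta\to 0$); and the third is met by taking $c$, hence $c'$, large. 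With such constants the reduction maps yes instances to graphs containing $K_{N^{1-\varepsilon},N^{1-\varepsilon}}$ and no instances to graphs with no $K_{N^\varepsilon,N^\varepsilon}$, each with probability $1-o(1)$, so it is a randomized polynomial-time (Karp) reduction with two-sided error $o(1)$; hence a polynomial-time algorithm for the gap problem in Lemma~\ref{lem:meb-mbb-amp} would, through this reduction, place an NP-hard problem in BPP, i.e.\ NP $\subseteq$ BPP.

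The hard part is precisely this parameter balancing: because the soundness union bound is over $2^{\Theta(n)}$ candidate sets $A$, the Chernoff deviation must be $\Omega(n)$, which forces $N$ to be a polynomial of large degree, and one must then verify that $m$ can still be kept to $\Theta(\log n)$ — so the reduction stays polynomial time and the tuples are representable in $O(\log^2 n)$ bits — while leaving room for the completeness quantity $t^mN$ and the soundness quantity $\delta^mN$ to straddle $N^{1-\varepsilon}$ and $N^\varepsilon$; this is exactly where the freedom to send $\delta\to 0$ in Lemma~\ref{lem:meb-mbb-withoutamp} is used. A minor point to dispatch along the way is that the $N$ random tuples on each side are distinct with probability $1-o(1)$, since $n^m=n^{\Theta(\log n)}\gg N^2$.
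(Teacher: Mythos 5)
Your proof is correct and follows essentially the same approach as the paper: the randomized graph product amplification of Berman--Schnitger and Khot, with tuples of length $\Theta(\log n)$, a Chernoff lower-tail bound for completeness, and a ``flatten-the-sides, union-bound over small flattened sets'' argument for soundness. The paper instantiates the parameters concretely ($\delta = 2^{-4/\varepsilon}$, $k = \log n$, $N = (1/\delta)^k$) rather than leaving $c, c'$ symbolic, and phrases the soundness step as a standalone proposition about the flattening map, but these are presentation differences only.
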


\begin{proof}
Let $\varepsilon > 0$ be any constant. Assume without loss of generality that $\varepsilon < 1$. Consider any bipartite graph $G = (L, R, E)$ with $|L| = |R| = n$. Let $\delta = 2^{-4/\varepsilon}, k = \log n$ and $N = (1/\delta)^k$. We construct a graph $G' = (L', R', E')$ where $|L'| = |R'| = N$ as follows. For $i \in [N]$, pick random elements $U^i \sim L^k$ and $V^i \sim R^k$, and add them to $L'$ and $R'$ respectively. Finally, for every $U \in L'$ and $V \in R'$, there is an edge between $U$ and $V$ in $G'$ if and only if there is an edge in the original graph $G$ between $U_{j_1}$ and $V_{j_2}$ for every $j_1, j_2 \in [k]$.

{\bf Completeness.} Suppose that the original graph $G$ contains $K_{(1/2 - \delta)n, (1/2 - \delta)n}$ as a subgraph. Let one such subgraph be $(S, T)$ where $S \subseteq L$ and $T \subseteq R$. Observe that $(L' \cap S^k, R' \cap T^k)$ induces a biclique in $G'$. For each $i \in [N]$, note that $U_i \in S^k$ with probability $(1/2 - \delta)^k \geqs (1/4)^k = N^{-\varepsilon/2}$ independent of each other. As a result, from Chernoff bound, $|L' \cap S^k| \geqs N^{1 - \varepsilon}$ with high probability. Similarly, we also have $|R' \cap T^k| \geqs N^{1 - \varepsilon}$ with high probability. Thus, $G'$ contains $K_{N^{1 - \varepsilon}, N^{1 - \varepsilon}}$ as a subgraph with high probability.

{\bf Soundness.} Suppose that $G$ does not contain $K_{\delta n, \delta n}$ as a subgraph. We will show that, with high probability, $G'$ does not contain $K_{N^\varepsilon, N^\varepsilon}$ as a subgraph. To do so, we will first prove the following proposition.

\begin{proposition} \label{prop:amp-aux}
For any set $A$, let $\cP(A)$ denote the power set of $A$. Moreover, let $\cF: \cP(L^k \cup R^k) \rightarrow \cP(L \cup R)$ be the ``flattening'' operation defined by $\cF(A) \triangleq \cup_{U \in A} \{U_i \mid i \in [k]\}$. Then, with high probability, we have $|\cF(S')| \geqs \delta n$ for every subset $S' \subseteq L'$ of size $N^\varepsilon$ and $|\cF(T')| \geqs \delta n$ for every subset $T' \subseteq R'$ of size $N^\varepsilon$.
\end{proposition}

\begin{proof}[Proof of Proposition~\ref{prop:amp-aux}]
Let us consider the probability that there exists a set $S' \subseteq L'$ of size $N^\varepsilon$ such that $|\cF(S')| < \delta n$. This can be bounded as follows.
\begin{align*}
\Pr[\exists S' \subseteq L', |S'| = N^\varepsilon \wedge \cF(S') < \delta n] &= \Pr[\exists S \subseteq L, |S| < \delta n \wedge |S^k \cap L'| \geqs N^\varepsilon] \\
&\leqs \sum_{S \subseteq L, |S| < \delta n} \Pr[|S^k \cap L'| \geqs N^\varepsilon] \\
&= \sum_{S \subseteq L, |S| < \delta n} \Pr\left[\sum_{i \in [N]} \ind[U_i \in S^k] \geqs N^\varepsilon\right].
\end{align*}

Observe that, for each $i \in [N]$, $\ind[U_i \in S^k]$ is simply an independent Bernoulli random variable with mean $(|S|/n)^k < \delta^k = 1/N$. Hence, by Chernoff bound, we have
\begin{align*}
\Pr[\exists S' \subseteq L', |S'| = N^\varepsilon \wedge \cF(S') < \delta n] 
\leqs \sum_{S \subseteq L, |S| < \delta n} 2^{-\Omega(N^\varepsilon)} \leqs \sum_{S \subseteq L, |S| < \delta n} 2^{-\Omega(n^2)} = 2^{-\Omega(n^2)}
\end{align*}
as desired.

Analogously, we also have $|\cF(T')| \geqs \delta n$ for every subset $T' \subseteq R'$ of size $N^\varepsilon$ with high probability, thereby concluding the proof of Proposition~\ref{prop:amp-aux}.
\end{proof}

With Proposition~\ref{prop:amp-aux} ready, let us proceed with our soundness proof. Suppose that the event in Proposition~\ref{prop:amp-aux} occurs. Consider any subset $S' \subseteq L'$ of size $N^\varepsilon$ and any subset $T' \subseteq R'$ of size $N^\varepsilon$. Since $|\cF(S')| \geqs \delta n, |\cF(T')| \geqs \delta n$ and $G$ does not contain $K_{\delta n, \delta n}$ as a subgraph, there exists $u \in \cF(S')$ and $v \in \cF(T')$ such that $(u, v) \notin E$. From the definition of $G'$, this implies that $S'$ and $T'$ do not induce a biclique in $G'$. As a result, $G'$ does not contain $K_{N^\varepsilon, N^\varepsilon}$ as a subgraph. From this and from Proposition~\ref{prop:amp-aux}, $G'$ does not contain $K_{N^\varepsilon, N^\varepsilon}$ as a subgraph with high probability, concluding our soundness argument.

Since Lemma~\ref{lem:meb-mbb-withoutamp} asserts that distinguishing between the two cases above are NP-hard (assuming SSEH) and the above reduction takes polynomial time, we can conclude that, assuming SSEH and NP $\ne$ BPP, no polynomial time algorithm can distinguish the two cases stated in the lemma.
\end{proof}

\section{Comparison Between SSEH and Strong UGC} \label{app:ugc}

In this section, we briefly discuss the similarities and differences between the classical Unique Games Conjecture~\cite{Kho02}, the Small Set Expansion Hypothesis~\cite{RS10} and the Strong Unique Games Conjecture~\cite{BK09}. Let us start by stating the Unique Games Conjecture, proposed by Khot in his influential work~\cite{Kho02}:

\begin{conjecture}[Unique Games Conjecture (UGC)~\cite{Kho02}]
For every $\varepsilon, \eta > 0$, there exists $R = R(\varepsilon, \eta)$ such that, given an UG instance $(\cG = (\cV, \cE, \cW), [R], \{\pi_e\}_{e \in \cE})$ such that $\cG$ is regular, it is NP-hard to distinguish between the following two cases:
\begin{itemize}
\item (Completeness) There exists an assignment $F: \cV \to [R]$ such that $\val_{\cU}(F) \geqs 1 - \varepsilon$.
\item (Soundness) For every assignment $F: \cV \to [R]$, $\val_{\cU}(F) \leqs \eta$.
\end{itemize}
\end{conjecture}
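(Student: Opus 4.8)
The Unique Games Conjecture is a notorious open problem, so an unconditional proof is out of reach; what one can honestly aim for here is the implication of Raghavendra and Steurer~\cite{RS10} that SSEH implies UGC, which is exactly what the surrounding discussion builds toward. The plan is to reprove this implication using the machinery already set up in Section~\ref{subsec:red-final}. Fix $\varepsilon > 0$ and a target soundness $\eta_{\text{UG}} > 0$. I would instantiate the first step of the reduction in Section~\ref{subsec:red-final}: starting from an instance $G = (V, E, w)$ of SSE($\delta, \eta, M$), produce the unique game $\cU = (\cG = (\cV, \cE, \cW), [R], \{\pi_e\}_{e\in\cE})$ with $\cV = V^R$ and the edge distribution given by the four-step process there (tensor graph $G^{\otimes R}$, the coordinate-noise operators $T_V$, and the block-respecting permutations $\Pi_{R,k}$), with $R/k = \Theta(1/\delta)$ and $\varepsilon_V$ small; the parameters $k, \varepsilon_V$ and the SSE parameters $\eta, M, \delta$ are fixed at the end as functions of $\varepsilon, \eta_{\text{UG}}$. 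Regularity of $\cG$ is immediate from the symmetry of the construction: the tensor graph of a regular graph is regular, and $T_V$ together with uniformly random $\Pi_{R,k}$-permutations preserves regularity.

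For completeness, suppose there is $S \subseteq V$ with $|S| = \delta|V|$ and $\Phi(S) \leqs \eta$. I would analyze the intended assignment sketched in Section~\ref{subsec:red-final}: for $A \in V^R$ let $j$ be the first block with $|A(j) \cap S| = 1$ and set $F(A)$ to the coordinate realizing that intersection, assigning arbitrarily if no such block exists. Since $R/k = \Theta(1/\delta)$, a single block contains exactly one $S$-coordinate with constant probability, so only a $2^{-\Omega(k)}$ fraction of vertices receive an arbitrary label. For a typical edge $(\pi_A(\tA), \pi_B(\tB))$, the point is that the unique $S$-coordinate of the first good block survives the construction: each $T_V$ step changes a coordinate's membership in $S$ with probability $O(\varepsilon_V + \eta)$, and the permutations $\pi_A, \pi_B$ act block-wise, so $\pi_B \circ \pi_A^{-1}$ sends $F(\pi_A(\tA))$ to $F(\pi_B(\tB))$ whenever that first good block and its $S$-coordinate are unchanged. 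A union bound over the $\Theta(1/\delta)$ coordinates of that block, together with the $2^{-\Omega(k)}$ failure probability, gives $\val_{\cU}(F) \geqs 1 - O(\varepsilon_V) - O(\eta) - 2^{-\Omega(k)} \geqs 1 - \varepsilon$ for suitable choices.

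For soundness, suppose every $S$ with $|S|/|V|$ in the SSE soundness range has $\Phi(S) \geqs 1 - \eta$. Here I would invoke Lemma~\ref{lem:decoding-sse} directly: if some $F: V^R \to [R]$ satisfied a $\zeta$-fraction of the constraints, there would exist $S \subseteq V$ with $\frac{|S|}{|V|} \in \left[\frac{\zeta}{16R}, \frac{3k}{\varepsilon_V R}\right]$ and $\Phi(S) \leqs 1 - \frac{\zeta}{16k}$. Setting $\zeta = \eta_{\text{UG}}$, choosing $M$ large enough that $\left[\frac{\zeta}{16R}, \frac{3k}{\varepsilon_V R}\right] \subseteq \left[\frac{\delta}{M}, \delta M\right]$ (using $R = \Theta(k/\delta)$), and choosing $\eta$ small enough that $\frac{\zeta}{16k} > \eta$, one obtains a set in the SSE soundness range with expansion below $1 - \eta$, a contradiction; hence $\val_{\cU}(F) < \eta_{\text{UG}}$ for all $F$. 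One then takes $\delta = \delta(\eta, M)$ from Conjecture~\ref{conj:sse-strong} and observes the reduction runs in polynomial time, since $R$ is a constant and $|V^R| = |V|^{O(1)}$. The only genuine computation is the completeness estimate — tracking survival of the unique $S$-coordinate through $T_V$, $G^{\otimes R}$ and the block permutations — because the soundness is black-boxed through Lemma~\ref{lem:decoding-sse}; the real obstacle, that UGC is unconditionally open, is of course one this reduction does not remove.
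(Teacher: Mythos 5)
The statement you were assigned is Khot's Unique Games Conjecture itself, which this paper only \emph{states} (in the appendix, citing~\cite{Kho02}) as background for comparing SSEH with the Strong UGC of~\cite{BK09}; the paper contains no proof of it, and of course none is known. So there is no ``paper's own proof'' to compare against, and your reading of the task --- prove instead the implication SSEH $\Rightarrow$ UGC --- is the only sensible one. Your sketch of that implication is essentially the Raghavendra--Steurer reduction~\cite{RS10,RST12}, which is also exactly the machinery the paper borrows but does not reprove: the first step of the reduction in Section~\ref{subsec:red-final} for the construction, the prose completeness sketch given there, and Lemma~\ref{lem:decoding-sse} (imported verbatim from~\cite{RST12}) for soundness. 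In that sense your route coincides with the cited source rather than with anything proved in this paper, and black-boxing the soundness through Lemma~\ref{lem:decoding-sse} is consistent with how the paper itself treats it.

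One point in your completeness estimate needs more care. It is not enough that the unique $S$-coordinate of the first good block survives the noise and that no other coordinate \emph{of that block} enters $S$: since the label of each endpoint is decoded from its own (noised) string, the noise can also turn an \emph{earlier} block --- one that had zero or at least two $S$-coordinates --- into a block with exactly one $S$-coordinate at one endpoint but not the other, in which case the two endpoints decode different blocks even though your chosen block is intact. Controlling this requires a union bound over all coordinates of all preceding blocks, roughly $k \cdot (R/k) \cdot O(\varepsilon_V \delta + \eta \delta) = O(k(\varepsilon_V + \eta))$, not merely over the $\Theta(1/\delta)$ coordinates of one block; this is harmless but forces you to choose $\varepsilon_V, \eta \ll 1/k$ (in addition to $\ll \varepsilon$) when you fix parameters, and also note that the per-coordinate flip probability into $S$ is $O(\varepsilon_V\delta + \eta\delta)$ (the $\eta$ coming from the tensor edge step, not from $T_V$), rather than the $O(\varepsilon_V + \eta)$ you quote. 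With those corrections the argument goes through and reproduces the known implication; the conjecture as literally stated, of course, remains open and is not something this or any reduction can deliver.
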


In other words, Khot's UGC states that it is NP-hard to distinguish between an UG instance which is almost satisfiable from one in which only small fraction of edges can be satisfied. While SSEH as stated in Conjecture~\ref{conj:sse} is not directly a statement about an UG instance, it has a strong connection with the UGC. Raghavendra and Steurer~\cite{RS10}, in the same work in which they proposed the conjecture, observes that SSEH is implied by a variant of UGC in which the soundness is strengthened so that the constraint graph is also required to be a small-set expander (i.e. every small set has near perfect edge expansion). In a subsequent work, Raghavendra, Steurer and Tulsiani~\cite{RST12} showed that the two conjectures are in fact equivalent. More formally, the following variant of UGC is equivalent to SSEH:

\begin{conjecture}[UGC with Small-Set Expansion (UGC with SSE)~\cite{RS10}] \label{conj:ugc-sse}
For every $\varepsilon, \eta > 0$, there exist $\delta = \delta(\varepsilon) > 0$ and $R = R(\varepsilon, \eta)$ such that, given an UG instance $(\cG = (\cV, \cE, \cW), [R], \{\pi_e\}_{e \in \cE})$ such that $\cG$ is regular, it is NP-hard to distinguish between the following two cases:
\begin{itemize}
\item (Completeness) There exists an assignment $F: \cV \to [R]$ such that $\val_{\cU}(F) \geqs 1 - \varepsilon$.
\item (Soundness) For every assignment $F: \cV \to [R]$, $\val_{\cU}(F) \leqs \eta$. Moreover, $\cG$ satisfies $\Phi(S) \geqs 1 - \varepsilon$ for every $S \subseteq \cV$ of size $\delta n$.
\end{itemize}
\end{conjecture}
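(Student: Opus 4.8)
The statement to establish is that Conjecture~\ref{conj:ugc-sse} is equivalent to SSEH (equivalently Conjecture~\ref{conj:sse}, equivalently the stronger-looking Conjecture~\ref{conj:sse-strong}). The plan is to prove the two implications separately, and the easy one is that Conjecture~\ref{conj:ugc-sse} implies SSEH. For this I would invoke the standard reduction of Raghavendra and Steurer~\cite{RS10} from Unique Games to Small Set Expansion: it turns a UG instance into a regular graph in which a labeling of value $1-\varepsilon$ yields a set of relative size $\approx 1/R$ with expansion $\approx \varepsilon$, and conversely any small non-expanding set yields a labeling of noticeable value. When the input UG instance is additionally guaranteed (in the soundness case) to be a small-set expander, this is exactly the extra ingredient needed to certify the soundness side of SSE$(\delta,\eta)$, so the reduction transfers NP-hardness. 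The only thing to check carefully is that the set-size windows line up, which is routine.

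\textbf{The substantive direction} is SSEH $\Rightarrow$ Conjecture~\ref{conj:ugc-sse}, and I would carry it out using precisely the ``first step'' reduction from SSE to UG spelled out in Section~\ref{subsec:red-final}: from an SSE$(\delta,\eta,M)$ instance $G=(V,E,w)$ build the UG instance $\cU = (\cG=(\cV,\cE,\cW),[R],\{\pi_e\})$ with $\cV = V^R$ whose edges come from the noised tensor walk $A \to \tA \to B \to \tB$ (with $\tA \sim T_V(A)$, $B \sim G^{\otimes R}(\tA)$, $\tB \sim T_V(B)$), composed with random block permutations $\pi_A,\pi_B \sim \Pi_{R,k}$. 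Three things must be verified. First, \emph{completeness}: if $G$ has $S$ with $|S| = \delta|V|$ and $\Phi(S) \leqs \eta$, the intended assignment (take the first block $j$ with $|A(j)\cap S| = 1$ and output the coordinate of the lone element) satisfies a $1-\varepsilon$ fraction of edges, since $R/k = \Theta(1/\delta)$ forces $\Pr[|A(j)\cap S|=1]$ to be constant, so only a $2^{-\Omega(k)}$ fraction of vertices get an arbitrary label, and for the rest a constraint is violated only if the noise ($\varepsilon_V$) or the boundary of $S$ ($\eta$) interferes, both rare. Second, \emph{soundness}: this is exactly the contrapositive of Lemma~\ref{lem:decoding-sse}---if every $S$ with $|S|/|V|$ in the SSE soundness window has $\Phi(S) \geqs 1-\eta$, then no labeling of $\cU$ has value more than a small constant.

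\textbf{The hard part} is the third item: showing that the constraint graph $\cG$ is itself a small-set expander, i.e.\ every set of a suitable small relative size in $\cV = V^R$ has edge expansion $\geqs 1-\varepsilon$. Since the random block permutations only relabel vertices, they do not change the global expansion profile, so it suffices to analyze the operator $N_{\varepsilon_V}\, G^{\otimes R}\, N_{\varepsilon_V}$ on $L^2(V^R)$, where $N_{\varepsilon_V}$ independently resamples each coordinate with probability $\varepsilon_V$. The plan is a Fourier/hypercontractivity argument: decompose $L^2(V^R) = \bigoplus_{\sigma \subseteq [R]}$ into product eigenspaces indexed by the set $\sigma$ of active coordinates; $N_{\varepsilon_V}$ multiplies the $\sigma$-part by $(1-\varepsilon_V)^{|\sigma|}$, so $N_{\varepsilon_V} G^{\otimes R} N_{\varepsilon_V}$ contracts sharply on every level of size $\geqs 1$, while level $0$ contributes only $\mu(T)^2$. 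For a set $T$ of small measure, a level-$d$ (small-set expansion of the product space) inequality shows $\mathbf{1}_T$ places only a $\mu(T)^{\Omega(1)}$ fraction of its weight on levels $\leqs d$, and the rest is annihilated by the $(1-\varepsilon_V)^{|\sigma|}$ factor; combining gives $\langle \mathbf{1}_T, N_{\varepsilon_V} G^{\otimes R} N_{\varepsilon_V} \mathbf{1}_T\rangle \leqs \varepsilon\,\mu(T)$, i.e.\ $\Phi(T) \geqs 1-\varepsilon$. Finally I would thread the parameters: fix $\varepsilon,\eta$ small, take $R/k = \Theta(1/\delta)$ and $k$ large, $\varepsilon_V$ small, and choose $M$ and then $\delta = \delta(\eta,M)$ from Conjecture~\ref{conj:sse-strong} so that the SSE soundness window contains the range of set sizes needed by Lemma~\ref{lem:decoding-sse}. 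The main obstacle I anticipate is exactly this reconciliation: the set sizes on which $\cG$ must provably expand (governed by $\varepsilon_V$ and $k/R$ in the hypercontractive estimate) and the set sizes the SSE soundness hypothesis controls (governed by $M$) must be made to coincide, which is what forces the stronger SSE$(\delta,\eta,M)$ formulation in the first place; once the windows are aligned, the quantifier structure of Conjecture~\ref{conj:ugc-sse} falls out.
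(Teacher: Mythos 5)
There is nothing in the paper for your argument to match: the statement you were given is Conjecture~\ref{conj:ugc-sse}, an NP-hardness \emph{hypothesis} stated in Appendix~\ref{app:ugc} purely to compare assumptions. The paper does not prove it and cannot — it only attributes the formulation to~\cite{RS10} and cites~\cite{RST12} for the fact that it is \emph{equivalent} to SSEH. Your proposal silently replaces the statement by that equivalence theorem of~\cite{RST12}, which is a different claim, and then sketches its proof; so even before examining details, the attempt proves (at best) the wrong statement relative to what the paper asserts, which is an unproven conjecture.

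Moreover, read as a sketch of SSEH $\Rightarrow$ Conjecture~\ref{conj:ugc-sse}, the step you yourself identify as the hard part does not work as proposed. You claim small-set expansion of the constraint graph $\cG$ on $\cV = V^R$ follows from analyzing $N_{\varepsilon_V} G^{\otimes R} N_{\varepsilon_V}$ by a level-$d$ inequality plus the assertion that this operator ``contracts sharply on every level of size $\geqs 1$''. It does not: $G^{\otimes R}$ can act with eigenvalue of magnitude $1$ on nontrivial levels (e.g.\ if $G$ has eigenvalues $\pm 1$), and the two noise operators only contribute $(1-\varepsilon_V)^{2|\sigma|}$, which is essentially $1$ on low levels; so noise and hypercontractivity alone cannot force expansion. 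A concrete counterexample: a soundness instance of SSE($\delta,\eta,M$) only controls sets of $G$ with relative size in $\left[\delta/M, \delta M\right]$, so $G$ may contain a non-expanding set $S$ of relative size $\delta(\varepsilon)$ lying outside that window; then the dictator-type set $T = \{A \in V^R \mid A_1 \in S\}$ has measure exactly $\delta(\varepsilon)$ in $\cG$ but expansion roughly $2\varepsilon_V + \Phi_G(S) \ll 1 - \varepsilon$. This shows the expansion of $\cG$ must be \emph{decoded back} to the expansion of $G$ at the controlled scales (this decoding, not a product-space hypercontractive estimate, is the content of the relevant argument in~\cite{RST12}, in the same spirit as Lemma~\ref{lem:decoding-sse}), and that the alignment of size windows you defer as ``routine reconciliation'' is precisely where the proof lives. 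As written, the proposal both targets a statement the paper never proves and leaves its central technical step resting on an argument that fails.
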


While our result is based on SSEH (which is equivalent to UGC with SSE), Bhangale \etal~\cite{BGHKK16} relies on another strengthened version of the UGC, which requires the following additional properties:
\begin{itemize}
\item There is not only an assignment that satisfies almost all constraints, but also a partial assignment to almost the whole graph such that every constraint between two assigned vertices is satisfied.
\item The graph in the soundness case has to satisfy the following vertex expansion property: for every not too small subset of $\cV$, its neighborhood spans almost the whole graph.
\end{itemize} 

More formally, the conjecture can be stated as follows.

\begin{conjecture}[Strong UGC (SUGC)~\cite{BK09}] \label{conj:sugc}
For every $\varepsilon, \eta, \delta > 0$, there exists $R = R(\varepsilon, \eta, \delta)$ such that, given an UG instance $(\cG = (\cV, \cE, \cW), [R], \{\pi_e\}_{e \in \cE})$ such that $\cG$ is regular, it is NP-hard to distinguish between the following two cases:
\begin{itemize}
\item (Completeness) There exists a subset $S \subseteq \cV$ of size at least $(1 - \varepsilon)|\cV|$ and a partial assignment $F: S \to [R]$ such that every edge inside $S$ is satisfied.
\item (Soundness) For every assignment $F: \cV \to [R]$, $\val_{\cU}(F) \leqs \eta$. Moreover, $\cG$ satisfies $|\Gamma(S)| \geqs (1 - \delta)|\cV|$ for every $S \subseteq \cV$ of size $\delta n$ where $\Gamma(S)$ denote the set of all neighbors of $S$.
\end{itemize}
\end{conjecture}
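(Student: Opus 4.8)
The statement here, SUGC, is an open conjecture due to Bansal and Khot, so there is no unconditional proof to give; what follows is the approach one would naturally take to attack it. The conjecture bundles three requirements that must hold simultaneously in a single hard-instance family: (i) a robust completeness --- not merely an assignment satisfying a $(1-\varepsilon)$ fraction of constraints, but a partial assignment defined on a $(1-\varepsilon)$ fraction of the \emph{vertices} under which \emph{every} edge internal to that set is satisfied; (ii) small soundness, $\val_{\cU}(F)\leqs\eta$ for all $F$; and (iii) a vertex-expansion guarantee in the soundness instances --- every vertex set of relative size $\delta$ has neighborhood of relative size at least $1-\delta$. The plan is to start from the strongest PCP machinery currently available for unique-type constraints --- the Grassmann-graph / $2$-to-$2$-games framework and the variants underlying Bhangale et al.'s biclique hardness~\cite{BGHKK16} --- and to augment the construction so that all three properties are present at once.

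Concretely, I would (a) instantiate such a construction; (b) extract the partial-assignment property from the structure of the underlying low-degree / Grassmann test, using that on the completeness side these tests are combinatorial rather than probabilistic, so that mutual consistency among ``most'' lines or subspaces already yields a partial assignment with no internal violation, which is exactly what (i) asks for; (c) establish vertex expansion by exploiting expansion of the Grassmann graph itself, whose restriction to small sets is a strong vertex expander, and arguing that this expansion survives the tensoring and composition used in the reduction so that the final constraint graph inherits it; and (d) obtain small soundness via the usual Fourier-analytic list-decoding (the invariance-principle machinery in the spirit of Theorem~\ref{thm:aint-over}), decoding a label for each vertex from its own table.

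The hard part --- and the reason SUGC is still open --- is that (ii) and (iii) pull against the \emph{locality} that PCP soundness analyses rely on: the decoding in step (d) is purely local and indifferent to global connectivity, whereas forcing vertex expansion forbids the constraint graph from clustering and thereby constrains the gadget structure, potentially destroying the product structure the soundness proof needs; and upgrading completeness from ``near-satisfiable'' to ``partially-satisfiable-everywhere'' forces the instances to come from a smooth, junta-free test with essentially perfect internal completeness --- something that current $2$-to-$2$-type constructions do not yet deliver. The RST template of Section~\ref{subsec:red-final} is the closest existing precedent for injecting graph expansion into a UG-style reduction, but it produces \emph{edge} expansion in the soundness case (this is the UGC-with-SSE variant, Conjecture~\ref{conj:ugc-sse}) rather than the \emph{vertex} expansion SUGC demands, and --- as discussed in the introduction --- these two strengthenings of UGC appear incomparable. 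Obtaining vertex expansion on small sets while simultaneously retaining the partial-assignment completeness, small soundness, and the unique (permutation) structure of the constraints is where I expect any concrete attempt to stall.
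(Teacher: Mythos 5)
You are right that there is nothing to prove here: the statement is the Bansal--Khot conjecture itself, which the paper does not prove but merely states (in Appendix~\ref{app:ugc}, citing~\cite{BK09}) as an assumption used by Bhangale \etal~\cite{BGHKK16}, in order to contrast it with SSEH; the paper's own results rely on SSEH, not on Conjecture~\ref{conj:sugc}. Your treatment of it as an open hypothesis therefore matches the paper exactly, and your speculative attack plan is extra commentary rather than something to be checked against a proof in the paper.
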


The conjecture was first formulated by Bansal and Khot~\cite{BK09}. We note here that the name ``Strong UGC'' was not given by Bansal and Khot, but was coined by Bhangale \etal~\cite{BGHKK16}. In fact, the name ``Strong UGC'' was used earlier by Khot and Regev~\cite{KR08} to denote a different variant of UGC, in which the completeness is strengthened to be the same as in Conjecture~\ref{conj:sugc} but the soundness does not include the vertex expansion property. Interestingly, this variant of UGC is equivalent to the original version of the conjecture~\cite{KR08}. Moreover, as pointed out in~\cite{BK09}, it is not hard to see that the soundness property of SUGC can also be achieved by simply adding a complete graph with negligible weight to the constraint graph. In other words, both the completeness and soundness properties of SUGC can be achieved separately. However, it is not known whether SUGC is implied by UGC.

To the best of our knowledge, it is not known if one of Conjecture~\ref{conj:ugc-sse} and Conjecture~\ref{conj:sugc} implies the other. In particular, while the soundness cases of both conjectures require certain expansion properties of the graphs, Conjecture~\ref{conj:ugc-sse} deals with edge expansion whereas Conjecture~\ref{conj:sugc} deals with vertex expansion; even though these notations are closely related, they do not imply each other. Moreover, as pointed out earlier, the completeness property of SUGC is stronger than that of UGC with SSE; we are not aware of any reduction from SSE to UG that achieves this while maintaining the same soundness as in Conjecture~\ref{conj:ugc-sse}.

Finally, we note that both soundness and completeness properties of SUGC are crucial for  Bhangale \etal's reduction~\cite{BGHKK16}. Hence, it is unlikely that their technique applies to SSEH. Similarly, our reduction relies crucially on edge expansion properties of the graph and, thus, is unlikely to be applicable to SUGC.

\end{document}